\newcommand{\prob}{\mathbb{P}}
\newcommand{\Nitem}{M}
\newcommand{\nitem}{m}
\newcommand{\Nround}{T}
\newcommand{\nround}{t}
\begin{document}


\TITLE{Learning in Repeated Multi-Unit Pay-As-Bid Auctions}

\ARTICLEAUTHORS{
\AUTHOR{Rigel Galgana}
\AFF{Operations Research Center, Massachusetts Institute of Technology, \EMAIL{rgalgana@mit.edu} \URL{}}
\AUTHOR{Negin Golrezaei}
\AFF{Sloan School of Management, Massachusetts Institute of Technology,  \EMAIL{golrezaei@mit.edu} \URL{}}
}

\ABSTRACT{

\textbf{Problem definition:} Motivated by Carbon Emissions Trading Schemes, Treasury Auctions, Procurement Auctions, and Wholesale Electricity Markets, which all involve the auctioning of homogeneous multiple units, we consider the problem of learning how to bid in repeated multi-unit pay-as-bid auctions. In each of these auctions, a large number of (identical) items are to be allocated to the largest submitted bids, where the price of each  of the winning bids is equal to the bid itself. In this work, we study the problem of optimizing bidding strategies from the perspective of a single bidder.

\textbf{Methodology/results:} Effective bidding in pay-as-bid (PAB) auctions is complex due to the combinatorial nature of the action space. We show that a utility decoupling trick enables a polynomial time algorithm to solve the offline problem where competing bids are known in advance. Leveraging this structure, we design efficient algorithms for the online problem under both full information and bandit feedback settings that achieve an upper bound on regret of $O(\Nitem \sqrt{\Nround \log \Nround})$ and $O(\Nitem \Nround^{\frac{2}{3}} \sqrt{\log \Nround})$ respectively, where $\Nitem$ is the number of units demanded by the bidder and  $\Nround$ is the total number of auctions. {\color{black}We accompany these results with a regret lower bound of $\Omega(M\sqrt{T})$ for the full information setting and $\Omega (M^{2/3}T^{2/3})$ for the bandit setting. We also present additional findings on the characterization of PAB equilibria.} 

\textbf{Managerial implications:} While the Nash equilibria of PAB auctions possess nice properties such as winning bid uniformity and high welfare \& revenue, they are not guaranteed under no regret learning dynamics. Nevertheless, our simulations suggest that these properties hold anyways, regardless of Nash equilibrium existence. Compared to its uniform price counterpart, the PAB dynamics converge faster and achieve higher revenue, making PAB appealing whenever revenue holds significant social value---e.g., ETS and Treasury Auctions.

\noindent

\textbf{Keywords.} 
Multi-unit pay-as-bid auctions, Bidding strategies, Regret analysis, Market dynamics.

}
\maketitle





\section{Introduction}

Homogeneous multi-unit auctions, a subset of combinatorial auctions, are extensively utilized to auction off large quantities of identical items. Examples include Carbon Emissions Trading Schemes \citep{CarbonTaxVsCapTrade2013, LessonsLearned2017, Kira2019}, US Treasury Auctions \citep{TreasuryAuction2005, TreasuryUnifOrDisc2000}, Procurement Auctions \citep{Procurement2006}, and Wholesale Electricity Markets \citep{WholesaleElectricityMarkets2008, BiddingElectricity2003, DesigningElectricityAuctions2006}. In these auctions, bidders submit bid vectors and are allocated goods and charged payments based on the auction's rules.

The uniform price and pay-as-bid (PAB) mechanisms, natural extensions of the second and first price sealed bid auctions respectively, are commonly used. Bidders are allocated units in descending order of their bids; they are then charged either the lowest winning bid (uniform price) or their own bid (PAB). This paper focuses on the PAB auction format due to the recent industry-wide shift towards first price auctions, spurred by demands for price transparency and simpler revenue management \citep{fpadisplay2021, googlefpa2019}.

Despite their prevalence, PAB auctions pose significant challenges for participants trying to bid effectively, as outlined by \citep{CombinatorialAuctionDesign2003}. Bidders must balance the likelihood of winning units against the potential costs, complicated by the requirement for monotone bid vectors. Bidding too conservatively reduces the chances of securing units in later rounds, whereas overly aggressive bids can lead to inflated payments. This dilemma also affects uniform price auctions, though its impact on bidder behavior, welfare, and revenue remains poorly understood over time.

Consider the context of pollution license auctions under Carbon Emissions Trading Schemes (ETS), which compel large-scale polluters to buy pollution licenses via auction. These licenses, capping permissible carbon emissions, serve dual purposes: they discourage inefficient pollution through payment penalties and fund investments in renewable energy and green technology \citep{gregor2023review, chinagreen2022weng}. However, often the auction prices are significantly lower than the estimated social cost of carbon \citep{epa2023socialcostofcarbon}, reducing the effectiveness of this deterrent and limiting revenue generation. This inefficacy can be attributed to suboptimal auction designs, a consequence of not fully understanding the bidding dynamics and equilibria.

In this paper, we address the issue of learning optimal bidding strategies in repeated multi-unit PAB auctions. As we will elaborate later, we develop efficient no-regret algorithms that simplify the bidding complexity associated with PAB auctions. Through simulating the market dynamics derived from these learning algorithms, we empirically analyze the equilibria of PAB auctions, which have been poorly understood prior to our research. Our empirical findings demonstrate that in the equilibria resulting from these market dynamics, bidders' winning bids converge to nearly the same value, thus addressing concerns regarding price fairness in PAB auctions \citep{TreasuryUnifOrDisc2000, Trilemma2020}.

We also consistently observe high revenue from these equilibria, especially when compared to its uniform price counterpart. In the context of carbon markets, this additional revenue can be invested into clean-up efforts and green technology; see. e.g., phase 4 of the European Union Emissions Trading System (EU-ETS) \citep{gregor2023review}.

\subsection{Technical Contributions.} {\color{black}Our results are presented in Table \ref{table: summary of main results}, offering a comparison to the established results for uniform price auctions by \cite{brânzei2023online}. Below, we discuss our findings on equilibrium characterization for PAB and the details provided in the table.}

\begin{table}[h!]
\centering \footnotesize
\begin{tabular}{|>{\centering\arraybackslash}m{1.4cm}||>{\centering\arraybackslash}m{1.7cm}|>{\centering\arraybackslash}m{1.7cm}|>{\centering\arraybackslash}m{4.5cm}|>{\centering\arraybackslash}m{5.7cm}|}
 \hline
  & Regret (Full) & Regret (Bandit) & Regret Lower Bound  (bandit) & Equilibrium Behavior \\ [0.5ex]
 \hline\hline
 PAB & $\bm{O(MT^{\frac{1}{2}})}$ & $\bm{O(MT^{\frac{2}{3}})}$ & $\bm{\Omega(\max(MT^{\frac{1}{2}}, M^{\frac{2}{3}}T^{\frac{2}{3}}))}$ & High revenue, near welfare optimal, near equal winning bids\\  \hline 
 Uniform & $O(M^\frac{3}{2}T^{\frac{1}{2}})$ & $O(M^{\frac{7}{4}}T^{\frac{3}{4}})$ & $\Omega(MT^{\frac{1}{2}})$ & Low revenue, welfare optimal, staggered winning bids\\ [1ex] 
 \hline
\end{tabular}
\caption{\textbf{Summary of our Main Results.} Our contributions are written in bold. We compare to the uniform price auction market dynamics as described in \cite{brânzei2023online}.}
\label{table: summary of main results}
\end{table}

{\color{black}\textbf{Equilibrium Characterization (Section \ref{sec: equilibria}).} We present  theoretical results characterizing equilibria in PAB, distinguishing between various notions of equilibrium, including Pure Nash Equilibrium (PNE), Coarse Correlated Equilibrium (CCE), and Correlated Equilibrium (CE). We demonstrate that when a PNE exists, the winning bids must be nearly uniform, differing by no more than a discretization factor of \(\delta\). Additionally, we present sufficient conditions under which PNE exists, further reinforcing the aforementioned uniform bidding property. We experimentally show how often these conditions are satisfied as a function of the number of items, the discretization factor, and the number of market participants. However, we show that the uniform bidding property does not necessarily hold for CCEs or CEs via optimizing for the gap between the winning bids using the linear programming characterizations of CCEs and CEs. Nevertheless, our no-regret learning algorithms consistently exhibit the uniform bidding winning bids property (Table \ref{table: learning dynamics full info}), regardless of PNE existence.}

\textbf{New Framework to Study Learning How to Bid in PAB Auctions (Section \ref{sec:repeated}).} Let there be $N$ bidders/agents with $\Nitem$-unit demand in a PAB auction with $\overline{\Nitem}$ supply. Each bidder $n$ is endowed with valuation vector $\bm{v}_n = (v_{n, 1},\ldots,v_{n, \Nitem}) \in [0, 1]^\Nitem$, where $v_{n, i}$ is the marginal value of the $i$-th unit for bidder $n$.
The bidder submits a bid vector $\bm{b}_n = (b_{n, 1},\ldots,b_{n, \Nitem}) \in \mathcal{B}^{\Nitem}$, where $\mathcal{B}$ is some discretization of $[0, 1]$ that represents the set of all possible bids. Agents then receive allocation $x_n \in [\Nitem]$ and  utility $\sum_{\nitem=1}^{x_n} (v_{n, \nitem} - b_{n, \nitem})$ according to the PAB auction rule; see Section \ref{sec:model}. Repeating this auction across $\Nround$ rounds, each agent's goal is to minimize their regret with respect to their hindsight, utility maximizing bid vector.  

\textbf{Dynamic Programming  Scheme for Hindsight Optimal Offline Solution (Section \ref{sec:offline}).} To design low-regret bidding algorithms, we crucially leverage the structure of the hindsight optimal offline solution. In the offline/hindsight problem, the bidder has access to the (historical) dataset of submitted bids by competitors and seeks to find the utility maximizing bid vector on that dataset. (See \cite{RW16, derakhshan2021beating,derakhshan2019lp, golrezaei2021boosted} for works that study similar problems from an auctioneer's perspective.)

We show that the optimal solution to the offline problem---which is our benchmark in computing the regret of our online learning algorithms---can be solved  using a polynomial time Dynamic Programming (DP) scheme.  To do so, we make the following key observation:  
 to win $\nitem$ units (or equivalently, slots), an agent $n$ must have at least $\nitem$ bids larger than the smallest $\nitem$ among the largest $\overline{\Nitem}$ bids of all other bidders. This observation allows us to devise a DP where in each step of the DP, we decide about the bid for one unit, while considering the externality that this bid will impose on the bids and utilities for other units. This externality is precisely the fundamental tradeoff of PAB auctions aforementioned: bidding too small decreases the probability of winning the current or any subsequent units, however, bidding too large increases the payment of the current and previous units. 

\textbf{Decoupled Exponential Weights Algorithm (Section \ref{sec: decoupled exp weights section}).} 
We present our first set of algorithms to learn in the online setting, in both the full information and bandit feedback regimes. We leverage our DP scheme to obtain decoupled rewards, or reward estimates in the bandit setting, for each unit-bid value pair. In particular, we can obtain an exact expression for the utility estimate for bidding $b_\nitem$ for unit $\nitem$ that is independent of $b_1,\ldots,b_{\nitem-1}$ or $b_{\nitem+1},\ldots,b_\Nitem$, subject to bid vector monotonicity. This allows us to mimic the exponential weights algorithm on the exponentially large bid space in polynomial time and space complexity. We show that our decoupled exponential weights algorithm (Algorithm~\ref{alg: Decoupled Exponential Weights}), which achieves polynomial time and space complexities, attains \(O(M^{\frac{3}{2}} \sqrt{T \log{T}})\) regret in the full information setting and \(O(M^{\frac{4}{3}} T^{\frac{2}{3}} \sqrt{\log{T}})\) regret in the bandit settings, respectively. {\color{black}For an extension of these algorithms to a setting with time-varying stochastic valuations, please refer to Section~\ref{sec: time varying}.}

\textbf{Online Mirror Descent Algorithm (Section \ref{sec:bandit}).}  
In this section, we introduce an alternative learning algorithm based on Online Mirror Descent (OMD) that enhances the regret upper bounds of our decoupled exponential weight algorithm by a factor of \(\sqrt{M}\), albeit with increased computational demands. Utilizing our DP scheme, specifically the graph it induces (see Section \ref{sec:offline}), we maintain probability measures over the DP graph nodes rather than its edges. This modification, informed by the observation that bid vector utility depends solely on the graph's nodes, results in more efficient regret bounds.

{\color{black}\textbf{Regret Lower Bound (Section~\ref{sec: lower bound}).} To complement our regret upper bound, we construct regret lower bounds for both full information and bandit settings: $\Omega(M\sqrt{T})$ and $\Omega(\max{M^\frac{2}{3} T^{\frac{2}{3}}, M\sqrt{T}})$. To construct the regret lower bound of $\Omega(M\sqrt{T})$, which is a valid lower bound for both settings, we construct two distributions over adversary bid vectors for which any learning agent is guaranteed to incur regret linear in $M$ when trying to learn the optimal bid under these distributions. Observe that the regret of our OMD algorithm matches the dependence of the regret lower bound on $M$. 

For the bandit setting, we present another regret lower bound of $\Omega(M^\frac{2}{3} T^{\frac{2}{3}})$, which matches the regret of both Decoupled Exponential Weights and OMD Algorithms in terms of dependence on $T$. To construct this lower bound, we build upon the well-known $\Omega(T^{\frac{2}{3}})$ regret lower bound for the posted price and first price auctions  in \cite{DemandCurve2003, ContextBanditsCrossLearning2019}. In particular, we consider a distribution over adversary bid vectors such that the marginal distribution of the adversary's bid for any unit is the equal revenue distribution. We then construct a family of hypotheses by perturbing each of these marginals at a different bid value. We make a key simplification that the set of possible perturbations between two units must be disjoint, limiting the impact of cross-learning between units and allowing for stronger dependence in $M$. We show that any learning agent must incur regret at least $\Omega(M^\frac{2}{3} T^{\frac{2}{3}})$ when distinguishing among all possible combinations of perturbations.}

\subsection{Experimental Results and Managerial Implications}\label{sec:insights}

Our experiments yield valuable practical insights for both auction designers and participants. These insights are primarily derived from simulations of PAB market dynamics using the no-regret learning algorithms designed in our paper. Additionally, we compare these results with the market dynamics of uniform price auctions using the algorithms presented  in \cite{brânzei2023online}. It is important to emphasize that conducting such systematic comparisons was previously challenging due to the inherent difficulty of characterizing equilibria in these auctions prior to our research.

\begin{enumerate}
    \item \textbf{Near-Uniform Bidding in PAB Auctions.} {As shown in Table~\ref{table: learning dynamics full info}, the market dynamics consistently lead to the convergence of the winning bids and the largest losing bids to nearly identical values across all bidders.} This finding partially addresses one of the main concerns regarding the fairness of the PAB auction. Specifically, although the payment for each unit \textit{can} differ across units and bidders, under a reasonable learning and bidding strategy, these payments tend to converge to similar values over the long term.
    \item \textbf{Simplified Bidding Interface is Sufficient for PAB but not for Uniform Price.} In a recent trend of bidding simplification  and automation (e.g.,  \cite{aggarwal2019autobidding, deng2023multi, susan2023multi, lucier2023autobidders}), auctioneers may find it easier to restrict bidders' demand expressiveness by requiring only a single price and quantity, rather than a vector of bids. As per our previous insight, the bid value convergence of the market dynamics suggest appropriateness of this simplified bidding interface. In contrast, we show that the market dynamics of the uniform price auction converge to a staggered bid vector (Table \ref{table: learning dynamics full uniform}), suggesting that the simplified bidding interface may significantly damage the uniform price auction's welfare, revenue, or bidders' utility.
    \item \textbf{PAB Obtains High Revenue but Slightly Lower Welfare than Uniform Price.} From the insights provided by Tables \ref{table: learning dynamics full info} and \ref{table: learning dynamics full uniform}, it is evident that the PAB auction surpasses the uniform price auction in terms of revenue generation. However, it slightly lags behind in welfare, though to a lesser extent. Consequently, auctioneers who prioritize revenue (resp. welfare) should favor the PAB (resp. uniform price) auction over uniform price (resp. PAB) auctions.
\end{enumerate}

\subsection{Other Related Works}

\textbf{Learning in Auctions.} Most of the recent learning-theory-flavored auction design research has focused on the perspective of the auctioneer setting reserve prices \citep{MorgensternR16, mohri2016learning, CaiD17, DudikHLSSV17,kanoria2014dynamic, DemandCurve2003, braverman2018noregretbuyer, golrezaei2019IC, golrezaei2018dynamic, golrezaei2021bidding},
or uniform price auctions \citep{LearningRevOptSPA2013, mohri2016learning, OptReserveMyopic2018, LearningToBidRevenueMaximizing2019, OSPABidding2020}. 
{\color{black} Relatively few works study learning from the bidder's perspective or the resulting market dynamics. The works most closely aligned with our own within this literature \citep{LearningBidOptimallyAdversarialFPA2020, OptimalNoRegretFPA2020, ContextBanditsCrossLearning2019} establish low-regret algorithms for learning to bid in first-price auctions. As the first-price auction is precisely the PAB auction with $M=1$ units, we precisely recover the regret upper bound of $O(T^{\frac{2}{3}})$ established for the bandit feedback setting as described in \cite{ContextBanditsCrossLearning2019}, which is also shown to be tight. The superior rate of $O(T^{\frac{1}{2}})$ achieved in \cite{OptimalNoRegretFPA2020} is incomparable to our bandit feedback $O(T^{\frac{2}{3}})$ regret as they assume a stronger feedback structure where the winning bid is announced at the end of every round. Note that under this feedback structure, unlike the bandit setting, bidders who are not winning the item learn about their competitors. Similarly, the $O(T^{\frac{1}{2}})$ rate in \cite{LearningBidOptimallyAdversarialFPA2020} is obtained in the full-information setting with time-varying valuations. While we obtain matching regret rates (see Section~\ref{sec: time varying}), our result assumes stochastic valuations and compares to an arbitrary valuation-to-bid mapping, whereas their result assumes adversarial valuations and compares to any 1-Lipschitz valuation-to-bid mapping.

Generalizing the aforementioned learning algorithms to the multi-unit setting, however, is difficult as the space of possible bid vectors is exponentially large. To combat this, we take from the expansive structured bandit literature, which includes linear bandits \citep{StochLinearOpt2008, LinStochBandits2011, ContextLinearPayoff2011, Lattimore2020}, combinatorial bandits \citep{CMAB2013, MinimaxCombinatorial2011, niazadeh2021online}, and convex uncertainty set bandits \citep{van2020optimal}. In particular, our algorithm most closely resembles existing algorithms exploiting both these combinatorial and linear aspects in episodic Markov Decision Processes \citep{OREPS2013} or cost minimization on graphs \citep{PathKernel2003}.
In this paper, we contribute to this line of work by proposing a novel regret minimization framework for PAB auctions. Moreover, we numerically analyze the corresponding multi-agent learning dynamics (also for the uniform price auction), which complements our equilibrium analysis in Section \ref{sec: equilibria}.
}

\textbf{PAB Mechanism.} There are several multi-unit auction formats that are commonly used in practice; e.g., uniform price \citep{TreasuryUnifOrDisc2000, LastAcceptedBid2020, ImprovedRevenuePPASPA2021, Kira2019}, PAB \citep{Homogeneous2020, FPACollusion2000, LargeMultiUnit2018}, Vickrey-Clarke-Groves (VCG) \citep{MechanismsMultiUnit2007, LonelyVickrey2006}, ascending price \citep{AscendingCramton1998, EfficientAscending2004}. The literature is divided as to which auction is appropriate for various settings. For example, while the PAB mechanism has desirable revenue and welfare guarantees compared to the uniform price auction \citep{Homogeneous2020}, the empirical revenue of the two auctions is often comparable \citep{Turkish2010}, and some argue that guess-the-clearing-price and other strategic behavior \citep{GermaryReserve2021} along with collusion \citep{WholesaleElectricityMarkets2008} can further damage its performance. Furthermore, there are ethical and fairness concerns regarding PAB auctions, as their discriminatory nature implies that agents pay unequally for the same unit. Despite this criticism, and other arguments for (and against) other auction formats \citep{CombinatorialAuctionDesign2003, CombinatorialAuctions2004, InefficiencyStandard2013, DemandReduction2014, Trilemma2020}, we focus on the PAB mechanism  due to the simplicity and transparency of its payment rule, as well as, its widespread use.

Regarding the equilibria, bidding dynamics, efficiency, and other key properties of multi-unit PAB mechanisms, there are some partial results in the economics literature.  For example,
the Bayesian optimal bidding strategy is known for the case of 2-unit demand and supply multi-unit auctions \citep{OptimalBidding1995}, for when valuations follow a class of parametric distributions \citep{Homogeneous2020}, or when the bidders have symmetric valuations \citep{DemandReduction2014}.  The PAB mechanism is also known to be smooth \citep{syrgkanis2012composable}, which yields a number of desirable guarantees on the price of anarchy of the auction \citep{roughgarden2015smooth}, even with the presence of an aftermarket \citep{babaioff2022aftermarkets}. An additional attractive property of the PAB mechanism is complete transparency of payments, as given one's allocation, an agent knows precisely how much they will pay. This is in stark contrast to the uniform price auction, where shill bids can inflate payments by artificially increasing demand \citep{Trilemma2020}.

{\color{black}In this work, we contribute to this line of research by demonstrating that while in PNEs the winning bids are uniform (up to a discretization factor), this is not necessarily the case under CCEs and  CEs. While our work is not the first to show optimality of uniform bidding or characterize an approximately uniform bidding PNE in PAB \citep{inefficiency2013}, to our knowledge, it is the first that does so with the constraint of no marginal overbidding---bidder's cannot submit bids for units larger than their marginal value for that unit.}

\textbf{Relationship to Uniform Price Auctions.} The uniform price auction is an alternative mechanism of allocating multiple homogeneous goods. Closely related to the PAB auction, bidders are allocated units in decreasing order of bids, but instead of charging each bidder their corresponding winning bid, each bidder instead pays the smallest winning bid. The  EU-ETS carbon license auctions use the uniform price auction, rather than the PAB auction, largely due to fairness considerations, as each agent pays the same amount per unit allocated \citep{EUETS}. However, our work shows that price fairness is not a concern in the long term when bidders learn how to bid and converge to a common price. In a study closely aligned with our research, \cite{brânzei2023online} investigated the problem of learning optimal bidding strategies in multi-unit uniform pricing auctions. Since the uniform price auction employs a distinct payment rule, the bid optimization problem necessitates different approaches compared to the PAB auction. We show more specifically that the payment structure of PAB allows for additional simplifications in the learning task, allowing faster regret rates compared to uniform price.

\section{Preliminaries}\label{sec:model}

\textbf{Notation.} We let $|\mathcal{S}|$ denote the size of set $\mathcal{S}$, and define $[k] = \{1,\ldots,k\}$ to be the set of the first $k$ positive integers. We define $\mathcal{S}^{+k}$ to be the set of non-increasing $k$-vectors of elements from set $\mathcal{S}$. Similarly, $\mathcal{S}^{-k}$ denotes the set of non-decreasing $k$-vectors of elements from set $\mathcal{S}$. We let $\Delta(\mathcal{S})$ denote the set of valid probability measures over set $\mathcal{S}$. We also say that the quantity $x$ is $\lesssim$, $\propto$, or $\gtrsim$ than $f(a_1,\ldots,a_k)$ some function of $k$ algorithm parameters if $x \in \mathcal{O}(f(a_1,\ldots,a_k))$, $x \in \Theta(f(a_1,\ldots,a_k))$, or $x \in \Omega(f(a_1,\ldots,a_k))$,  respectively. We use $\tilde{O}$, $\tilde{\Theta}$, and $\tilde{\Omega}$ to hide logarithmic factors.
Lastly, given a set of elements $a_1,\ldots,a_k$, we define $a_{s:r} = (a_s,\ldots,a_r)$ for $r \leq k$ to be the set of elements from $s$ to $r$ inclusive.

\textbf{Auction format: Pay-as-bid.} 
 Consider a scenario in which there are $N$ bidders and $\overline{M}$ identical units available for auction in a PAB  format. In this context, we assume that each agent $n$ within the set $[N]$ desires a maximum of $M$ units. (It should be noted that our results can be extended to a situation where each agent $n$ may demand a different maximum number of units, denoted by $M_n$, which are not necessarily identical.)
 
 Let $\bm{v}_n = (v_{n, \nitem})_{\nitem \in [M]} \in [0, 1]^{+M}$ represent agent $n$'s non-increasing marginal valuation profile. This implies that for any given $n$ in the set $[N]$, the following conditions hold: (i) {valuation monotonicity} $v_{n,1}\ge v_{n, 2}\ge \ldots \ge v_{n, \Nitem}$ and (ii) the total valuation of agent $n$ after receiving $\nitem$ units is given by $\sum_{k=1}^{m}v_{n, k}$.

 Let $\bm{b}_n = \{b_{n, \nitem}\}_{\nitem \in [\Nitem]} \in [0, 1]^{+\Nitem}$  represent the non-increasing bids submitted by bidder $n$. Here, $b_{n, \nitem}$ refers to the bid made by bidder $n$ for the $\nitem$-th slot/unit. Similar to $\bm{v}_n$, we have the following conditions for any $n \in [N]$: (i) {bid monotonicity}
$b_{n,1}\ge b_{n, 2}\ge \ldots \ge b_{n, \Nitem}$ and  (ii) individual rationality (IR) $b_{n, \nitem} \leq v_{n, \nitem}$ for all $m \in [M]$. 
It is important to note that the bid monotonicity condition is not an assumptions; it is implied by the auction rule that will be stated shortly. Consequently, the total payment made by bidder $n$ after receiving $\nitem$ units is given by $\sum_{k=1}^{m} b_{n, k}$. We define $\bm{b}_{-n} = (b_{-n,\nitem})_{\nitem \in [\overline{\Nitem}]} \in [0, 1]^{-\overline{M}}$  as the set of the $\overline{\Nitem}$ largest bids not belonging to agent $n$, arranged in increasing order (i.e., $b_{-n, 1} \leq \ldots \leq b_{-n, \overline{\Nitem}}$).

 The auction operates according to the following rules:
 In a PAB auction, all bids submitted across the $N$ bidders are arranged in descending order. The $\nitem$-th unit is assigned to the bidder with the $\nitem$-th highest bid, and they are charged the amount of their bid. We denote the allocation to agent $n$ as $x_n(\bm{b}_n) = x(\bm{b}_{n}, \bm{b}_{-n})$, and the (quasi-linear) utility as $\mu_n(\bm{b}_n) = \mu(\bm{b}_{n}, \bm{b}_{-n})$, where
\begin{align}
    \label{eq: allocation and utility definition}
    x(\bm{b}_{n}, \bm{b}_{-n}) = \sum_{\nitem=1}^\Nitem \textbf{1}_{b_{n, \nitem} \geq b_{-n, \nitem}} \quad \text{and} \quad \mu(\bm{b}_{n}, \bm{b}_{-n}) = \sum_{\nitem=1}^{x_n(\bm{b}_n)} (v_{n, \nitem} - b_{n, \nitem})\,.
\end{align}
respectively. Here, $b_{-n, \nitem}$ represents the $\nitem$-th smallest bid among the $\overline{M}$ largest bids of all other bidders except bidder $n$.
It should be noted that $x(\bm{b}_{n}, \bm{b}_{-n})$ denotes the number of units that agent $n$ receives in the auction. We assume that ties are broken in favor of higher indexed bidders and that this is public knowledge.

{\color{black}In the following sections, we study PAB in two settings. We first consider the one-shot setting, wherein we characterize properties of PNEs, CCEs, and CEs. Then, we define and discuss the repeated setting and derive corresponding no-regret learning algorithms.}

{\color{black}\section{Equilibria of Pay-as-Bid Auctions}
\label{sec: equilibria}

We provide partial results regarding the properties and structure of common solution concepts such as PNE and CCE. More specifically, we show near-uniformity of the winning bids and each bidder's submitted bids under PNEs and we show these properties do not extend to CCEs, or even CEs. We first define these equilibria  formally:

\begin{definition}[Pure Nash Equilibria]
    In the PAB setting, a Pure Nash Equilibrium (PNE) is defined as a collection of $N$ bid vectors $\bm{b}_n \in \mathcal{B}^{+M}$ for all $n \in [N]$ that satisfies
    \begin{align*}
        \mu_i(\bm{b}_i, \bm{b}_{-i}) \geq \mu_i(\bm{b}'_i, \bm{b}_{-i})\,.
    \end{align*}
\end{definition}

PNE can be interpreted as a collection of \textit{independent} strategies under which no individual has an incentive to deviate. While PNEs are the most commonly studied and desirable solution concept in much of the economics and computer science literature, they are often intractable computationally and, in the case of PNE, may not necessarily even exist \citep{PredictionLearningGames}. As such, we will study a broader class of solution concepts known as CCEs, which contains the set of PNEs. These solution concepts generalize PNEs by allowing for dependencies between bidders' strategies. To better understand this, we first formally define CCEs, which are well known to comprise the limit cycles of no-regret learning algorithms, and CEs, which are a class of solutions between PNEs and CCEs that comprise the time-averaged joint bid distributions limit cycles of no-swap-regret learning \citep{blum2007externalinternalregret}.

\begin{definition}[Coarse correlated equilibria and correlated equilibria]
    In the PAB setting, a coarse correlated equilibrium (CCE) is defined as a joint probability distribution $p: \bigotimes_{n=1}^N \mathcal{B}^{+M} \to [0, 1]$ over all $N$ agents' bid vectors $\bm{b}_{1:N}$ that satisfies the following set of linear constraints for all $n \in [N]$:
    \begin{align*}
       \mathbb{E}_{\bm{b}_{1:N}\sim p}[  \mu_n(\bm{b}_n; \bm{b}_{-n})] \geq \mathbb{E}_{\bm{b}_{1:N}\sim p}  [\mu_n(\bm{b}'_n; \bm{b}_{-n})] \quad \text{ for any } \bm{b}'_n \in \mathcal{B}^{+M}\,.
    \end{align*}
    A correlated equilibrium (CE) is similarly defined as a joint probability distribution $p: \bigotimes_{n=1}^N \mathcal{B}^{+M} \to [0, 1]$ over all $N$ agents' bid vectors $\bm{b}_{1:N}$ that satisfies the more stringent set of linear constraints for all $n \in [N]$:
   \begin{align*}
        \mathbb{E}_{\bm{b}_{1:N}\sim p}[  \mu_n(\bm{b}_n; \bm{b}_{-n}) | \bm{b}_n] \geq    \mathbb{E}_{\bm{b}_{1:N}\sim p}[\mu_n(\bm{b}'_n; \bm{b}_{-n})| \bm{b}_n] \quad \text{ for any }  \bm{b}_n, \bm{b}'_n \in \mathcal{B}^{+M}\,.
    \end{align*}
    
\end{definition}

CCEs (resp. CEs) can be interpreted as a joint distribution $p(\cdot)$ over all strategy profiles such that it is ex-ante (resp. ex-post) optimal for bidders to adhere to the strategy prescribed by a ``signal" drawn from this distribution. Beyond the well known fact that the class of CCEs (resp. CEs) comprise the limit cycle of no external (resp. internal) regret learning dynamics in repeated games \cite{PredictionLearningGames}, CCEs and CEs are of major interest they possess strong welfare guarantees in smooth games, such as PAB auctions \citep{syrgkanis2012composable, InefficiencyStandard2013, roughgarden2015smooth, PriceOfAnarchyInAuctions2017, feldman2017correlated}. 
Moreover, the linear constraint representation of CCEs and CEs is particularly useful as one may define a linear program with a corresponding linear objective function in order to test certain hypotheses; as we do in the proof of Lemma \ref{lem: CCE non uniform winning}.
Having formally defined PNEs, CCEs, and CEs, we are ready to state our main results regarding the equilibria of PAB, assuming an even discretization $\mathcal{B} = \{0, \delta,\ldots, 1-\delta, 1\}$.

\subsection{Pure Nash Equilibria}

We begin by showing that PNEs, if they exist, must have nearly uniform winning bids; at most one discretization factor $\delta$ (which is equal to $\frac{1}{|\mathcal{B}|}$ under an even discretization $\mathcal{B}$) apart.

\begin{lemma}
    \label{lem: PNE uniform bidding} Any PNE (if one exists) requires that the highest bid $b_{(1)}$ is at most \(\delta\) above the \(M\)'th largest bid, denoted by \(b_{(M)}\). Here, $\delta$ is the discretization factor.
\end{lemma}

Lemma \ref{lem: PNE uniform bidding}, detailed in Section \ref{sec:proof:lem: PNE uniform bidding}, establishes that any PNE (if existing) necessitates nearly uniform winning bids, differing by at most a discretization factor of $\delta$. Additionally, the subsequent lemma, proved in Section \ref{sec:proof:lem: near-uniform optimal bidding}, illustrates that uniform bidding is optimal for any bidder.

\begin{lemma}
    \label{lem: near-uniform optimal bidding} 
    Consider any fixed bids \(\bm{b}_{-n}\) with tie-breaking favoring higher-indexed bidders. Let $\tilde{b}_{n,m}$ denote the minimum bid required for bidder \(n\) to win the \(m\)'th item under the tie-breaking rule. Specifically, \(\tilde{b}_{n,m} = b_{-n,m} + \delta\) if \(b_{-n,m}\) belongs to a bidder with a higher index than \(n\), and \(\tilde{b}_{n,m} = b_{-n,m} \) otherwise. Assuming no overbidding, bidder \(n\)'s optimal bids takes the following form:  \(b_j= \tilde{b}_{n,m}\) for all \(j\in [m]\), and \(b_j \leq v_j\) for all \(j>m\) for all $m$ such that $\tilde{b}_{n,m} < v_m$, where for convenience, $\tilde{b}_{n,0} = 0$. Consequently, under any PNE, if one exists, bidder \(\bm{b}_n\) should take the aforementioned form. 
\end{lemma}

This result states that under any fixed $\bm{b}_{-n}$, the optimal bid vector for bidder $n$, under which they are allocated $m$ items, is such that they pay the minimum possible price required to win those $m$ items.
We now present our main result of this section, where we provide sufficient conditions to fully characterize an approximately welfare-efficient PNE.

\begin{theorem}[Existence of an Approximately Efficient PNE]
    \label{thm: PNE existence}
     Define the clearing price $c = \lfloor v_{(M)} \rfloor_{\delta}$ to be the $M$'th largest valuation among all bidders, denoted by $v_{(M)}$,  rounded down to the nearest multiple of $\delta$, and similarly, define $c_{-n} = \lfloor v_{(-n, M)} \rfloor_{\delta}$ to be the rounded $M$'th largest valuation among all bidders except $n$. If $c = c_{-n}$ for all $n\in [N]$ and ties are broken in favor of higher indexed bidders, then there exists a PNE $(\bm{b}_1,\ldots,\bm{b}_N)$ where each bidder $n\in [N]$:  
    \begin{enumerate}
        \item submits bids of either all $c$ or all $c + \delta$ for units such that $v_{n,m} \geq c + \delta$,
        \item submit bids of $c$ for all units such that $v_{n,m} \in [c, c + \delta)$,
        \item submit bids smaller than $c$ for all other units.
    \end{enumerate}
    Moreover, this PNE is $M\delta$-approximately welfare optimal:
    \begin{align*}
        \sum_{m=1}^M v_{(m)} - \sum_{n=1}^N \sum_{m=1}^{x(\bm{b}_n, \bm{b}_{-n})} v_{n,m} \leq M\delta\,.
    \end{align*}
\end{theorem}
{\color{black}At a high level, the $c = c_{-n}$ constraint implies that bidders act as price-takers and cannot increase their utility by under-bidding for the units allocated to them under the PNE condition. In this competitive environment, there exists an approximately welfare-optimal PNE where all bidders submit bids close to the clearing price $c$, which represents the $M$'th largest valuation.

For a detailed proof, we refer the reader to the online appendix (arXiv:2307.15193v3). 
In Section \ref{sec:discuss:assumption}, we include discussion of this assumption as well as present an example where this assumption holds, but even a minor alteration leads to its breakdown. As demonstrated in Figure~\ref{fig: bid cycling main body}, this impacts bid convergence in our algorithm, resulting in cyclic bidding behavior for certain bidders. However, it's crucial to note that while the $c = c_{-n}$ condition suffices for existence of PNEs, PNE existence is broader, dependent on bidders acting as price-takers with minimal market influence. This holds true in markets where individual demands are much smaller than the total supply and the total demand exceeds the total supply.}

\subsection{Coarse Correlated and Correlated Equilibria}
Thus far, we have showed some structural results for PNEs in PAB auctions. Unfortunately, this structure does not extend to CCEs, or even the tighter class of CEs. 
\begin{lemma}
    \label{lem: CCE non uniform winning}
    There are CCEs (and CEs) of PAB where winning bids differ by more than one discretization factor, 
$\delta$, with a non-zero probability.
\end{lemma}

Because of this non-winning bid uniformity in CCEs, we cannot hope to recover such an approximate welfare optimality result as we did for PNE. Not all hope is lost, however, as we can still apply a smoothness-based argument to lower bound the expected welfare under any CCE of PAB \cite{syrgkanis2012composable, roughgarden2015smooth}. This result shows that in PAB auctions, the expected welfare under any CCE is at least $\frac{1}{2}(1 - \frac{1}{e})$ of the optimal welfare. Perhaps more importantly, despite the existence of CCEs with non-uniform winning bids, our no-regret learning algorithms consistently converge to CCEs which possess such uniformity (see Table~\ref{table: learning dynamics full info}). We leave such characterization of these CCEs as the result of our learning algorithms as future work.

}

\section{Learning to Bid in a Repeated Setting}
\label{sec:repeated}

In this section, we explore the repeated PAB setting where bidders participate over $T$ rounds, aiming to maximize their aggregate utility or minimize their regret. A limitation of the equilibria analysis from the previous section is its ambiguity on the attainability of these equilibria through natural learning dynamics like fictitious play or no-regret learning. For instance, while Lemmas~\ref{lem: near-uniform optimal bidding} and \ref{lem: PNE uniform bidding} depend on the existence of a PNE, which may not be present, bidders' learning behavior might still converge towards a specific subset of CCEs with certain  structures.\footnote{For example, when bidders behave according to a certain class of no-regret learning algorithms known as \textit{mean-based} algorithms, they may even \textit{last-iterate} converge to Mixed NE or even PNE of certain kinds of auctions \cite{braverman2018noregretbuyer, Deng_2022}.}
We respond to these insights by designing no-regret learning algorithms and empirically examining the dynamics and equilibria  they converge towards.

\subsection{Repeated Setting}

We consider a repeated setting where the PAB auction is conducted over $\Nround$ rounds. In this repeated setting, we will focus on the perspective of agent $n \in [N]$ and remove additional indexing when it is evident from the context. For each auction round, agent $n$ has a fixed valuation profile represented by 
$\bm{v} = (v)_{\nitem \in [\Nitem]} \in [0, 1]^{+\Nitem}$, and their
 bid vector in the $\nround$-th auction is denoted by $\bm{b}^\nround = (b^\nround_\nitem)_{\nitem \in [\Nitem]} \in [0, 1]^{+\Nitem}$.\footnote{In Section \ref{sec: time varying}, we extend our results to the case of time-varying valuations.} 
Similarly, $\bm{b}^\nround_{-} = (b^\nround_{-m})_{m \in [\overline{\Nitem}]} \in [0, 1]^{-\overline{\Nitem}}$ represents the competing bids in round $\nround$. In each round, agent $n$ receives $x^\nround_n(\bm{b}^\nround)$ units and earns a utility of $\mu^\nround_n(\bm{b}^\nround)$, where:
\begin{align}
    \label{eq: allocation and utility definition repeated}
    x^\nround_n(\bm{b}^\nround) = x(\bm{b}^\nround, \bm{b}^\nround_{-}) \quad \text{and} \quad \mu^\nround_n(\bm{b}^\nround) = \mu(\bm{b}^\nround, \bm{b}^\nround_{-})\,.
\end{align}
Recall that functions $x$ and $\mu$ are defined in Equation \eqref{eq: allocation and utility definition}. 
The goal of agent $n$ is to choose a sequence of bid vectors $(\bm{b}^\nround)_{\nround \in [\Nround]}$ that maximizes their total utility, given by $\sum_{\nround=1}^\Nround \mu^\nround_n(\bm{b}^\nround)$. However, the main challenge is that the vectors $\bm{b}^{\nround}_{-}$, representing the competing bids, are not known in advance. Instead, they are revealed in an online manner. Consequently, agents must learn how to bid optimally throughout the sequence of auctions, taking into account their previous allocations $H^{\nround-1} = (x^\tau(\bm{b}^\tau))_{\tau \in [\nround-1]}$ and their valuation profile $\bm{v}$. The performance of an agent's learning strategy is evaluated in terms of regret, which quantifies the difference between their expected utility using their learning strategy and the optimal utility achievable with perfect knowledge of the competing  bidding vectors in hindsight:
\begin{align}
\tag{Continuous Regret}
    \textsc{Regret} = \max_{\bm{b} \in [0, 1]^{+\Nitem}} \sum_{\nround=1}^\Nround \mu^\nround_n(\bm{b}) - \mathbb{E}\left[\sum_{\nround=1}^\Nround \mu^\nround_n(\bm{b}^\nround)\right]\,.
\end{align}
Here, $\bm{b}^{\nround}_-$ can be selected by an adaptive adversary; i.e. an adversary who can select $\bm{b}^\nround_-$ as a function of the entire auction history, which includes $\bm{b}^1,\ldots,\bm{b}^{\nround-1}$, but does not have access to the possible randomness when selecting $\bm{b}^\nround$.

{\color{black}The benchmark, $\max_{\bm{b} \in [0, 1]^{\Nitem}} \sum_{\nround=1}^\Nround \mu^\nround_n(\bm{b})$ used in the definition of continuous regret, is constructed considering all possible $\bm{b}^{\nround}_{-}$ for every round $\nround \in [\Nround]$, where in the hindsight optimal solution, the bid vector can be chosen from any vector $\bm{b} \in [0, 1]^{+\Nitem}$. However, as is common in the literature (see, e.g., \citep{LearningBidOptimallyAdversarialFPA2020, OptimalNoRegretFPA2020, ContextBanditsCrossLearning2019}) for learning to bid in first-price auctions (PAB auctions with $M=1$), characterizing the continuous regret as defined above requires discretizing the action space while considering discretization errors and characterizing the discrete regret, which we will define shortly, in order to bound the continuous regret. The level of discretization is then optimized to balance the discrete regret and discretization errors.

In light of this, let us confine ourselves to a discretization of $[0, 1]$, denoted by $\mathcal{B} = (B_1,\ldots,B_{|\mathcal{B}|})$, where $0 = B_1 < \ldots < B_{|\mathcal{B}|} = 1$. In such instances, we define an analogous version of regret:
\begin{align}
\tag{Discretized Regret}
    \textsc{Regret}_\mathcal{B} = \max_{\bm{b} \in \mathcal{B}^{+\Nitem}} \sum_{\nround=1}^\Nround \mu^\nround_n(\bm{b}) - \mathbb{E}\left[\sum_{\nround=1}^\Nround \mu^\nround_n(\bm{b}^\nround)\right]\,.
\end{align}}
In both  definitions of continuous and discretized regret, we henceforth implicitly assume that $b_m\le v_m$ for any $m\in [M]$; that is, we have  the bid vector $\bm{b}$ is subject to individual rationality and overbidding is not allowed. 
{\color{black}We  wish to derive a learning algorithm that achieves a  continuous regret  (i.e., $\textsc{Regret}$) that is polynomial  in $\Nitem$ and sub-linear in $\Nround$. To do so,  as stated above,  we consider the discretized setting, and bound $\textsc{Regret}_\mathcal{B}$; an upper bound on the continuous regret  will be obtained by accounting for the discretization errors, which we show in Section~\ref{sec: path kernels regret} is of order $O(\frac{MT}{|\mathcal{B}|})$.}

\subsection{Offline Setting} \label{sec:offline}

In this section, we provide an algorithm that efficiently computes the hindsight optimal bid vector $\max_{\bm{b} \in \mathcal{B}^{+\Nitem}} \sum_{\nround=1}^\Nround \mu^\nround_n(\bm{b})$ which is precisely our benchmark for regret. In order to compute agent $n$'s optimal fixed bidding strategy for the $\Nround$ rounds of PAB auctions when fixing all other bids $\bm{b}_{-n}^1,\ldots,\bm{b}_{-n}^T$, we design an algorithm based on path weight optimization. Recall the following optimization problem with bid space $\mathcal{B}$:
\begin{align}
\tag{Offline}
\label{eq:offline}
\max_{\bm{b} \in \mathcal{B}^{+\Nitem}} \sum_{\nround=1}^\Nround \mu^\nround_n(\bm{b})\,,
\end{align}
where $\mu^\nround_n(\bm{b})$ (defined in Equation \eqref{eq: allocation and utility definition repeated}) represents the utility of agent $n$ in round $\nround$ given bid vector $\bm{b}$ and competing bids $\bm{b}_-^t$. As mentioned earlier, the solution to this optimization problem serves as a benchmark for evaluating the performance of online learning algorithms in the repeated setting. Furthermore, it provides valuable insights for designing algorithms with polynomial time and space complexity for the repeated setting.

To solve  Problem \eqref{eq:offline}, we take advantage of the following decomposition: 
\begin{align}
    \label{eq: decomposition}
    \sum_{\nround=1}^\Nround \mu^\nround_n(\bm{b}) &= \sum_{\nround=1}^\Nround \sum_{\nitem=1}^\Nitem (v_\nitem - b_\nitem)\textbf{1}_{b_\nitem \geq b_{-\nitem}^t} = \sum_{\nitem=1}^\Nitem \sum_{\nround=1}^\Nround (v_\nitem - b_\nitem)\textbf{1}_{b_\nitem \geq b_{-\nitem}^t}\\ &:= \sum_{\nitem=1}^\Nitem \sum_{\nround=1}^\Nround w_\nitem^\nround(b_\nitem) := \sum_{\nitem=1}^\Nitem W^{\Nround+1}_\nitem(b_\nitem)\,,
    \label{def: def mu w W}
\end{align}
where 
$w^\nround_\nitem(b) = \textbf{1}_{b \geq b^{\nround}_{-\nitem}} (v_\nitem - b)$ represents the utility in the $\nround$-th auction for winning the $\nitem$-th item with bid $b$, and $W^{\Nround+1}_\nitem(b) = \sum_{\nround=1}^\Nround w^{\nround}_\nitem(b)$ represents the cumulative utility gained from winning the $\nitem$-th item with bid $b$ across the $\Nround$ auctions. \footnote{Here, in $w^\nround_\nitem(b)$, the same tie-breaking rule in Equation \eqref{eq: allocation and utility definition} is applied.} To solve Problem \eqref{eq:offline}, we develop a polynomial-time DP scheme utilizing these $w^\nround_\nitem(b)$ and $W^{\Nround+1}_\nitem(b)$. In particular, for any $\nitem \in [\Nitem]$ and any bid $b \in \mathcal B$, let $U_\nitem(b)$ be the optimal cumulative utility of the agents from units $\nitem, \nitem+1, \ldots, \Nitem$ over $\Nround$ auctions assuming that bids for unit $\nitem$ is less than or equal to $b$. We then have
\begin{align}
    U_\nitem(b) = \max_{b'\le b, b'\in \mathcal B}\left\{ W_{\nitem}^{\Nround+1} (b') + U_{\nitem+1}(b')\right\} \quad b\in \mathcal B, m\in [M] \quad \text{and} \quad U_{\Nitem+1}(b) = 0 \quad b \in \mathcal{B}\,.
\end{align}
Algorithm \ref{alg: Offline Full} uses the aforementioned  DP scheme to devise an optimal solution to Problem \eqref{eq:offline}. The following theorem, proven in Section~\ref{sec: offline proof}, shows the optimality of Algorithm \ref{alg: Offline Full}. 

\begin{theorem}\label{thm:offline}
Algorithm~\ref{alg: Offline Full} returns the optimal solution to Problem \eqref{eq:offline} {with time and space complexity of $O(\Nitem |\mathcal{B}|^2)$ and $O(\Nitem|\mathcal{B}|)$, respectively.}
\end{theorem}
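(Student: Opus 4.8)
The plan is to establish two things separately: (i) the DP recursion (Algorithm~\ref{alg: Offline Full}) computes the optimal value of Problem~\eqref{eq:offline} and its forward pass recovers an optimal bid vector, and (ii) it runs within the claimed $O(\Nitem|\mathcal{B}|^2)$ time and $O(\Nitem|\mathcal{B}|)$ space. Everything hinges on the decoupling identity~\eqref{eq: decomposition}, which rewrites the offline objective as $\sum_{\nitem=1}^\Nitem W_\nitem^{\Nround+1}(b_\nitem)$, a sum of per-unit rewards coupled \emph{only} through the monotonicity constraint $b_1\ge\cdots\ge b_\Nitem$ (and the implicit IR restriction $b_\nitem\le v_\nitem$, which I would fold into the admissible bid set for unit~$\nitem$). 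In this separable-yet-monotone form the problem is a heaviest monotone selection, for which the recursion defining $U_\nitem(b)$ is the natural choice.

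For correctness I would induct backward on $\nitem$, proving that $U_\nitem(b)$ equals the true optimum of the tail subproblem $\max\sum_{\nitem'=\nitem}^\Nitem W_{\nitem'}^{\Nround+1}(b_{\nitem'})$ over non-increasing $(b_\nitem,\ldots,b_\Nitem)\in\mathcal{B}^{+(\Nitem-\nitem+1)}$ with $b_\nitem\le b$. The base case $U_{\Nitem+1}(\cdot)=0$ is immediate. For the step, fixing $b_\nitem=b'$ for admissible $b'\le b$ contributes $W_\nitem^{\Nround+1}(b')$, and monotonicity forces all later bids to lie below $b'$, so the residual optimization over units $\nitem+1,\ldots,\Nitem$ is exactly the subproblem with value $U_{\nitem+1}(b')$ by the induction hypothesis; maximizing over $b'\le b$ reproduces the recursion. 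Since $U_\nitem(b)$ is non-decreasing in $b$, evaluating at $\nitem=1$ and $b=B_{|\mathcal{B}|}=1$ yields the optimum of~\eqref{eq:offline}, and storing an argmax at each cell lets a forward sweep reconstruct an optimal $\bm{b}$.

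For complexity I would account for the DP table, which has $\Nitem|\mathcal{B}|$ cells (giving the $O(\Nitem|\mathcal{B}|)$ space bound, since layer $\nitem$ needs only layer $\nitem+1$ plus the stored pointers used for reconstruction). Each cell $U_\nitem(b)$ is a maximum over the at most $|\mathcal{B}|$ admissible $b'\le b$, so filling the table costs $O(\Nitem|\mathcal{B}|^2)$ time. The reward tables $W_\nitem^{\Nround+1}(b)=(v_\nitem-b)\sum_{\nround=1}^\Nround\textbf{1}_{b\ge b_{-\nitem}^\nround}$ are precomputed once by bucketing the competing bids $\{b_{-\nitem}^\nround\}$ over $\mathcal{B}$ and taking prefix counts, at cost $O(\Nitem(\Nround+|\mathcal{B}|))$, which is dominated by the DP; this gives the stated bounds.

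I expect the only genuinely substantive step to be justifying the decoupling, namely that $\mu_n^\nround(\bm{b})=\sum_\nitem w_\nitem^\nround(b_\nitem)$ with each $w_\nitem^\nround(b_\nitem)=\textbf{1}_{b_\nitem\ge b_{-\nitem}^\nround}(v_\nitem-b_\nitem)$ depending on $b_\nitem$ alone---this is what makes the objective separable and is special to PAB. The point to check is that, because $\bm{b}$ is non-increasing while $\bm{b}_-^\nround$ is non-decreasing, the sequence $\nitem\mapsto\textbf{1}_{b_\nitem\ge b_{-\nitem}^\nround}$ is itself non-increasing; hence the won units form a prefix $\{1,\ldots,x_n^\nround(\bm{b})\}$ and the indicator-weighted sum $\sum_\nitem(v_\nitem-b_\nitem)\textbf{1}_{b_\nitem\ge b_{-\nitem}^\nround}$ coincides with the PAB utility in~\eqref{eq: allocation and utility definition}. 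Modulo consistent handling of ties through $\textsc{TieBreak}_n$ and of the IR constraint, the rest is the routine induction and bookkeeping above.
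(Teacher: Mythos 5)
Your proposal is correct and follows essentially the same route as the paper: backward induction showing $U_\nitem(b)$ equals the optimum of the tail subproblem over non-increasing suffixes bounded by $b$, a forward argmax sweep to recover $\bm{b}^*$, and the same $O(\Nitem|\mathcal{B}|^2)$ time / $O(\Nitem|\mathcal{B}|)$ space accounting. The one place you go beyond the paper is in explicitly justifying the decoupling identity (that the indicator sequence $\nitem\mapsto\textbf{1}_{b_\nitem\ge b_{-\nitem}^\nround}$ is non-increasing so the won units form a prefix and the separable sum equals the PAB utility), which the paper asserts without proof in Equation~\eqref{eq: decomposition}; this is a worthwhile check but does not change the argument.
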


\begin{algorithm}[t]
\footnotesize
	\KwIn{Valuation $\bm{v}$ for $\bm{v} \in [0, 1]^{+\Nitem}$, Other bids $\{\bm{b}^{\nround}_-\}_{\nround \in [\Nround]}$ for $\bm{b}^{\nround}_- \in \mathcal{B}^{-\overline{\Nitem}}$.}
	\KwOut{Optimal bid vector $\bm{b}^* = \text{argmax}_{\bm{b} \in \mathcal{B}^\Nitem} \mu^{ \Nround}_n(\bm{b})$ and its corresponding utility.}
	Let $W^{\Nround+1}_\nitem(b) \gets \sum_{\nround=1}^\Nround \textbf{1}_{b \geq b^{\nround}_{-\nitem}} (v_\nitem - b)$,  $b \in \mathcal{B}, \nitem \in [\Nitem]$,   define 
        $U_{\Nitem+1}(b) \gets 0$, $b \in \mathcal{B}$, and set $b^*_{0} = \max(\mathcal B)$\;
        \textbf{for} $m \in [M,\ldots,1], b \in \mathcal{B}: U_\nitem(b) \gets \text{max}_{b' \in \mathcal{B}; b' \leq b}W^{\Nround+1}_\nitem(b') + U_{\nitem+1}(b')$\;
        \textbf{for} $m \in [1,\ldots,M]: b_m^\star \gets \arg\max_{b \le b_{m-1}^*} U_\nitem(b)$\;
        \textbf{Return} $U_1(\max(\mathcal{B}))$ and $\bm{b}^{*} =(b_{m}^*)_{m\in [M]}$.
	\caption{\textsc{Offline}$(\bm{v}, \{\bm{b}^{\nround}_-\}_{\nround \in [\Nround]})$ \label{alg: Offline Full}}	
\end{algorithm}

This algorithm to solve the offline bid optimization problem enables us to compute the hindsight optimal utility, which serves as a benchmark for evaluating the effectiveness of our online learning algorithms. It is worth mentioning that we can represent our DP algorithm as an equivalent graph with $\Nitem$ layers, with $|\mathcal{B}|$ nodes in each. More precisely, we define the (offline) DP graph as follows:
\begin{enumerate}
    \item \textbf{DP nodes.} There are $\Nitem$ layers, each with    $|\mathcal{B}|$ nodes in each, denoted by $\{(\nitem, b)\}_{\nitem \in [\Nitem], b \in \mathcal{B}}$.
    \item \textbf{DP edges.} In this graph, there are only (directed) edges between two consecutive layers, i.e., from layer $m$ to layer $m+1$ for any $m \in [M-1]$. 
    In particular, 
    node $(\nitem, b)$ only has an edge to node $(\nitem+1, b')$ for $b' \leq b$ and if $b' \leq v_{m+1}, b \leq v_m$. 
    \item \textbf{DP weights.} We define the weight of node $(\nitem, b)$ to be $W_{\nitem}^{T+1}(b) = \sum_{\tau=1}^{\Nround} \textbf{1}_{b \geq b^{\tau}_{-\nitem}} (v_\nitem - b)$.
\end{enumerate}

\begin{figure}
    \centering
    \includegraphics[scale=0.24]{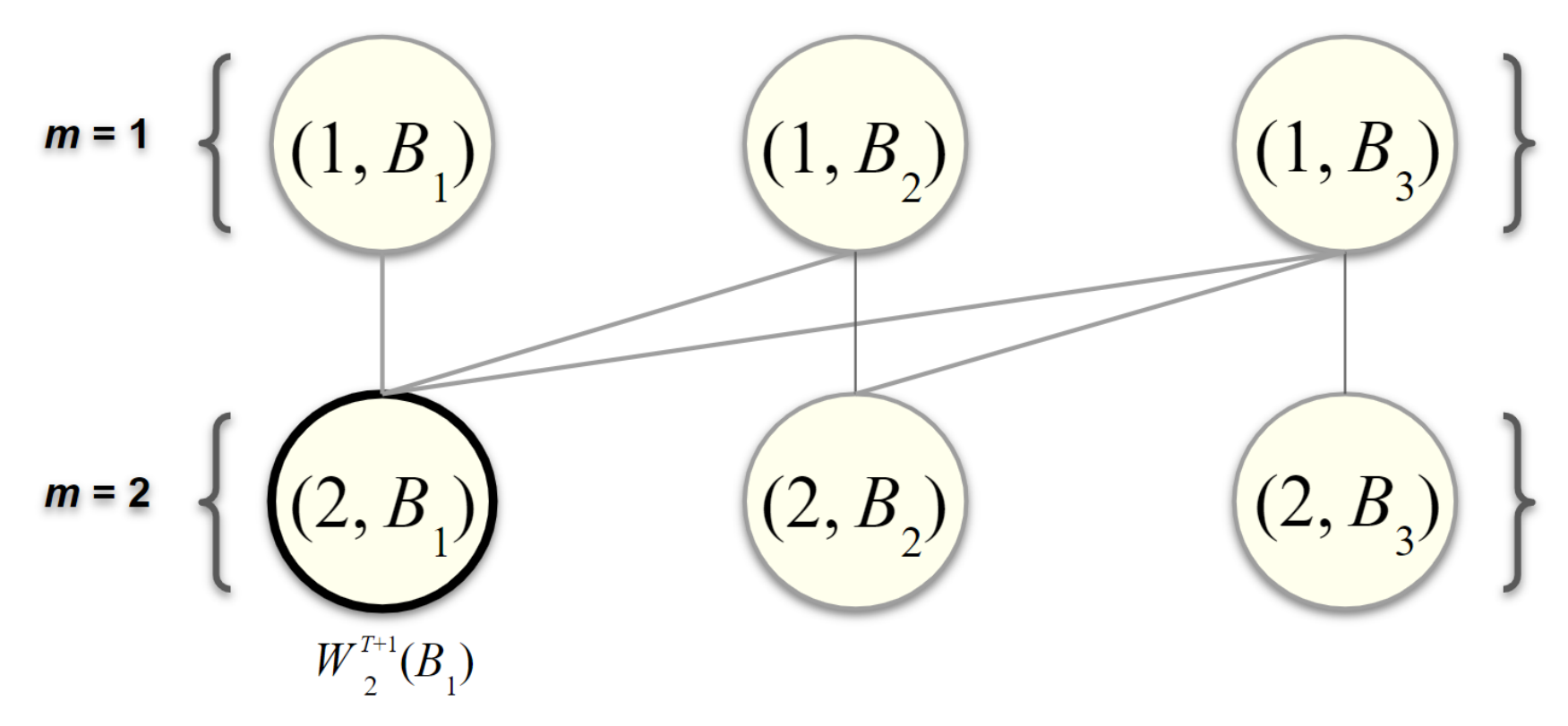}
    \caption{{DP graph for Problem (\ref{eq:offline})}: Bid optimization problem cast as a graph problem, for $\nitem = 2$ and $|\mathcal{B}| = 3$, $B_1< B_2< B_3$. Node $(\nitem, b)$, the node in the $\nitem$'th layer with bid value $b$, has  weight $W_\nitem^{T+1}(b)$.   \label{fig: valid_bids_graph}}
\end{figure}

Now, we extend our DP formulation for solving for the hindsight optimal to the online setting. We consider two feedback structures: (i) full information and (ii) bandit. In the full information setting, the agent's allocation and the values of $\bm{b}^\nround_-$ are revealed after the end of each round, whereas in the bandit setting, only the agent's allocation is revealed. As the bidder can only see information up to round $t$, we can define the DP graph at time $t$, as opposed to $T$, by setting $W_{\nitem}^t(b) = \sum_{\tau=1}^{\nround-1} \textbf{1}_{b \geq b^{\tau}_{-\nitem}} (v_\nitem - b)$. This allows us to construct algorithms for the full information and bandit settings by taking advantage of the structure of the DP graph to enhance efficiency and optimize storage of necessary computations.

\subsection{Online Setting: Decoupled Exponential Weights}

\label{sec: decoupled exp weights section}

In this section, we present our first algorithm for learning in the online setting. In particular, we construct a decoupled version of the Exponential Weights algorithm which circumvent the large space and time complexity of maintaining and updating the sampling distributions of all possible bid vectors. Our algorithms instead sequentially sample a singular bid value from each layer of our DP graph such that the probability of sampling a particular vector of bids $\bm{b}^\nround$ is precisely equal to the probability of the exponential weights algorithm selecting $\bm{b}^\nround$.\footnote{In Section~\ref{sec: time varying}, we present a generalization of our decoupled exponential weights algorithm to the setting with time varying valuations, where the valuations are drawn from some known, finite support distribution $F_{\bm{v}}$.} 
 
In the following sections, we describe our algorithm for both full information and bandit settings. It's important to note that while our decoupled exponential weights algorithm achieves regret suboptimal by a factor of \(O(\sqrt{M})\), we introduce an alternative regret-optimal algorithm based on Online Mirror Descent (OMD) in a subsequent section. Despite this, the decoupled exponential weights algorithm remains practical, as it avoids the need to solve a convex optimization problem at each time step \(t \in [T]\). 

\subsubsection{Full Information Setting}\label{sec:full}

Now let us focus on learning optimal bidding in an online fashion with full information feedback. One straightforward approach in this context is to apply the exponential weights algorithm \citep{DBLP:journals/iandc/LittlestoneW94} to the entire set of bid vectors. This algorithm guarantees per-round rewards within the range of $[-\Nitem, \Nitem]$. However, the challenge lies in the exponentially large bid space $\mathcal{B}^{+\Nitem}$. Tracking and updating weights for all possible bid vectors naively would lead to a non-polynomial time and space complexity. Although this approach achieves a small regret of $O(\Nitem^\frac{3}{2} \sqrt{T \log |\mathcal{B}|})$, we need a more efficient solution with a polynomial time and space complexity.

To do so, we leverage the DP scheme developed in Section \ref{sec:offline}. By utilizing the DP graph and the information it provides about bid vector utilities, we can effectively mimic the exponential weights algorithm without explicitly tracking weights for every bid vector. 
In Algorithm~\ref{alg: Decoupled Exponential Weights}, instead of 
 associating weights to each  possible bid vector, we  associate weights with each $(\nitem, b)$ pair for any $m\in [M]$ and $b\in \mathcal B$. These weights are then updated via variables $S_m^t(b)$ for $b\in \mathcal B, m\in[M]$, which are inspired by the DP scheme. For any round $t$, we define 
 \begin{align*} S^t_\nitem(b) &= \exp(\eta W_\nitem^\nround(b)) \sum_{\bm{b}'_{\nitem+1:\Nitem} \in \mathcal{B}^{+(\Nitem-\nitem)}, b'_{\nitem+1} \leq b'_\nitem = b} \exp(\eta \sum_{\nitem'=\nitem+1}^\Nitem W_{\nitem'}^\nround(b'_{\nitem'}))\\
 &= \exp(\eta W_\nitem^\nround(b)) \sum_{b' \in \mathcal{B}; b' \leq b} S_{\nitem+1}(b')\,.\end{align*}
 Here, $\eta> 0$ is the learning rate of the algorithm and we  recall that  $W_m^t(b)= \sum_{\tau=1}^{t-1} w^{\tau}_\nitem(b)$ is  cumulative utility gained across the first $t-1$ auctions from the winning the $\nitem$'th item with bid $b$, respectively. Computing $S_m^t(\cdot)$ is done in step $\textsc{Compute}-S_\nitem$ of the algorithm.  In step $\textsc{Sample}-\bm{b}$, the bid vector is then sampled  according to $S_m^t$'s subject to bid monotonicity.  To disallow overbidding, we initialize weights  $W^0_m(b) = -\infty$ for all $m \in [M], b \in \mathcal{B}$ such that $b > v_m$.

The concept of utility decoupling shares similarities with solutions used in combinatorial bandits, tabular reinforcement learning \citep{CMAB2013, OREPS2013}, and problems such as shortest path algorithms involving weight pushing or path kernels \citep{PathKernel2003, Hedging2010}. These methods are employed to solve variants of the shortest path or maximum weight path problems, where costs or weights are associated with edges rather than nodes. By exploiting the graph structure and the linearity of utilities with respect to the weights of each edge, these algorithms efficiently compute path weights based on edge weights, similar to how our algorithm computes path weights based on node weights. In addition to investigating a fundamentally different problem,   the key distinction is that our approach  considers weights associated with nodes instead of edges. In our setting, the reward associated with selecting bid $b'$ in slot $\nitem+1$ is independent of selecting bid $b \geq b'$ in slot $\nitem$. This allows us to get an improved regret bound and save a factor of $|\mathcal{B}|$ in terms of time and space complexity. Instead of storing and updating weights for $O(\Nitem |\mathcal{B}|^2)$ possible $(\nitem, b, b')$ slot-value-next value triplets, we only need to handle $O(\Nitem |\mathcal{B}|)$ possible $(\nitem, b)$ unit-bid pairs. The following statement is the main result of this section. 

\begin{theorem}[Decoupled Exponential Weights: Full Information] \label{thm:full}
    With $\eta = \Theta(\sqrt{\frac{\log |\mathcal{B}|}{MT}})$, Algorithm \ref{alg: Decoupled Exponential Weights} achieves (discretized) regret $O(\Nitem^\frac{3}{2} \sqrt{ \Nround \log |\mathcal{B}|})$, with total time and space complexity polynomial in $\Nitem$, $|\mathcal{B}|$, and $\Nround$. {\color{black}Optimizing for discretization error of $O(\frac{MT}{|\mathcal{B}|})$ from restricting the bid space to $\mathcal{B}$, we obtain a continuous regret of $O(\Nitem^\frac{3}{2} \sqrt{\Nround \log \Nround})$.}
\end{theorem}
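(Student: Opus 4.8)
The plan is to establish that Algorithm~\ref{alg: Decoupled Exponential Weights} is an \emph{exact simulation} of the vanilla exponential weights (Hedge) algorithm run over the exponentially large action set $\mathcal{B}^{+\Nitem}$, and then invoke the standard Hedge regret bound together with the correct reward range. The crux is to show that the sequential, layer-by-layer sampling using the $S_\nitem^\nround(\cdot)$ quantities produces each bid vector $\bm{b}$ with exactly the Hedge probability $\propto \exp(\eta \sum_{\nitem=1}^{\Nitem} W_\nitem^\nround(b_\nitem))$ over the monotone, individually-rational vectors. I would do this in four steps.

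\textbf{Step 1 (sampling correctness).} First I would prove by backward induction on the layer index $\nitem$ that $S_\nitem^\nround(b)$ equals the sum, over all valid monotone completions $\bm{b}'_{\nitem+1:\Nitem}$ with $b'_{\nitem+1}\le b$, of $\exp\!\big(\eta\sum_{\nitem'=\nitem}^{\Nitem} W_{\nitem'}^\nround(b'_{\nitem'})\big)$ with $b'_\nitem=b$. The recursion $S_\nitem^\nround(b)=\exp(\eta W_\nitem^\nround(b))\sum_{b'\le b}S_{\nitem+1}^\nround(b')$ is precisely the partition-function factorization, and the $W^0_\nitem(b)=-\infty$ initialization for $b>v_\nitem$ zeroes out the weight of any overbidding vector. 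Given this, the telescoping conditional-sampling scheme in step $\textsc{Sample}-\bm{b}$ yields $\prob[\bm{b}^\nround=\bm{b}]=\exp(\eta\sum_\nitem W_\nitem^\nround(b_\nitem))/Z^\nround$ where $Z^\nround=\sum_{b\le\max(\mathcal{B})}S_1^\nround(b)$ is the global Hedge normalizer; this identifies the algorithm with Hedge on $\mathcal{B}^{+\Nitem}$.

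\textbf{Step 2 (Hedge regret with the correct range).} With the simulation established, I would apply the standard Hedge guarantee over the (at most $|\mathcal{B}|^\Nitem$) actions. Using the decomposition $\sum_\nround\mu_n^\nround(\bm{b})=\sum_\nitem W_\nitem^{\Nround+1}(b_\nitem)$ from Equation~\eqref{def: def mu w W}, per-round rewards lie in $[-\Nitem,\Nitem]$, so the reward range is $2\Nitem$. The Hedge bound gives regret $\lesssim \frac{\log(|\mathcal{B}|^\Nitem)}{\eta}+\eta\Nitem^2 T=\frac{\Nitem\log|\mathcal{B}|}{\eta}+\eta\Nitem^2 T$. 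Optimizing over $\eta$ yields $\eta=\Theta(\sqrt{\log|\mathcal{B}|/(\Nitem T)})$ and discretized regret $O(\Nitem^{3/2}\sqrt{T\log|\mathcal{B}|})$, matching the theorem. I would note carefully that the $\log$ of the action-set cardinality contributes the extra $\sqrt{\Nitem}$ beyond the single-unit case, which is the source of the claimed sub-optimality by $\sqrt{\Nitem}$.

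\textbf{Step 3 (complexity) and Step 4 (discretization error).} For complexity I would observe each round requires one forward/backward sweep computing $S_\nitem^\nround$ and the sampling, giving $O(\Nitem|\mathcal{B}|)$ per round via prefix sums, hence $O(\Nitem|\mathcal{B}|\Nround)$ time and $O(\Nitem|\mathcal{B}|)$ space. Finally, to convert discretized regret into continuous regret, I would bound the gap between the benchmark $\max_{\bm{b}\in[0,1]^{+\Nitem}}$ and $\max_{\bm{b}\in\mathcal{B}^{+\Nitem}}$ by rounding each optimal continuous bid down to the nearest grid point; since utility is $\Nitem$-Lipschitz in the bid vector (at most $\Nitem$ units, each contributing a term Lipschitz in $b_\nitem$), a uniform grid of width $1/|\mathcal{B}|$ incurs discretization error $O(\Nitem T/|\mathcal{B}|)$. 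Adding this to $O(\Nitem^{3/2}\sqrt{T\log|\mathcal{B}|})$ and optimizing $|\mathcal{B}|\asymp\sqrt{T}$ yields $O(\Nitem^{3/2}\sqrt{T\log T})$ continuous regret.

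\textbf{Main obstacle.} The hardest and most delicate part is Step~1: verifying that the decoupled $S_\nitem^\nround$ recursion, combined with the monotonicity constraint $b'_{\nitem+1}\le b'_\nitem$ enforced during sampling, reproduces the \emph{exact} Hedge distribution over monotone vectors rather than some approximation. The subtlety is that the constraint couples adjacent layers, so I must confirm that conditioning on the already-sampled $b_\nitem$ and restricting the next layer to $b_{\nitem+1}\le b_\nitem$ leaves precisely the correctly normalized conditional distribution, with no double-counting or missing mass from the monotonicity and IR restrictions. This is what makes the node-weight (as opposed to edge-weight) structure usable and is the technical heart of the argument.
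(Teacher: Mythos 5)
Your Steps 1--3 follow the paper's own proof essentially verbatim: the paper likewise proves by backward induction that the recursive sampler reproduces the exact Hedge distribution $\propto\exp\left(\eta\sum_{m}W_m^t(b_m)\right)$ over monotone, individually rational vectors (its ``Part 1''), then runs the standard second-order potential argument with $\log\Phi^0\le M\log|\mathcal{B}|$ and per-round squared reward at most $M^2$, giving $\eta^{-1}M\log|\mathcal{B}|+\eta M^2T$, the choice $\eta=\Theta(\sqrt{\log|\mathcal{B}|/(MT)})$, and the stated $O(M^{3/2}\sqrt{T\log|\mathcal{B}|})$ (its Lemma~\ref{lem:regret_bound} specialized to true rewards in ``Part 5''). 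The complexity accounting is also identical.

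The genuine flaw is in your Step 4. The per-round utility $\mu_n^t(\bm{b})=\sum_{m}(v_m-b_m)\textbf{1}_{b_m\ge b^t_{-m}}$ is \emph{not} Lipschitz in $\bm{b}$: each indicator jumps by $v_m-b^t_{-m}$ at $b_m=b^t_{-m}$. Rounding the continuous optimizer \emph{down} to the grid can therefore push a bid below a competitor's bid and forfeit that unit entirely, costing $\Theta(1)$ per unit per round rather than $O(1/|\mathcal{B}|)$; competing bids sitting just above grid points make this loss $\Omega(MT)$, which would destroy the continuous-regret claim. The paper instead rounds each coordinate of $\bm{b}^*$ \emph{up} to the next grid point: the allocation can only weakly increase (the winning indicator is monotone in the bid), any newly won unit contributes nonnegative utility, and the payment on each unit rises by at most $1/|\mathcal{B}|$, which is what legitimately yields the $O(MT/|\mathcal{B}|)$ discretization error. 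With that fix your balancing goes through (the paper takes $|\mathcal{B}|=\Theta(\sqrt{T/M})$, though $|\mathcal{B}|\asymp\sqrt{T}$ also gives $O(M^{3/2}\sqrt{T\log T})$ since the error term is then dominated).
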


Note that both the time and space complexity scale polynomially in $\Nitem$ and $|\mathcal{B}|$. {\color{black}For the special case of $M=1$ corresponding to the standard first price auction, we obtain the regret of $\tilde{O}(\sqrt{\Nround})$.}

 \begin{algorithm}[t]
 \footnotesize
	\KwIn{Learning rate $0<\eta < \frac{1}{M}$, $\bm{v} \in [0, 1]^{+\Nitem}$.}
	\KwOut{The aggregate utility $\sum_{\nround=1}^\Nround \mu_n^\nround(\bm{b}^{\nround})$}
	$W_\nitem^0(b) \gets 0$ for all $\nitem \in [\Nitem], b \in \mathcal{B}$ such that $b \leq v_m$; else $W_\nitem^0(b) \gets -\infty$\;
        $b_{0}^t \gets \max \mathcal B$, and $S_{M+1}^t (\min \mathcal{B})=1$ for any $t\in[T]$\;
 
	\For{$\nround \in [1,\ldots,\Nround]$:}{
            \textbf{Recursively Computing Exponentially Weighted Partial Utilities $\bm{S}^t$}\;
            \textbf{for} $m \in [M,\ldots,1], b \in \mathcal{B}: S^t_\nitem(b) \gets \exp(\eta W_\nitem^\nround(b)) \sum_{b' \leq b} S_{\nitem + 1}^\nround(b')$. \label{eq: compute s} \hspace{0mm} $\backslash \backslash$ $\textsc{Compute}-S_\nitem$\;
        \textbf{Determining the Bid Vector $\bm{b}^\nround$ Recursively}\;
        \textbf{for} $m \in [1,\ldots,M], b \leq b_{m-1}^t: b_\nitem^\nround \gets b$ with probability $ \frac{S^t_\nitem(b)}{\sum_{b' \leq b_{\nitem-1}^t} S^t_{\nitem}(b')};$ \hspace{1mm} $\backslash \backslash$ $\textsc{Sample}-\bm{b}$\;
            Observe $\bm{b}^{\nround}_-$ and receive reward $\mu_n^\nround(\bm{b}^{\nround})$\;
        \textbf{Update Weight Estimates} \;
        \textbf{for} $\nitem \in [\Nitem], b \in \mathcal{B}: W^{\nround+1}_{\nitem}(b) \gets W^{\nround}_{\nitem}(b) +  (v_\nitem - b)\textbf{1}_{b \geq b^t_{-m}}$ if $b \leq v_m$; else $W^{\nround+1}_{\nitem}(b) \gets -\infty$\;
        }

        \textbf{Return} $\sum_{\nround=1}^\Nround \mu_n^{\nround}(\bm{b}^{\nround})$
	\caption{\textsc{Decoupled Exponential Weights - Full Information}}
	\label{alg: Decoupled Exponential Weights}
\end{algorithm}

\subsubsection{Bandit Feedback Setting}

\label{sec: decoupled exp weights, bandit feedback}

We extend Algorithm~\ref{alg: Decoupled Exponential Weights} for the bandit feedback setting. In the bandit feedback setting, the bidder's allocation and utility are not available for all possible bid vectors, unlike in the full information setting. Instead, the agent only observes their utility for the submitted bid vector. To handle this, we use inverse probability weighted (IPW) node weight estimates $\widehat{w}^t_m(b)$ instead of the node weights $w^t_m(b)$ in Algorithm~\ref{alg: Decoupled Exponential Weights}. This adaptation results in a regret of $O(M^{\frac{3}{2}}\sqrt{|\mathcal{B}| T \log |\mathcal{B}|})$, as shown in Theorem~\ref{thm:decoupled exp - bandit feedback}. This regret includes an additional factor of $\sqrt{|\mathcal{B}|}$ compared to the full information setting.

The structure of Algorithm \ref{alg: Decoupled Exponential Weights - Path Kernels} is similar to that of Algorithm~\ref{alg: Decoupled Exponential Weights}. Both algorithms maintain node weight estimates, 
compute the sum of exponentiated partial bid vector estimated utilities recursively, and sample bids for each unit recursively proportional to these summed exponentiated utilities.
Specifically, Algorithm \ref{alg: Decoupled Exponential Weights - Path Kernels} samples bid vectors with probabilities proportional to the sum of the cumulative \textit{estimated} utility $\widehat{W}_{\nitem}^\nround(b) = \sum_{\tau=1}^{t-1} \widehat{w}_{\nitem}^\tau(b)$ over each unit-bid value pair, where   $\widehat{w}_{m}^\nround(b) = 1 - \frac{1 - (v_m - b)\textbf{1}_{b \geq b_{-m}^{\nround}}}{q_m^\nround(b)}\textbf{1}_{b_m^\nround = b}$ and $q_m^\nround(b)$ is the (unconditional) probability that bid $b$ is chosen for unit $m$ at time $t$.

Note that this bid vector utility estimator is a slightly different estimator {than the one} used in the standard $\textsc{Exp3}$ algorithm (See Chapter 11 of \cite{Lattimore2020}). In particular, the standard IPW estimator $\widehat{w}_m^\nround(b) = \frac{v_m-b}{q_m^\nround(b)} \textbf{1}_{b = b_m^t > b_{-m}^t}$, while unbiased, can be unboundedly large when $q_m^\nround(b)$ approaches 0, whereas our proposed estimator is bounded above by 1. As a consequence, we have that $\widehat{\mu}_n^\nround(\bm{b})$ is upper bounded by $M$, and therefore $ \eta \widehat{\mu}^\nround(\bm{b}) = \eta \sum_{m=1}^M \widehat{w}_m^\nround(b)$ is upper bounded by 1 for $\eta < \frac{1}{M}$, which we crucially use in the proof. 

 \begin{algorithm}[t]
 \footnotesize
	\KwIn{Learning rate $0 < \eta < \frac{1}{M}$, $\bm{v} \in [0, 1]^{+\Nitem}$}
	\KwOut{The aggregate utility $\sum_{\nround=1}^\Nround \mu_n^\nround(\bm{b}^{\nround})$}
	$\widehat{W}_\nitem^0(b) \gets 0$ for all $\nitem \in [\Nitem], b \in \mathcal{B}$ such that $b \leq v_\nitem$; else $\widehat{W}_\nitem^0(b) \gets -\infty$.\;
        $b_{0}^t \gets \max \mathcal B$, and $\widehat{S}_{M+1}^t (\min \mathcal{B})=1$ for any $t\in[T]$\;
 
	\For{$\nround \in [1,\ldots,\Nround]$:}{
            \textbf{Recursively Computing Exponentially Weighted Partial Utilities $\bm{S}^t$}\;
            \textbf{for} $m \in [M,\ldots,1], b \in \mathcal{B}: \widehat{S}^t_\nitem(b) \gets \exp(\eta \widehat{W}_\nitem^\nround(b)) \sum_{b' \leq b} \widehat{S}_{\nitem + 1}^\nround(b')$ \hspace{0mm} $\backslash \backslash$ $\textsc{Compute}-\widehat{S}_\nitem$\;
        \textbf{Determining the Bid Vector $\bm{b}^\nround$ Recursively}\;
        \textbf{for} $m \in [1,\ldots,M], b \leq b_{m-1}^t: b_\nitem^\nround \gets b$ with probability $\frac{\widehat{S}^t_\nitem(b)}{\sum_{b' \leq b_{\nitem-1}^t} \widehat{S}^t_{\nitem}(b')}; $ \hspace{1mm} $\backslash \backslash$ $\textsc{Sample}-\bm{b}$\;
        Observe $\bm{b}^{\nround}_-$ and receive reward $\mu_n^\nround(\bm{b}^{\nround})$\;
        \textbf{Recursively Computing Probability Measure $\bm{q}$}\;
        $q^t_1(b) \gets \frac{\widehat{S}^\nround_m(b)}{\sum_{b' \in \mathcal{B}} \widehat{S}^\nround_m(b')}$ for all $b \in \mathcal{B}$\;
        \textbf{for} $m \in [2,\ldots,M], b \in \mathcal{B}: q_\nitem^\nround(b) \gets \sum_{b' \geq b} \frac{q_{\nitem-1}^t(b')\widehat{S}^\nround_{\nitem}(b)}{\sum_{b" \geq b'} \widehat{S}^\nround_\nitem(b")}$ for all $b \in \mathcal{B}$\;
        \textbf{Update Weight Estimates}\;
        \textbf{for} $m \in [M], b \in \mathcal{B}: \widehat{W}^{\nround+1}_{\nitem}(b) \gets \widehat{W}^{\nround}_{\nitem}(b) + (1 - \frac{1 - (v_m - b)\textbf{1}_{b \geq b^t_m}}{q^t_m(b)} \textbf{1}_{b^t_m = b})$ if $b \leq v_m$; else $\widehat{W}^{\nround+1}_{\nitem}(b) \gets -\infty$\;
        }
        \textbf{Return} $\sum_{\nround=1}^\Nround \mu_n^{\nround}(\bm{b}^{\nround})$
	\caption{\textsc{Decoupled Exponential Weights - Bandit Feedback}}
	\label{alg: Decoupled Exponential Weights - Path Kernels}
\end{algorithm}

    The primary difference in the implementation of Algorithm~\ref{alg: Decoupled Exponential Weights - Path Kernels} as compared to Algorithm~\ref{alg: Decoupled Exponential Weights} is that we require additional steps in order to obtain unbiased node weight estimates $\widehat{W}^{\nround+1}_{\nitem}(b) = \sum_{\tau=1}^{t} \widehat{w}_m^\tau(b)$ which we compute using an IPW estimator. In order to do this, we must compute $q_m^t(b)$---the  probabilities of selecting bid $b$ at slot $m$.

\begin{theorem}[Decoupled Exponential Weights: Bandit Feedback] \label{thm:decoupled exp - bandit feedback}
    With $\eta = \Theta(\sqrt{\frac{\log |\mathcal{B}|}{M|\mathcal{B}|T}})$ such that $\eta < \frac{1}{M}$, Algorithm \ref{alg: Decoupled Exponential Weights - Path Kernels} achieves (discretized) regret $O(\Nitem^{\frac{3}{2}} \sqrt{ |\mathcal{{B}|}\Nround \log |\mathcal{B}|})$, with total time and space complexity polynomial in $\Nitem$, $|\mathcal{B}|$, and $\Nround$. {\color{black}Optimizing for discretization error of order $O(\frac{MT}{|\mathcal{B}|})$ from restricting the bid space to $\mathcal{B}$, we obtain a continuous regret of $O(M^{\frac{4}{3}}T^{\frac{2}{3}} \sqrt{\log \Nround})$.}
\end{theorem}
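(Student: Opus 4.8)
The plan is to reduce the analysis to a standard \textsc{Exp3}-style argument carried out (virtually) over the exponentially large space of monotone, individually rational bid vectors $\mathcal{B}^{+M}$, and then to exploit the decoupled structure to control the variance term efficiently. First I would argue that Algorithm~\ref{alg: Decoupled Exponential Weights - Path Kernels} is an exact but time/space-efficient implementation of exponential weights on $\mathcal{B}^{+M}$ with estimated cumulative utilities. Concretely, the recursion defining $\widehat{S}^t_m$ aggregates the exponentiated estimated utility $\exp(\eta\sum_{m}\widehat{W}^t_m(b_m))$ over all valid continuations, so that sequential sampling proportional to $\widehat{S}^t_m$ subject to monotonicity reproduces exactly the joint law $p^t(\bm{b})\propto\exp(\eta\sum_{m}\widehat{W}^t_m(b_m))$. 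This is the same decoupling identity established in the proof of Theorem~\ref{thm:full}, now with the IPW estimates $\widehat{W}$ replacing $W$; the $q^t_m$ recursion merely reads off the marginals of $p^t$ required by the estimator.

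Second, I would verify the two properties the shifted estimator is engineered for. \emph{Unbiasedness}: conditioning on the history, $\mathbb{E}_t[\widehat{w}^t_m(b)] = 1 - \frac{1 - (v_m-b)\mathbf{1}_{b\ge b^t_{-m}}}{q^t_m(b)}\,q^t_m(b) = (v_m-b)\mathbf{1}_{b\ge b^t_{-m}} = w^t_m(b)$, since $b^t_{-m}$ is fixed given the history (the adversary is adaptive but cannot see the current round's randomness); summing over $m$ gives $\mathbb{E}_t[\widehat{\mu}^t(\bm{b})]=\mu^t(\bm{b})$ for every fixed $\bm{b}$. \emph{Upper boundedness}: each $\widehat{w}^t_m(b)\le 1$, hence $\widehat{\mu}^t(\bm{b})\le M$ and $\eta\widehat{\mu}^t(\bm{b})\le 1$ for $\eta<1/M$, which is precisely what licenses the inequality $e^x\le 1+x+x^2$ (valid for $x\le1$) in the potential-function step. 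With these in hand the standard exponential-weights telescoping argument yields, for any fixed comparator $\bm{b}^\star\in\mathcal{B}^{+M}$,
\begin{align*}
\textsc{Regret}_{\mathcal{B}} \;\le\; \frac{\log|\mathcal{B}^{+M}|}{\eta} + \eta\sum_{t=1}^{T}\mathbb{E}\Big[\textstyle\sum_{\bm{b}}p^t(\bm{b})\big(\widehat{\mu}^t(\bm{b})\big)^2\Big],
\end{align*}
where unbiasedness collapses the first-order terms to the true regret and $\log|\mathcal{B}^{+M}|\le M\log|\mathcal{B}|$.

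The crux, and the step I expect to be the main obstacle, is bounding the second-order term. I would first apply Cauchy--Schwarz, $(\widehat{\mu}^t(\bm{b}))^2=(\sum_m\widehat{w}^t_m(b_m))^2\le M\sum_m(\widehat{w}^t_m(b_m))^2$, reducing to per-unit quantities $\sum_{b}q^t_m(b)\,\mathbb{E}_t[(\widehat{w}^t_m(b))^2]$. A direct computation, writing $c_b=1-(v_m-b)\mathbf{1}_{b\ge b^t_{-m}}\in[0,1]$, gives $\mathbb{E}_t[(\widehat{w}^t_m(b))^2]=1-2c_b+c_b^2/q^t_m(b)$, so that $\sum_b q^t_m(b)\,\mathbb{E}_t[(\widehat{w}^t_m(b))^2]=\sum_b q^t_m(b)(1-2c_b)+\sum_b c_b^2\le 1+|\mathcal{B}|$. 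The term $\sum_b c_b^2\le|\mathcal{B}|$, which appears precisely because the shifted estimator equals $1$ on every \emph{unplayed} bid, is the source of the extra $|\mathcal{B}|$ factor and hence of the $\sqrt{|\mathcal{B}|}$ in the bandit rate. Combining, the second-order term is $O(M^2|\mathcal{B}|)$ per round, giving $\textsc{Regret}_{\mathcal{B}}\le O(M\log|\mathcal{B}|/\eta)+O(\eta M^2|\mathcal{B}|T)$.

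Finally, optimizing $\eta$ balances the two terms at $\eta=\Theta(\sqrt{\log|\mathcal{B}|/(M|\mathcal{B}|T)})$, yielding the claimed discretized regret $O(M^{3/2}\sqrt{|\mathcal{B}|T\log|\mathcal{B}|})$; the stated time and space complexity follow from the $O(M|\mathcal{B}|)$-size tables and the per-round recursions for $\widehat{S}$, $\bm{q}$, and $\widehat{W}$, each polynomial in $M,|\mathcal{B}|,T$. For the continuous regret I would add the discretization error of a uniform grid: rounding the hindsight-optimal continuous bid on each unit up to the nearest grid point preserves monotonicity and individual rationality and costs at most the grid spacing $O(1/|\mathcal{B}|)$ per unit per round (each unit's hindsight objective $W^{T+1}_m(\cdot)$ is maximized at a competing-bid value, so rounding up keeps every won round while inflating the payment by at most $1/|\mathcal{B}|$), for a total of $O(MT/|\mathcal{B}|)$. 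Optimizing over $|\mathcal{B}|$, setting $|\mathcal{B}|=\Theta((T/M)^{1/3})$ balances $M^{3/2}\sqrt{|\mathcal{B}|T\log|\mathcal{B}|}$ against $MT/|\mathcal{B}|$, both becoming $O(M^{4/3}T^{2/3}\sqrt{\log T})$.
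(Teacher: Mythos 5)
Your proposal is correct and follows essentially the same route as the paper's proof: establishing that the recursive $\widehat{S}$ sampling exactly implements exponential weights over $\mathcal{B}^{+M}$, verifying unbiasedness and the $\widehat{\mu}^t(\bm{b})\le M$ bound for the shifted IPW estimator, running the standard potential argument, bounding the second moment via Cauchy--Schwarz and the per-unit computation that yields the extra $|\mathcal{B}|$ factor, and then balancing with the $O(MT/|\mathcal{B}|)$ discretization error at $|\mathcal{B}|=\Theta((T/M)^{1/3})$. The only cosmetic difference is that you bound $\sum_b q^t_m(b)\,\mathbb{E}_t[(\widehat{w}^t_m(b))^2]$ directly by $1+|\mathcal{B}|$ whereas the paper uses the slightly cruder $\mathbb{E}[\widehat{w}^t_m(b)^2]\le 2/q^t_m(b)$; both give the same order.
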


{\color{black}For the case of $M=1$---the standard first price auction---the (continuous) regret scales with $\tilde{O}(T^{\frac{2}{3}})$, which matches the (tight) regret bound of \cite{ContextBanditsCrossLearning2019} for the bandit setting.}

\subsection{Online Setting: Online Mirror Descent} 
\label{sec:bandit}

We propose our second online learning algorithm which achieves faster regret rates than our decoupled exponential weights algorithms at the cost of additional computation. Instead of mimicking the exponential weights algorithm, we reformulate the problem as online linear optimization over node probabilities in our DP graph. We solve for these probabilities using OMD and construct a corresponding policy that sequentially samples bids based on these probabilities.

 \begin{algorithm}[t]
 \footnotesize
	\KwIn{Learning rate $\eta > 0$, Valuation $\bm{v} \in [0, 1]^{+\Nitem}$ } 
	\KwOut{The aggregate utility $\sum_{\nround=1}^\Nround \mu_n^\nround(\bm{b}^\nround)$.}
	$\pi_0((m, b), b') \gets \frac{1}{|\{b" \in \mathcal{B}: b" \leq b, b" \leq v_{m+1}\}|}$ for all $\nitem \in [\Nitem], b \geq b' \in \mathcal{B}, b' \leq v_{m+1}$. Let $\bm{q}^0 \in [\Nitem] \times \mathcal{B} \to [0, 1]$ be the corresponding unit-bid value pair occupancy measure\;
	\For{$\nround \in [\Nround]$:}{
            \textbf{Determining the Bid Vector $\bm {b}^t$ recursively.} Set $b_1$ to $b \in \mathcal{B}$ with probability  $q^t_1(b)$\;
            \textbf{for} $m \in [1,\ldots,M-1], b \in \mathcal{B}: b_{m+1} \gets b$ with probability $\pi^t((m, b_\nitem), b)$\;
            Receive reward $\mu^\nround_n(\bm{b}^\nround) = \sum_{\nitem=1}^\Nitem w_\nitem^\nround(b^\nround_\nitem)$ and observe $w_\nitem^\nround(b^\nround_\nitem)$ where $w_\nitem^\nround(b) = (v_\nitem - b)\textbf{1}_{b \geq b^\nround_{-\nitem}}$\;
            \textbf{Update Reward Estimates}\;
            \textbf{for} $\nitem \in [\Nitem], b \in \mathcal{B}: \widehat{w}_\nitem^\nround(b) \gets \frac{w_\nitem^\nround(b)}{q^{\nround-1}_\nitem(b)} \textbf{1}_{b = b^{\nround}_\nitem}$ if \emph{Bandit Feedback}, $\widehat{w}_\nitem^\nround(b) \gets w_\nitem^\nround(b)$ if \emph{Full Information}\;
            \textbf{Determining Probability Measure $\bm{q}^t$ over any unit-bid value pair $(m, b)$.} Set \vspace{-2mm}
            \begin{align}
                \bm{q}^\nround \gets \text{argmin}_{\bm{q} \in \mathcal{Q}} \eta\langle \bm{q}, -\widehat{\bm{w}}^\nround\rangle + D(\bm{q} || \bm{q}^{\nround-1}) \label{eq:q_t}\,
            \end{align} 
            where $\mathcal{Q}$ is as in Equation ~\eqref{eq:Q} and $D(\bm{q} || \bm{q}') = \sum_{\nitem \in [\Nitem], b \in \mathcal{B}} q_\nitem(b)\log \frac{q_\nitem(b)}{q'_\nitem(b)} - (q_\nitem(b) - q'_\nitem(b))$.
            
            \textbf{Convert $\bm{q}^t$ to Policy $\bm{\pi}^t$.}
            Compute any feasible solution $\bm{\pi}^t$ to constraints $q^t_m(b) = \sum_{b' \geq b} q^t_m(b') \pi^t((m-1, b'), b)$ and $\sum_{b" \leq b} \pi^t((m, b), b") = 1$ for all $m \in [M], b \in \mathcal{B}$.
        }
        \textbf{Return $\sum_{\nround=1}^\Nround \mu_n^\nround(\bm{b}^\nround)$.} 
	\caption{\textsc{OMD - Bid Optimization in Multi-Unit Pay as Bid Auctions}}
	\label{alg: OMD}
\end{algorithm}

\textbf{Algorithm Ideas. }Recall that $\bm{q}: [M] \times \mathcal{B} \to [0, 1]$ denotes the node probabilities; i.e., $q_m^t(b)$ is the probability of selecting bid $b$ for unit $m$ at round $t$. Utilizing PAB's reward structure, we can rewrite the aggregate utility---and consequently, the regret---as a function of $\{\bm{q}^t, \bm{w}^t\}_{t \in [T]}$
\begin{align}
    \mathbb{E}_{\bm{b} \sim \bm{\pi}}\left[ \sum_{t=1}^T \mu^{\nround}_n(\bm{b}) \right] = \sum_{t=1}^T \sum_{\nitem=1}^\Nitem \mathbb{E}_{b_\nitem \sim \bm{q}^t_\nitem}\left[ w_\nitem^\nround(b_\nitem) \right] = \sum_{t=1}^T \sum_{\nitem=1}^\Nitem  \sum_{b \in \mathcal{B}} q^t_\nitem(b) w_\nitem^\nround(b) = \sum_{t=1}^T \langle \bm{q}^t, \bm{w}^\nround \rangle\,.
    \label{eq: Loss of policy}
\end{align}
We make the important observation that the rewards are \textit{linear} in the $\bm{q}^t$'s, which immediately hints towards the use of online linear optimization (OLO). It is also straightforward to verify linearity\footnote{While showing that the linearity of set $\mathcal{Q}$ is trivial, we must also show for completeness that any valid policy over our DP graph $\bm{\pi}$ yields node probabilities $\bm{q} \in \mathcal{Q}$. We leave the proof of this in the online appendix, Lemma~\ref{lem: QSpace Equivalence}.} of the feasible region $\mathcal{Q}$ of $\bm{q}$, as they only contain non-negativity, probability mass, and stochastic dominance constraints:
\begin{align}
    \mathcal{Q} = \Big\{\bm{q} \in [0, 1]^{M \times |\mathcal{B}|}:  \sum_{b \in \mathcal{B}} q_m(b) = 1, \sum_{b \leq b'} q_{m+1}(b) \geq \sum_{b \leq b'} q_m(b) \forall b, b' \in \mathcal{B}, m \in [M]\Big\}\,.
    \label{eq:Q}
\end{align}
Now, we make a second important observation that there is no explicit dependence of the regret or feasible region $\mathcal{Q}$ on the policy $\bm{\pi}$ that generated the node probabilities $\bm{q}$. Indeed, one of our primary contributions is the insight that our algorithm's regret guarantee is agnostic w.r.t. the exact policy $\bm{\pi}$, conditional upon having the same node probabilities $\bm{q}$. Unlike existing algorithms that optimize over policies $\bm{\pi}^t$ over a graph's node-edge pairs \cite{OREPS2013, PathKernel2003} which is of size $O(M|\mathcal{B}|^2)$ in our setting, our approach of optimizing over node probabilities saves an additional factor of $|\mathcal{B}|$ in both regret and computation. Using negentropy regularization as in Equation \eqref{eq:q_t} of Algorithm~\ref{alg: OMD}, we achieve continuous regret \textit{linear} in $M$:

\begin{theorem}[Online Mirror Descent: Bandit Feedback] \label{thm: OMD}    With $\eta = \Theta(\sqrt{\frac{\log |\mathcal{B}|}{|\mathcal{B}|\Nround}})$, Algorithm~\ref{alg: OMD} achieves (discretized) regret $O(\Nitem \sqrt{|\mathcal{B}| \Nround \log |\mathcal{B}|})$, with total time and space complexity polynomial in $\Nitem$, $|\mathcal{B}|$, and $\Nround$. {\color{black}Optimizing for discretization error of order $O(\frac{MT}{|\mathcal{B}|})$ from restricting the bid space to $\mathcal{B}$, we obtain a continuous regret of $O(\Nitem \Nround^{\frac{2}{3}})$.} 
\end{theorem}

We defer the proof to the online appendix (arXiv:2307.15193v3). Under full information, we recover the regret of Algorithm \ref{alg: Decoupled Exponential Weights} by replacing the node weight estimates with the true weights.

\begin{corollary}[Online Mirror Descent: Full Information] \label{cor}
    With $\eta = \Theta(\sqrt{\frac{\log |\mathcal{B}|}{T}})$, Algorithm \ref{alg: OMD} achieves (discretized) regret $O(\Nitem \sqrt{ \Nround \log |\mathcal{B}|})$, with total time and space complexity polynomial in $\Nitem$, $|\mathcal{B}|$, and $\Nround$. {\color{black}Optimizing for discretization error of order $O(\frac{MT}{|\mathcal{B}|})$ from restricting the bid space to $\mathcal{B}$, we obtain a continuous regret of $O(\Nitem \sqrt{\Nround \log \Nround})$.}
\end{corollary}

\section{Regret Lower Bound}
\label{sec: lower bound}
{\color{black}In this section, we complement our upper bound results on regret with corresponding lower bounds. We introduce two bounds: for the full information setting, we establish a regret lower bound of \(\Omega(\Nitem \sqrt{\Nround})\), which is also a valid lower bound for the bandit setting. This lower bound demonstrates that even in a full-information, stochastic setting, continuous regret should scale linearly with \(\Nitem\), reflecting the linear dependency observed in our OMD-based algorithms (Algorithm \ref{alg: OMD}) that achieve regret \(O(\Nitem \sqrt{\Nround \log \Nround})\). Furthermore, our regret lower bound aligns with the continuous regret upper bound for the OMD algorithm in the full information setting, up to logarithmic factors. This matching bound indicates  that no algorithm, with or without discretization, can achieve smaller regret. For the bandit setting, we additionally present a regret lower bound of \(\Omega(\Nitem^{\frac{2}{3}} \Nround^{\frac{2}{3}})\), matching the regret lower bounds of our two algorithms in terms of \(\Nround\)'s dependence for the bandit setting.}

\begin{theorem}[Regret Lower Bound for the Full Information Setting]\label{thm:lower}
    In the full information setting, the continuous regret of learning to bid in PAB is $\Omega(\Nitem \sqrt{\Nround})$. This implies an equivalent lower bound in the bandit setting.
\end{theorem}

{\color{black}To establish lower bounds, we devise a stochastic adversary whose bidding distribution makes it challenging for the bidder to optimize bids, resulting in $\Omega(M\sqrt{T})$ regret. We define bid vectors $\bm{b}'_{-}$ and $\bm{b}"_{-}$ and construct two adversary bid distributions $F$ and $G$. Under $F$, $\prob(\bm{b}_-^\nround = \bm{b}'_{-}) = \frac{1}{2} + \delta$, and under $G$, $\prob(\bm{b}_{-}^\nround = \bm{b}'_{-}) = \frac{1}{2} - \delta$. We analyze the expected utilities for bidding vectors under $F$ and $G$ and derive expressions for expected utilities. By evaluating these expressions, we compute a per-step incurred regret of $\Theta(\Nitem \delta)$. Employing Le Cam's method and relating regret under $(F+G)/2$ to the Kullback-Leibler divergence between $F$ and $G$, we establish a regret bound of $\Omega(\Nitem \sqrt{\Nround})$. This bound holds when $\delta$ is chosen to be $\Omega(\frac{1}{\sqrt{T}})$. 

The main novelty in the proof is the structure of  the highest  competing bids $\bm{b}'_{-}$ and $\bm{b}"_{-}$, where we set
$\bm{b}'^{-} = (0,\ldots,0,c,\ldots,c)$, with $k$ values of 0 and $\Nitem - k$ values of $c$. Additionally, we define  $\bm{b}"^{-} = (c,\ldots,c)$ as the $\Nitem$-vector of bids at $c$. We then show that  for certain choices of $c$ and $k$, the expected utility-maximizing bid vector under ${\bm{b}_{-}^\nround} \sim F$ is $(0,\ldots,0)$ and under ${\bm{b}_{-}^\nround} \sim G$ is $(c,\ldots,c)$. This leads to the desired results after choosing $c$ and $k$ carefully.}

{\color{black}

\begin{theorem}[Regret Lower Bound for the Bandit Setting] \label{thm:lower_bound_bandit}
Assuming a time horizon of \(T\), a number of units \(M\), and that \(T \geq M^2\), the regret of learning to bid in PAB auctions is lower bounded by \(\Omega(M^{\frac{2}{3}} T^{\frac{2}{3}})\).
\end{theorem}

Theorems \ref{thm:lower} and \ref{thm:lower_bound_bandit}, which shown in Section \ref{sec:proof:lower:bandit}, together provide the regret lower bound of \(\Omega(\max\{ M^{\frac{2}{3}} T^{\frac{2}{3}}, M\sqrt{T} \})\) for the bandit setting. 
To construct the regret lower bound of \(\Omega(M^{\frac{2}{3}} T^{\frac{2}{3}})\) for the bandit setting, we begin with a base hypothesis that specifies a distribution over competing bids. Under this base hypothesis, the per-unit utility of unit \(m\) for each possible bid value is the same and equals some constant \(c_m\), which decays with \(m\). For each unit \(m\), we then construct \(O\left(\frac{|\mathcal{B}|}{M}\right)\) different hypotheses by perturbing the base distribution for one of the bids in a certain partition of \(\mathcal{B}\), denoted by \(\mathcal{B}_m\). See Figure \ref{fig:partitions} for an illustration of the partitions.

In fact, under the alternative hypothesis, the perturbed bid is optimal, and hence any learning algorithm should be able to identify the perturbed bid for any unit \(m\). The main challenge in such a construction is to ensure no ``cross-learning" occurs across units. That is, by learning the utility for a given unit, we do not learn anything about any other units. In our construction, the way we construct the alternative hypotheses ensures that their perturbations are disjoint across units, allowing us to prevent any cross-learning between units.
  
\begin{figure}
    \centering
    \includegraphics[width=0.8\textwidth]{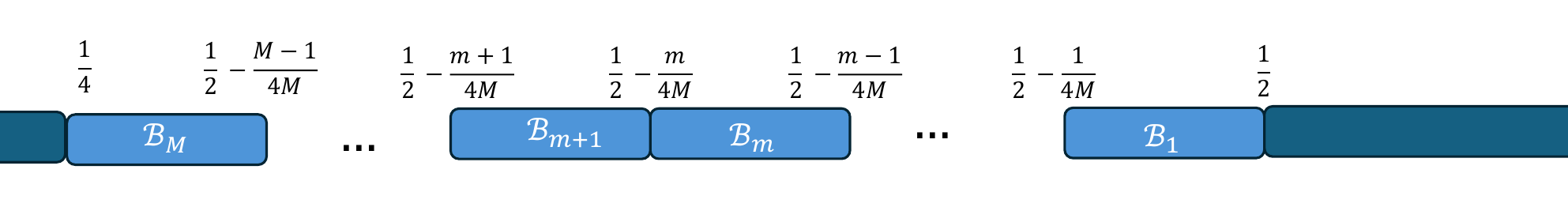}
    \caption{A schematic illustration of how we partition the set of bids in  $\mathcal B\subset [0,1]$ to construct alternative hypotheses. {Under one of these hypotheses denoted by $(j_1,\ldots,j_M)$, we set the marginal distribution of each $b_{-m}$ to be such that all bids $b \in \mathcal{B}_m$ yield $c_m$ expected utility, except for $b_m^{j_m}$---the $j_m$'th largest bid in $\mathcal{B}_m$---which yields $c_m + \gamma_m$ expected utility where $\gamma_m>0$.}}
    \label{fig:partitions}
\end{figure}
}

\section{Experiments}
\label{sec: experiments}

{\color{black} We run two experiments; the first exclusively to empirically verify the regret guarantees of our algorithms for PAB, and the second to compare between the market dynamics of no regret learning in PAB and uniform price auctions. In first experiment, we simulate the market dynamics induced by our decoupled exponential weights algorithms under full information (Algorithm~\ref{alg: Decoupled Exponential Weights}) and bandit feedback (Algorithm~\ref{alg: Decoupled Exponential Weights - Path Kernels}). We omit our OMD algorithm (Algorithm~\ref{alg: OMD}) due to the prohibitively large computational cost of running convex optimization algorithms for each bidder at each time step.\footnote{We do, however, compare the three algorithms in the stochastic setting in Section~\ref{sec:stochastic}.} Moreover, as we observe in our simulations (Figure \ref{fig:regret-plot}), the decoupled exponential weights algorithms achieve the same linear regret in $M$ as our OMD algorithm, despite the weaker theoretical regret guarantee of order $O(M^{3/2})$.

Our second suite of experiments explores the impact of $M, \mathcal{B}, N$, and $T$ on welfare, revenue, regret, and other statistics in the PAB and uniform price settings with full information. We run 50 trials of each possible parameterization for each experiment, where in each trial, all marginal valuations are drawn from a Unif(0, 1) distribution (and then sorted).

{In Section~\ref{sec: additional experiments}, we run additional experiments showing the evolution of the aforementioned quantities over the course of the market dynamics for both PAB and uniform price, as well as comparing the PAB and uniform price equilibria under bandit feedback.} \footnote{In our implementation of all algorithms, we impose that $b_{n, m}^t \leq v_{n, m}$ for all $t$, which prevents the learning dynamics from converging to an overbidding equilibrium, such as in \cite{inefficiency2013}.} }

{\color{black}\textbf{First Experiment: Regret as a Function of $M$ and $T$.} To better understand the impact of $M$ and $T$ on the continuous regret, we run the repeated auction setting with varying $T, M$. Recall that the discrete regret of PAB is of order $O(M^\frac{3}{2}\sqrt{T \log |\mathcal{B}|})$ and $O(M^\frac{3}{2}\sqrt{|\mathcal{B}| T \log |\mathcal{B}|})$ in the full information and bandit feedback settings respectively. As such, we set $|\mathcal{B}|$ to balance the discretization error and discrete regret: $|\mathcal{B}| = \max(5, \sqrt{\frac{T}{M}})$ and $\eta = \sqrt{\frac{\log|\mathcal{B}|}{MT}}$ (for the full information, PAB setting) and $|\mathcal{B}| = \max(5, M^\frac{1}{3}T^{\frac{1}{3}})$ and $\eta = \sqrt{\frac{\log|\mathcal{B}|}{M|\mathcal{B}|T}}$ (for the bandit, PAB setting). Recall that the continuous regret is the sum of the discrete regret and the corresponding discretization error. Here, we plot the $\log-\log$ sum of {discrete} regrets across all $N=3$ bidders, normalized by $NT$ to obtain the per-bidder, per-round average regret; see Figure \ref{fig:regret-plot}. Running a linear regression on the median, we find that the slopes w.r.t. $T$ of the median regret are approximately $-\frac{1}{2}$ and $-\frac{1}{3}$ for the full information and bandit settings, respectively. This confirms the $\sqrt{T}$ and $T^{\frac{2}{3}}$ regret dependence for Theorem~\ref{thm:full} and \ref{thm:decoupled exp - bandit feedback}. Interestingly, the slopes w.r.t. $M$ of the median regret is approximately equal to $1$, suggesting that the $M^\frac{3}{2}$ dependence in Theorems~\ref{thm:full} and \ref{thm:decoupled exp - bandit feedback} is not tight by a factor of $\sqrt{M}$. \footnote{For the values of $|\mathcal{B}|$ chosen to balance the discretization error (which is of order $O(\frac{MT}{|\mathcal{B}|})$) and \textit{theoretical} discretized regret upper bound, the discretization error would scale with $O(M^\frac{3}{2})$ and would eventually dominate the discretized regret plotted in Figure~\ref{fig:regret-plot}. However, after observing the $O(M)$ regret scaling, we scale $|\mathcal{B}| = \Theta(\sqrt{T})$ under full information (resp. $|\mathcal{B}| = \Theta(T^{\frac{1}{3}})$ under bandit) to improve the continuous regret to $O(M\sqrt{T})$ (resp. $O(MT^{\frac{2}{3}})$).} }

\begin{figure}
    \centering
    \includegraphics[scale=0.30, trim={0 0 130 55},clip]{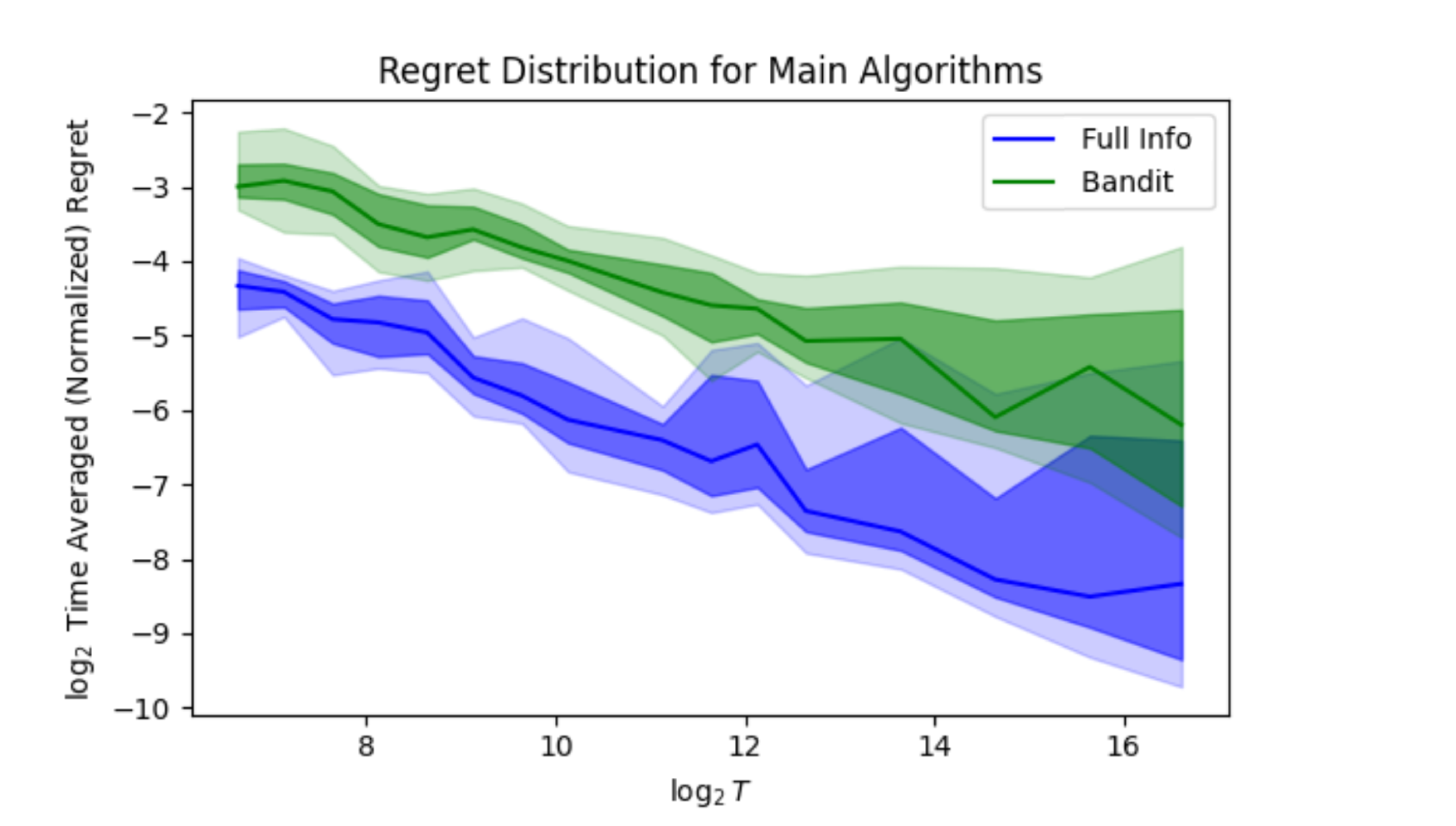}
    \includegraphics[scale=0.26, trim={0 0 0 55},clip]{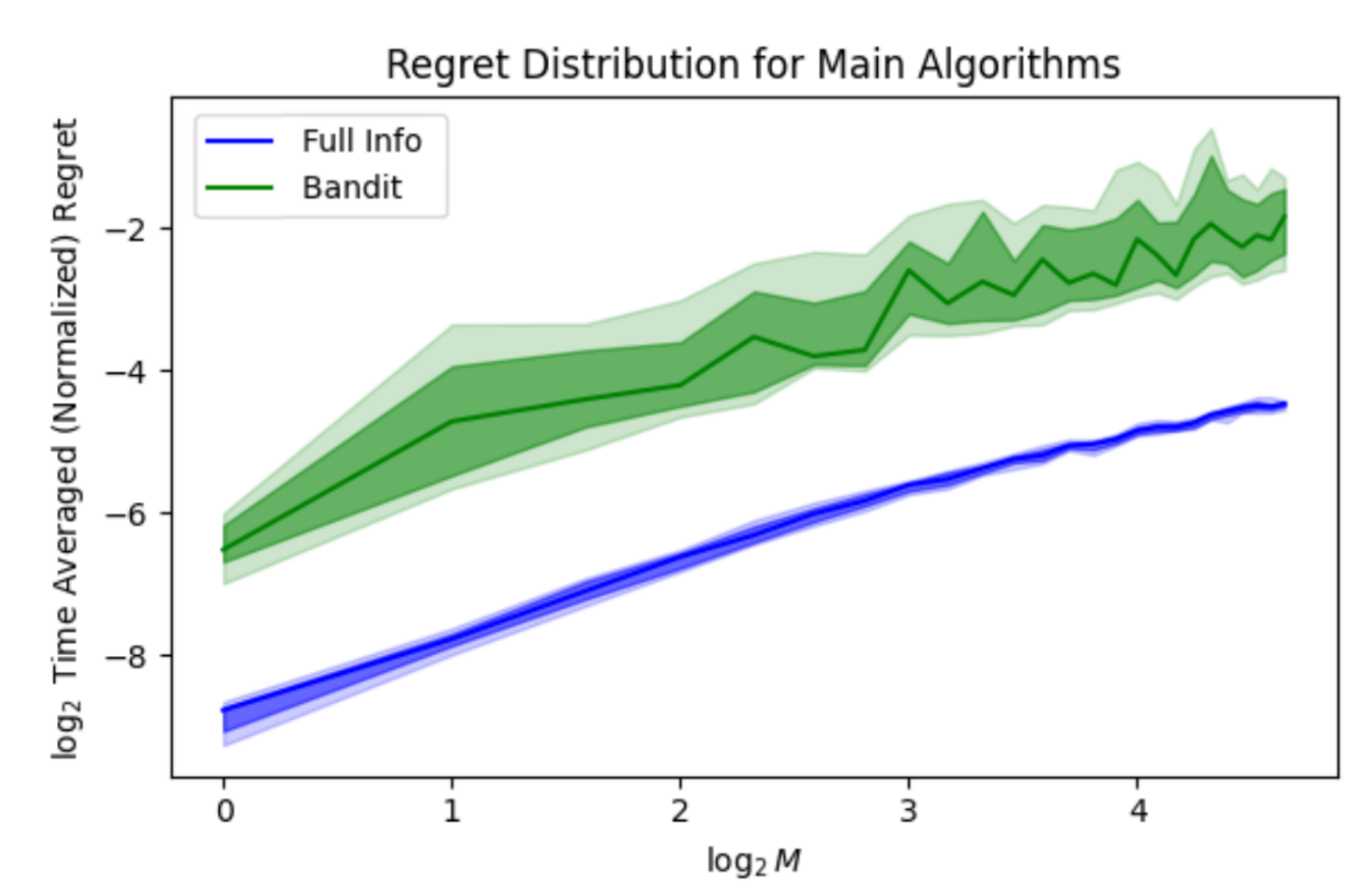}
    \caption{We compare the time-averaged aggregate discretized regret across all agents for the PAB, under both full information and bandit feedback, for both varying $T$ (left) and $M$ (right). When varying $T$ (resp. $M$), we fix $M = 5$ (resp. $T = 25000$) and derive $|\mathcal{B}|$ and $\eta$ according to Theorems~\ref{thm:full} and ~\ref{thm:decoupled exp - bandit feedback}.}
    \label{fig:regret-plot}
\end{figure}

{\color{black}\textbf{Second Experiment: Impact of $M, |\mathcal{B}|, N$ on Equilibrium Behavior for PAB and Uniform Price.} We compare the impacts of $M$, $|\mathcal{B}|$, and $N$ on welfare, revenue, regret, competitive ratio (which we will define shortly), and run-times\footnote{We ran all experiments on a Centos 6, 2x8 cores Intel Xeon 2.0 GHz, with 64 GB RAM.} of the market dynamic simulations for both PAB and uniform price. For the repeated auction parameters, we set $N \in \{3, 5\}$ bidders, $\overline{M} = M \in \{1, 5, 10\}$ items, with all valuations drawn from $\text{Unif}(0, 1)$ and then sorted. The bid space is either $\mathcal{B} =\{\frac{i}{10}\}_{i \in [10]}$ or $\mathcal{B} = \{\frac{i}{20}\}_{i \in [20]}$. We run the market dynamics using the decoupled exponential weights algorithm under full information (Algorithm~\ref{alg: Decoupled Exponential Weights} and the full information algorithm as described in \cite{brânzei2023online} with $T=10^5$, $\eta = \sqrt{\frac{\log |\mathcal{B}|}{MT}}$).}

\begin{table}[ht]
    \centering
    \begin{minipage}[t]{0.48\linewidth} 
        \caption{Summary of the Bidding Dynamics (PAB, Full Information, Algorithm~\ref{alg: Decoupled Exponential Weights})}
        \label{table: learning dynamics full info}
        \footnotesize
        \begin{tabular}{lcccc}
        \toprule
         & \multicolumn{2}{c}{$|\mathcal{B}| = 10$} & \multicolumn{2}{c}{$|\mathcal{B}| = 20$} \\
        \cmidrule(lr){2-3} \cmidrule(lr){4-5}
        Metric & $N = 3$ & $N = 5$ & $N = 3$ & $N = 5$ \\
        \midrule
        \multicolumn{5}{c}{\textbf{For \( M = 1 \)}} \\
        Regret  & .02/.03 & .02/.02 & .03/.04 & .02/.03 \\
        Welfare Gap & .13/4.9 & .34/3.2 & .29/1.3 & .5/2.3 \\
        Revenue Gap & 32/61 & 12/49 & 30/63 & 18/33 \\
        CR Gap & .31/2.4 & 1.2/3.5 & .36/2 & .86/3.9 \\
        Runtime (sec) & 206 & 253 & 299 & 571 \\
        \midrule
        \multicolumn{5}{c}{\textbf{For \( M = 5 \)}} \\
        Regret  & .17/.19 & .12/.13 & .21/.23 & .14/.16 \\
        Welfare Gap  & 1.1/3.2 & 1.1/3.1 & .92/2.3 & 1/2.2 \\
        Revenue Gap  & 35/47 & 19/26 & 32/46 & 18/27 \\
        $b_{(1)}/b_{(M)}$ & 1.19/1.34 & 1.12/1.17 & 1.07/1.18 & 1.06/1.08 \\
        CR Gap  & .35/.69 & .75/1.2 & .46/.76 & .93/1.4 \\
        Runtime (sec) & 1370 & 1833 & 2419 & 4122 \\
        \midrule
        \multicolumn{5}{c}{\textbf{For \( M = 10 \)}} \\
        Regret  & 0.37/0.40 & 0.24/0.26 & 0.45/0.49 & 0.29/0.32 \\
        Welfare Gap  & 1.8/3.7 & 1.3/2.7 & 1.3/2.3 & 1.1/1.8 \\
        Revenue Gap  & 31/40 & 18/22 & 30/39 & 17/22 \\
        $b_{(1)}/b_{(M)}$ & 1.16/1.23 & 1.14/1.16 & 1.09/1.16 & 1.06/1.08 \\
        CR Gap  & 0.46/0.64 & 0.84/1.14 & 0.59/0.73 & 1.00/1.49 \\
        Runtime (sec) & 2512 & 3811 & 4679 & 8330 \\
        \bottomrule
        \end{tabular}
    \end{minipage}\hfill
    \begin{minipage}[t]{0.48\linewidth} 
        \caption{Summary of the Bidding Dynamics (Uniform Price, Full Information \cite{brânzei2023online})}
        \label{table: learning dynamics full uniform}
        \footnotesize
        \begin{tabular}{lcccc}
        \toprule
         & \multicolumn{2}{c}{$|\mathcal{B}| = 10$} & \multicolumn{2}{c}{$|\mathcal{B}| = 20$} \\
        \cmidrule(lr){2-3} \cmidrule(lr){4-5}
        Metric & $N = 3$ & $N = 5$ & $N = 3$ & $N = 5$ \\
        \midrule
        \multicolumn{5}{c}{\textbf{For \( M = 1 \)}} \\
        Regret & .03/.04 &	.02/.03&	.03/.04&	.02/.03\\
        Welfare Gap  & .3/4 & .2/2.9 & .3/1.6 & .3/1.1 \\
        Revenue Gap  & 30/58 & 20/50 & 50/74 & 20/40 \\
        CR Gap & .4/1.2&	.8/1.4	&.6/2.7&	1/2.3\\
        Runtime (sec) & 181&	231	&316&	538 \\
        \midrule
        \multicolumn{5}{c}{\textbf{For \( M = 5 \)}} \\
        Regret & .19/.23	&.14/.18	&.21/.25	&.15/.17\\
        Welfare Gap  & .3/2.4 & .6/1.8 & .5/2.2 & .4/1.4 \\
        Revenue Gap  & 44/56 & 17/32 & 38/52 & 19/30 \\
        $b_{(1)}/b_{(M)}$ & 1.76/2.23 & 1.23/1.49 & 1.59/1.95 & 1.24/1.47 \\
        CR Gap & .3/.6&	.8/1.7&	.4/.6	&.8/1.3\\
        Runtime (sec) & 1932&	2850&	6976&	10883 \\
        \midrule
        \multicolumn{5}{c}{\textbf{For \( M = 10 \)}} \\
        Regret & .49/.66	&.41/1.3&	.5/.6	&.4/.5\\
        Welfare Gap  & .3/1.5 & .5/1.1 & 1/2.6 & .3/.6 \\
        Revenue Gap  & 37/57 & 19/26 & 40/50 & 19/23 \\
        $b_{(1)}/b_{(M)}$ & 1.86/2.39 & 1.33/1.42 & 1.81/2.17 & 1.28/1.36 \\
        CR Gap & .4/.9	&1.2/4.6&	.5/.7&	1.2/2.0\\
        Runtime (sec) & 3509&	6132&	13118&	20913\\
        \bottomrule
        \end{tabular}
    \end{minipage}
\end{table}

{\color{black}We report our results in Tables~\ref{table: learning dynamics full info} and ~\ref{table: learning dynamics full uniform}. In these tables, the regret is defined as the sum of {discrete} regrets across all bidders, normalized by $NT$, the number of bidders times the number of rounds. The welfare gap and revenue gap rows indicate the difference between the maximum welfare and the time-averaged welfare and revenue, respectively, normalized by the maximum welfare defined as the sum of the $M$ largest valuations. The ratio $b_{(1)}/b_{(M)}$ in the tables represents the ratio of the largest to the smallest winning bid, which is undefined for $M = 1$. This ratio is computed by taking the mean of the per-round bid ratios over the final 10\% of the time steps, as we are only concerned with long run bid convergence. The CR Gap in the tables represents the time-averaged competitive ratio subtracted from 1; that is, it is 1 minus the sum over the difference of each bidder's realized and hindsight optimal utilities, normalized by the sum over all bidders' hindsight optimal utility. Each entry of Tables \ref{table: learning dynamics full info} and \ref{table: learning dynamics full uniform} (except those regarding the runtime) denotes the median value across 50 trials with $T = 10^5$. The first number in each entry corresponds to the median, whereas the second corresponds to the 90th percentile for the first four rows. The runtime row denotes the median duration each trial was run for in seconds. Our main observations are as follows:}
{\color{black}
\begin{enumerate}
    
    \item \textbf{Regret Grows with $M, |\mathcal{B}|$.} Consistent with the regret guarantees from the aforementioned theorems, as well as Figure~\ref{fig:regret-plot}, the regret increases as a function of $M$ and $|\mathcal{B}|$. Intuitively, there are $O(M|\mathcal{B}|)$ weights that need to be learned in our algorithm, thus, requiring more exploration for larger $M$ and $|\mathcal{B}|$ leading to larger regret. Interestingly, the regret for uniform price only increases in $M$, despite requiring $O(M|\mathcal{B}|^2)$ weights to learn.

    \item \textbf{Welfare Worsens in $M$, Improves in $|\mathcal{B}|$ for PAB; not for Uniform Price. } For PAB, the welfare gap increases for larger $M$, but for lower $|\mathcal{B}|$, though this trend is not present under uniform price. This phenomenon occurs due to the deterministic tie-breaking rule. As $M$ increases and $|\mathcal{B}|$ decreases, there is an increased likelihood in PAB that low tie-break priority bidders with winning valuations will lose items to high tie-break priority bidders with marginally lower valuations. When $|\mathcal{B}|$ is small, the increase in price will disincentivize these bidders from trying to win these items by increasing their bid. In contrast, under uniform price, only the lowest winning bid affects payments, so these low tie-break priority bidders can still win their higher value items. As such, we find that the uniform price marginally outperforms PAB in welfare.

    \item \textbf{Revenue Improves in $|\mathcal{B}|$ and $N$. } As $|\mathcal{B}|$ increases, the auctioneer can extract larger payments for bidders, as seen in the decrease in revenue gap for larger $|\mathcal{B}|$, though it is more pronounced under PAB and less so for uniform price. The revenue gaps, in both PAB and uniform price, benefit from this increase $N$, as the increased competition decreases the impact of strategic bid shading. We find that PAB consistently outperforms uniform price in terms of revenue, and more importantly, to a larger degree than which uniform price outperforms PAB in welfare.
    
    \item \textbf{The Largest to Smallest Winning Bid Ratio Approximately Converges in PAB; not Uniform Price.} We see that the ratio $b_{(1)}/b_{(M)}$ is around 1.1 or 1.2 for PAB, but around 1.4 to 2 for uniform price, with this ratio decreasing in $N$ and $|\mathcal{B}|$. As $N$ increases, the increased competition turns bidders into price takers, yielding a tightened spread on the winning bid values. As for $|\mathcal{B}|$, in the case of PAB, we notice that the bid ratio is close to 1, suggesting that the difference between the largest and smallest bid ratio is approximately one discretization factor $\delta = \frac{1}{|\mathcal{B}|}$. With the revenue gap being approximately $\frac{1}{3}$ and $\frac{1}{5}$ for $N = 3, 5$ respectively, the clearing price would have been around $\frac{2}{3}$ and $\frac{4}{5}$. Taking the ratio of the two closest values in $\mathcal{B}$ to these yields 1.17 and 1.14 for $|\mathcal{B}| = 10$, or 1.08 and 1.07 for $|\mathcal{B}| = 20$, which is consistent with our empirical findings. The effect of $|\mathcal{B}|$ is less pronounced, but still present for the uniform price auction.

    \item \textbf{Competitive Ratio grows in $N$ despite Regret Shrinking in $N$.} We observe that the regret is generally smaller for larger $N$. One justification for this phenomenon is that individual bidders have a smaller impact on the market clearing price, and thus, have less of an incentive to strategically shade their bids leading to faster convergence towards equilibrium behavior. Taking a broader view, this trend suggests a sub-linear dependence of social regret on $N$, offering a possible improvement over recent literature characterizing improved convergence rates for multi-agent games with more players \cite{syrgkanis2015fast, foster2016robustnessconvergence}. Perhaps more interestingly, the competitive ratio \textit{grows} in $N$, despite its similarity to regret, with the exception that it is normalized by the hindsight optimal utility. This suggests that this hindsight optimal utility---i.e., the optimal consumer surplus---decreases as $N$ increases, showing that the bidders become more of price takers as $N$ increases.

    \item \textbf{Run-time Increases in $M, N, |\mathcal{B}|$.} The run-time of our algorithms is approximately linear in each of $M$ and $|\mathcal{B}|$, as predicted in Section~\ref{sec: path kernels regret}. The algorithm for uniform price is approximately linear in $M$ and quadratic in $|\mathcal{B}|$, consistent with \cite{brânzei2023online}. As there are $N$ agents running the learning algorithm in each trial, the run-time scales linearly in $N$.
\end{enumerate}
}
\section{Extension: Time Varying Valuations}

\label{sec: time varying}
{\color{black}
We extend Algorithms \ref{alg: Decoupled Exponential Weights} and \ref{alg: Decoupled Exponential Weights - Path Kernels} to the time varying valuations setting. In particular, we assume that the valuations $\bm{v}$ are no longer fixed, and instead, in every round $t$, $\bm{v}^\nround$ is independently  drawn from some known distribution $F_{\bm{v}}$ with discrete, finite support $\mathcal{V}$. 
This contextual setting requires a  stronger  benchmark oracle in comparison to our original setup with a fixed valuation. The new benchmark oracle, which we will formalize shortly, possesses knowledge of the hindsight optimal bid vector for each context. That is, under this benchmark, we have the optimal mapping from any context (valuation vector) to an action (bid vector). 
Consequently, our current definitions of $\textsc{Regret}$ and $\textsc{Regret}_{\mathcal{B}}$ need to be updated to accommodate these contextual factors:
\begin{align}
\tag{Continuous Contextual Regret}
    \textsc{Regret}(F_{\bm{v}}) = \max_{\bm{b}: \mathcal{V} \to [0, 1]^{+\Nitem}} \sum_{\nround=1}^\Nround \mathbb{E}_{\bm{v} \sim F_{\bm{v}}}[\mu^\nround_n(\bm{b}(\bm{v}); \bm{v})] - \mathbb{E}\left[\sum_{\nround=1}^\Nround \mu^\nround_n(\bm{b}^\nround; \bm{v}^\nround)\right]\,.
\end{align}
Here, $\mu^\nround_n(\bm{b}; \bm{v})$ denotes the utility of bidder $n$ by submitting bid vector $\bm{b}$ with valuations $\bm{v}$ at round $t$ where the competing bids are $\bm{b}_{-}^{t}$. Observe that in the benchmark of $ \textsc{Regret}(F_{\bm{v}})$, i.e., $\max_{\bm{b}: \mathcal{V} \to [0, 1]^{+\Nitem}} \sum_{\nround=1}^\Nround \mathbb{E}_{\bm{v} \sim F_{\bm{v}}}[\mu^\nround_n(\bm{b}(\bm{v}); \bm{v})]$, we abuse notation and define valuation-to-bid vector mapping $\bm{b}: \mathcal{V} \to [0, 1]^{+M}$.  
We have an equivalent definition of discretized contextual regret:
\begin{align}
\tag{Discretized Contextual Regret}
    \textsc{Regret}_\mathcal{B}(F_{\bm{v}}) = \max_{\bm{b}:\mathcal {V}\to
    \mathcal{B}^{+\Nitem}} \sum_{\nround=1}^\Nround \mathbb{E}_{\bm{v} \sim F_{\bm{v}}}[\mu^\nround_n(\bm{b}(\bm{v}); \bm{v})] - \mathbb{E}\left[\sum_{\nround=1}^\Nround \mu^\nround_n(\bm{b}^\nround; \bm{v}^\nround)\right]\,.
\end{align}
An agent's goal is to minimize their contextual regret with respect to their valuation distribution $F_{\bm{v}}$. Using naive contextual bandit algorithms would lead to a large regret, as the regret of these algorithms  scales  with the square root of the number of contexts. However, we make an observation that we have \emph{complete cross-learning} over these contexts as in \cite{ContextBanditsCrossLearning2019}. That is,
 whenever the agents chooses bid $\bm b$ in round $t$ while having context/value $\bm v$ and receives reward $\mu_{n}^{t}({\bm b}; \mathbf v)$, they also learn the value of $\mu_{n}^{t}({\bm b}; \mathbf v')$ for any  contexts/values $\bm v'$
 This is because of the functional form of $\mu_n(\bm{b}; {\bm v}) = \sum_{\nitem=1}^{x_n(\bm{b}, {\bm b}_{-})} (v_{n, \nitem} - b_{n, \nitem})$.

As such, we borrow  from the results described in \cite{ContextBanditsCrossLearning2019}; specifically those explaining the cross-learning-across-contexts generalizations of the $\textsc{EXP3}$ algorithm in the stochastic contexts (valuations) and adversarial rewards setting (adversarial competing bids). 
We assume that the agent has access to their valuation distribution. Moreover,  as stated earlier, we assume that the support of this valuation distribution is finite; i.e., $|\mathcal{V}| < \infty$. This scenario occurs often in practice where bidders' valuations depend naturally on some natural events. For example, investors may prescribe a `low' or `high' value to certain assets depending on various market indices. 

We generalize the $\textsc{EXP3-CL}$ algorithm described in \cite{ContextBanditsCrossLearning2019} to our PAB setting, specifically Algorithm~\ref{alg: Decoupled Exponential Weights - Path Kernels}, and  achieve exactly the same regret rates as our non-contextual variants, albeit requiring an additional $O(|\mathcal{V}|)$ factor of memory and computation.

In order to make the generalization more clear, at a high level, the $\textsc{EXP3-CL}$ algorithm on a set of $K$ arms and $C$ contexts with full cross-learning constructs a reward estimator $\hat{r}(k; c) = \frac{r(k; c)}{\sum_{c}\prob(c)\prob(k^t = k | c^t = c)}\textbf{1}_{k^t = k}$ for each arm $k$ and context $c$ pair. Here, the term $\sum_{c}\prob(c)\prob(k^t = k | c^t = c)$ is the expected probability that arm $k^t = k$ was selected under context $c^t = c$, where in the summation we take expectation over the stochasticity of contexts $c$. This estimator mirrors that of standard $\textsc{EXP3}$ using the IPW estimator, except that the IPW is averaged over the context distribution. 

To apply this to our setting, we wish to mimic the behavior of the $\textsc{EXP3-CL}$ algorithm with our decoupled exponential weights algorithm. This can be done by running the $\textsc{EXP3-CL}$ estimator on all of the nodes $b \in \mathcal{B}$ within each layer $m \in [M]$. In particular, we use the following estimator  $\widehat{w}_m^t(b; \bm{v}) = 1 - \frac{1 - w_m^t(b; \bm{v})}{Q_m^t(b)} \textbf{1}_{b_m^t = b}$, where the  normalizer $Q_m^t(b) = \sum_{\bm{v} \in \mathcal{V}} \prob(\bm{v}^t = \bm{v}) q_m^t(b; \bm{v})$ in this estimator 
is the expected probability of selecting bid $b$, where the expectation is taken with respect to all valuation vectors $\bm{v} \in \mathcal{V}$. This procedure, formally described in Algorithm~\ref{alg: Decoupled Exponential Weights - Time Varying Known Finite} and analyzed in the online appendix (arXiv:2307.15193v3), yields the following regret upper bound:

\begin{theorem}[Time Varying Valuations - Decoupled Exponential Weights] \label{thm: time varying known finite}
    Under bandit feedback (resp. full information feedback), Algorithm~\ref{alg: Decoupled Exponential Weights - Time Varying Known Finite}, with appropriately chosen $\eta$, achieves contextual continuous regret $\textsc{Regret}(F_{\bm{v}})$ of order $O(\Nitem^\frac{4}{3} \Nround^{\frac{2}{3}} \sqrt{\log \Nround})$ (resp. $O(\Nitem^\frac{3}{2} \sqrt{\Nround \log \Nround})$ with total time time and space complexity polynomial in $M$, $|\mathcal{B}|$, $|\mathcal{V}|$, and $\Nround$.
\end{theorem}

\begin{algorithm}[t]
\footnotesize
	\KwIn{Learning rate $0 < \eta < \frac{1}{M}$, Valuation Distribution $F_{\bm{v}}$}
	\KwOut{The aggregate utility $\sum_{\nround=1}^\Nround \mu_n^\nround(\bm{b}^{\nround}; \bm{v}^\nround)$}
	$\widehat{W}_\nitem^0(b; \bm{v}) \gets 0$ for all $\nitem \in [\Nitem], b \in \mathcal{B}, \bm{v} \in \mathcal{V}$ such that $b \leq v_m$; else $\widehat{W}_\nitem^0(b; \bm{v}) \gets -\infty$.\; 
	\For{$\nround \in [1,\ldots,\Nround]$:}{
            \textbf{Observe Valuation Vector $\bm{v}^t \sim F_{\bm{v}}$}\;
            $b_{0}^t \gets \max \mathcal B$, and $\widehat{S}_{M+1}^t (\min \mathcal{B}; \bm{v}^t)=1$ for any $t\in[T]$\;
            \textbf{Recursively Computing Exponentially Weighted Partial Utilities $\bm{S}^t$}\;
            \textbf{for} $m \in [M,\ldots,1], b \in \mathcal{B}: \widehat{S}^t_\nitem(b; \bm{v}^t) \gets \exp(\eta \widehat{W}_\nitem^\nround(b; \bm{v}^t)) \sum_{b' \leq b} \widehat{S}_{\nitem + 1}^\nround(b'; \bm{v}^t)$ \hspace{0mm} $\backslash \backslash$ $\textsc{Compute}-\widehat{S}_\nitem$\;
        \textbf{Determining the Bid Vector $\bm{b}^\nround$ Recursively}\;
        \textbf{for} $m \in [1,\ldots,M], b \leq b_{m-1}^t: b_\nitem^\nround \gets b$ with probability $\frac{\widehat{S}^t_\nitem(b; \bm{v}^t)}{\sum_{b' \leq b_{\nitem-1}^t} \widehat{S}^t_{\nitem}(b'; \bm{v}^t)}; $ \hspace{1mm} $\backslash \backslash$ $\textsc{Sample}-\bm{b}$\;
        Observe $\bm{b}^{\nround}_-$ and receive reward $\mu_n^\nround(\bm{b}^{\nround}; \bm{v}^t)$\;
        $Q_m^t(b) \gets 0$ for all $m \in [M], b \in \mathcal{B}$\;
        \For{$\bm{v} \in \mathcal{V}$:}{
            \textbf{Recursively Computing Probability Measure $\bm{q}$ Under $\bm{v}\in {\mathcal V}$}\;
            $\widehat{S}^t_{M+1}(b; \bm{v}) \gets 1$ for all $m \in [M], b \in \mathcal{B}$\;
            \textbf{for} $m \in [M,\ldots,1], b \in \mathcal{B}: \widehat{S}^t_\nitem(b; \bm{v}) \gets \exp(\eta \widehat{W}_\nitem^\nround(b; \bm{v})) \sum_{b' \leq b} \widehat{S}_{\nitem + 1}^\nround(b'; \bm{v})$\;
            $q^t_1(b; \bm{v}) \gets \frac{\widehat{S}^\nround_m(b; \bm{v})}{\sum_{b' \in \mathcal{B}} \widehat{S}^\nround_m(b'; \bm{v})}$ for all $b \in \mathcal{B}$\;
            \textbf{for} $m \in [2,\ldots,M], b \in \mathcal{B}: q_\nitem^\nround(b; \bm{v}) \gets \sum_{b' \geq b} \frac{q_{\nitem-1}^t(b'; \bm{v})\widehat{S}^\nround_{\nitem}(b; \bm{v})}{\sum_{b" \geq b'} \widehat{S}^\nround_\nitem(b"; \bm{v})}$ for all $b \in \mathcal{B}$\;
            $Q_m^t(b) \gets Q_m^t(b) + \prob(\bm{v}^t = \bm{v})q_m^t(b; \bm{v})$
        }
        \textbf{Update Weight Estimates}\;
        \textbf{if} $\textsc{Bandit Feedback}$, \textbf{for} $m \in [M], b \in \mathcal{B}, \bm{v} \in \mathcal{V}$\; 
        $\widehat{W}^{\nround+1}_{\nitem}(b; \bm{v}) \gets \widehat{W}^{\nround}_{\nitem}(b; \bm{v}) + (1 - \frac{1 - (v - b)\textbf{1}_{b \geq b^t_m}}{Q_m^t(b)} \textbf{1}_{b^t_m = b})$ if $b \leq v$; else $\widehat{W}^{\nround+1}_{\nitem}(b; \bm{v}) \gets -\infty$\;
        \textbf{if} $\textsc{Full Information}$, \textbf{for} $m \in [M], b \in \mathcal{B}, \bm{v} \in \mathcal{V}$\;
        $\widehat{W}^{\nround+1}_{\nitem}(b; \bm{v}) \gets \widehat{W}^{\nround}_{\nitem}(b; \bm{v}) + (v - b)\textbf{1}_{b \geq b^t_m}$ if $b \leq v$; else $\widehat{W}^{\nround+1}_{\nitem}(b; \bm{v}) \gets -\infty$\;
        }
        \textbf{Return} $\sum_{\nround=1}^\Nround \mu_n^{\nround}(\bm{b}^{\nround}; \bm{v}^\nround)$
	\caption{\textsc{Decoupled $\textsc{EXP3-CL}$ - Time Varying Valuations}}
	\label{alg: Decoupled Exponential Weights - Time Varying Known Finite}
\end{algorithm}}
\section{Concluding Remarks}

We have provided low-regret learning algorithms for PAB auctions in the full information and bandit settings with corresponding polynomial time and space complexities. In particular, we utilize our DP formulation and its equivalent graph representation to decouple the utility associated with bidding $b_\nitem = b$ for all $\nitem \in [\Nitem], b \in \mathcal{B}$. We derived two algorithms, one that mimics the exponential weights algorithm and another based on OMD, both of which allowed us to achieve polynomial (in $\Nitem$, $|\mathcal{B}|$, and $\Nround$) regret upper bounds, as well as time and space complexities, despite the combinatorially large bid space.
Furthermore, our experimental simulations highlight the convergence of winning bids in PAB auctions, coupled with higher revenue generation compared to uniform price auctions. However, we note a slight lag in welfare compared to uniform price auctions. Our findings provide actionable guidance for auction design and bidder strategy in multi-unit auctions, contributing to a deeper understanding of bidding dynamics and equilibria.

There are several intriguing avenues for future research that can be explored based on the current work. A promising direction is to leverage the structure induced by bid monotonicity in PAB auctions. 
Recent advancements in a simpler single-unit setting have demonstrated the efficacy of cross-learning between bids under certain feedback structures \citep{OptimalNoRegretFPA2020, LearningBidOptimallyAdversarialFPA2020}. It would be intriguing to apply cross-learning techniques in our multi-unit setting and can explore whether they can enhance our regret bounds. Furthermore, inspired by our numerical results---where we show that the winning bids in PAB market dynamics converge to the same value---we can explore the design of online learning algorithms for the setting where bidders are restricted to a simplified bidding interface, wherein they are only allowed to submit a single  price and quantity for the units demanded rather than an entire vector of bids. 
Another potential research direction is to study conditions under which last iterate convergence for learning in PAB holds. As observed in Figure \ref{fig: bid cycling main body}, the learners can converge to a PNE, though proving this was only recently done by \cite{Deng_2022} for the class of \textit{mean-based} algorithms (defined in \cite{braverman2018noregretbuyer}) for the single unit, first price auction under certain conditions.

\section{Acknowledgments}

R.G. and N.G. were supported in part by the Young Investigator Program (YIP) Award from the Office of Naval Research (ONR) N00014-21-1-2776 and the MIT Research Support Award.

\bibliographystyle{ACM-Reference-Format}
\footnotesize{
\bibliography{ref.bib}
}
\newpage
\newpage 

\section{Appendix}
{\color{black}\subsection{Proof of Lemma \ref{lem: PNE uniform bidding}}\label{sec:proof:lem: PNE uniform bidding}
We prove this by contradiction. Suppose buyer \(n\) places any bid strictly greater than \(b_{(M)} + \delta\). Assume the bids of all other bidders are fixed. Any bid \(b\) by bidder \(n\) that equals or exceeds \(b_{(M)} + \delta\) still secures allocation, as \(b \geq b_{(M)} + \delta > b_{(M)}\), irrespective of tie-breaking. Therefore, bidder \(n\) can reduce all such bids to exactly \(b_{(M)} + \delta\) without losing any items, thereby lowering their total payment. Consequently, the largest bid cannot exceed \(b_{(M)} + \delta\).

\subsection{Proof of Lemma \ref{lem: near-uniform optimal bidding}}\label{sec:proof:lem: near-uniform optimal bidding}
    Consider any bid vector $\bm{b} \in \mathcal{B}^{+M}$. Assuming that bidder $n$'s  is allocated  $m$ under $\bm{b}$, then $b_m \geq \tilde{b}_{n,m}$, as $\tilde{b}_{n,m}$ is the smallest bid required for bidder $n$ to win the $m$'th item. Define $\bm{b}': b'_j = \min(\tilde{b}_{n,m}, b_j)$ which is $\bm{b}$ except setting the first $m$ bids to $\tilde{b}_m$.
    By monotonicity of $\bm{b}_{-n}$ and $\bm{b}$---i.e., $b_{-1} \leq \ldots \leq b_{-m} \leq \tilde{b}_{n,m} \leq b_m \leq \ldots \leq b_1$---the allocation is the same under $\bm{b}$ and under $\bm{b}': b'_j = \min(\tilde{b}_{n,m}, b_j)$. As the allocations are equal and the payment is smaller under $\bm{b}'$, which belongs to the set $\{\bm{b} \in \mathcal{B}^{+M}: b_j = (\tilde{b}_{n,m}) \forall j \in [m], b_{j} \leq v_j \forall j > m\}$, then $\bm{b}'$ must yield at least as much utility as $\bm{b}$. Thus, any optimal bid vector that is allocated $m$ units, conditioning on $\bm{b}_{-n}$ must have uniform $m$ highest bids at value $\tilde{b}_{n,m}$. Because we prohibit overbidding, we only consider $\bm{b}$ that win $m$ items for $m$ satisfying $\tilde{b}_{n, m} < v_m$.
    
{\color{black}
\subsection{Assumptions in Theorem \ref{thm: PNE existence}} \label{sec:discuss:assumption}
The validity of the PNE characterization in Theorem \ref{thm: PNE existence} depends on the premise that there is sufficient competition, ensuring that the clearing price $c$ remains constant even upon removing any single bidder. However, this assumption may not always hold true. To illustrate this, we initially provide an example where the assumption is met. However, a minor modification to the example leads to the failure of this assumption. In Figure~\ref{fig: bid cycling main body}, we illustrate how this impacts the convergence of our learning algorithms' final iterations.

In this figure, we plot the bid values over the course of the market dynamics induced by our full information decoupled exponential weights algorithm (Algorithm~\ref{alg: Decoupled Exponential Weights}) with $N=3$, $M=4$, $|\mathcal{B}|=10$, and $T=10^5$. In the left figure, we let $\bm{v}_1 = \bm{v}_2 = \bm{v}_3 = [1-\epsilon, 1-\epsilon]$, and thus there exists a PNE via satisfying the $c = c_{-n} = \lfloor 1-\epsilon \rfloor_{\delta=0.1} = 0.9$ condition from Theorem~\ref{thm: PNE existence}. Moreover, this PNE is characterized by all bidders submitting bids of $0.9$ for the first two units, which is precisely what the market dynamics converge to.\footnote{{We do not present theoretical guarantees on last iterate convergence for PAB. For a discussion in the $M=1$ unit setting, please refer to \cite{Deng_2022}.}} In the right figure, we assume that  bidder one only demands one unit at $1-\epsilon$, instead of two; that is $\bm v_1= [1-\epsilon, 0]$. The value of other bidders remain the same. Here, we do not satisfy the $c = c_{-n}$ condition as $ c_{-1} = 1-\epsilon \neq 0 = c_{-2} = c_{-3}$. In fact, one can verify that there exists no PNE in this auction and we observe cyclic bidding behavior from the 2nd and 3rd bidders.

To explain why the learning dynamics do not converge in cases where PNE is absent, we outline several observations:

\begin{enumerate}
    \item In the auction in Figure~\ref{fig: bid cycling main body} with no PNE, $\bm{v}_1 = [1-\epsilon, 0]$, so bidder 1 can submit at most one non-zero bid. Similarly, $\bm{v}_2 = \bm{v}_3 = [1-\epsilon, 1-\epsilon]$, so bidders 2 and 3 can submit at most two non-zero bids. Bidder 2 and bidder 3 can guarantee an allocation of at least 1 since among all other bidders, the demand is at most 3. However, they can guarantee an allocation of 2 if they can slightly outbid the 2nd largest bid among the other two bidders.
    \item In Figure~\ref{fig: bid cycling main body}, we observe that bidder 1's bid converges to 0.5. In contrast, bidder 2's bids cycle between 0.1 and 0.5, and bidder 3's bids cycle between 0 and 0.4. What happens is that bidder 3 will try to win one unit cheaply by bidding for both units at price $0$. Bidder 2 realizes that they can slightly outbid them at a price of 0.1, yielding utility $1.8 - 2\epsilon$. Then bidder 3 matches their bid, and because they are given tie-break priority, they win two units, yielding utility $1.8 - 2\epsilon$. Bidder 2, now only winning one unit, realizes that they can win 2 units by increasing their bid to 0.2, which bidder 3 then matches, and so on.
    \item Once bidder 2 submits two bids of 0.5, which yields a utility of $1 - 2\epsilon$, bidder 3 can either submit two bids of 0.5, which yields an allocation of 2 and a utility of $1 - 2\epsilon$, or they can submit a bid of 0, which only yields an allocation of 1 but a utility of $1 - \epsilon$. Thus, they submit bids of 0 as the best response to bidder 1 submitting a bid of 0 and bidder 2 submitting two bids of 0.5, after which the cycle repeats.
\end{enumerate}

\begin{figure}
    \centering
    \includegraphics[scale=0.5, trim={0, 0, 0, 0}, clip]{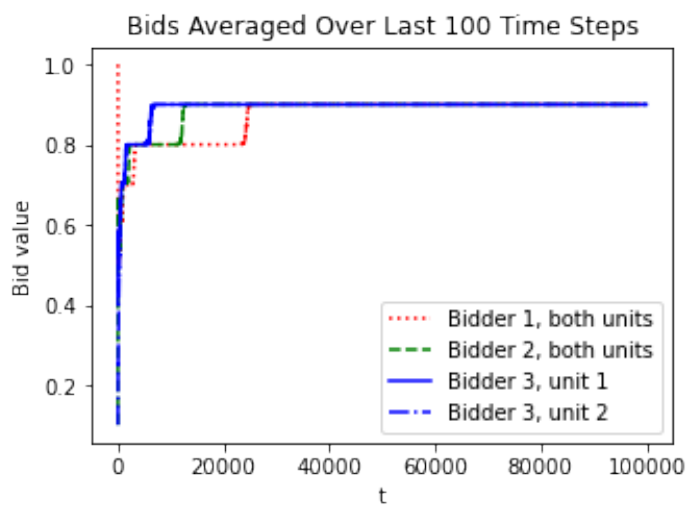}
    \includegraphics[scale=0.5, trim={0, 0, 0, 0}, clip]{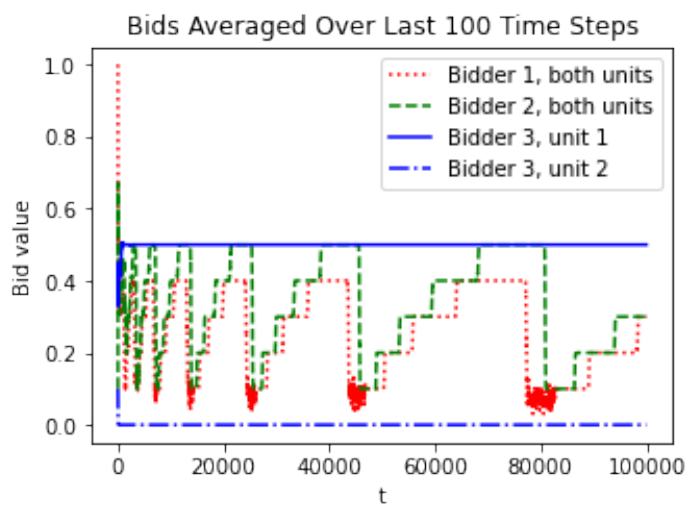}
    \caption{\textbf{Market Dynamics (Non) Last Iterate Convergence. } We plot the bid values over the course of the market dynamics induced by our full information decoupled exponential weights algorithm (Algorithm~\ref{alg: Decoupled Exponential Weights}) with $N=3,M=4,|\mathcal{B}|=10,T=10^5$. In the left figure, we let $\bm{v}_1 = \bm{v}_2 = \bm{v}_3 = [1-\epsilon, 1-\epsilon]$, and thus there exists a PNE via satisfying the $c = c_{-n} = \lfloor 1-\epsilon \rfloor_{\delta=0.1} = 0.9$ condition from Theorem~\ref{thm: PNE existence}. Moreover, this PNE is characterized by all bidders submitting bids of $0.9$ for the first two units, which is precisely what the market dynamics converge to. In the right figure, we assume that bidder $3$ demands one unit at $1-\epsilon$, instead of two. The value of other bidders remain the same. Here, we do not satisfy the $c = c_{-n}$ condition as $c = c_{-1} = 1-\epsilon \neq 0 = c_{-2} = c_{-3}$. In fact, one can verify that there exists no PNE in this auction and we observe cyclic bidding behavior from the 2nd and 3rd bidders.}
    \label{fig: bid cycling main body}
\end{figure}

Despite this fragility of the $c = c_{-n}$ assumption in Theorem~\ref{thm: PNE existence}, this is merely a sufficient condition and not necessary for the existence of the PNE. At its core, this $c = c_{-n}$ assumption reflects the requirement that strategic bid shading cannot offset a decrease in allocation with a decrease in payment. For example, if we instead set $v_{1,2} = 0.8+\epsilon$ in Figure~\ref{fig: bid cycling main body}, a PNE still exists at $\bm{b}_1 = [0.9, 0.8]$ and $\bm{b}_2 = \bm{b}_3 = [0.8, 0.8]$ despite $ c_{-1} = 1 -\epsilon \neq 0.8 + \epsilon = c_{-2} = c_{-3}$. For practical understanding, PNE existence is satisfied when bidders are price-takers with minimal influence over the market price---a condition typically met in markets where the demand each bidder $n$ places on units $M$ is substantially smaller than the overall supply $\overline{M}$, and the total supply is less than the aggregate demand, i.e., $\sum_{n \in [N]} \overline{M}_n \gg M$.
}
}

{\color{black}\subsection{Proof of Lemma \ref{lem: CCE non uniform winning}}
    Consider the following $M=2$ item, $N=2$ bidder auction where bidder 1 and 2's valuations are $[1 + \epsilon, 1 + \epsilon], [1, 0]$ respectively, for some small $\epsilon > 0$. Let $\mathcal{B} = \{0, \frac{1}{|\mathcal{B}|}, \frac{2}{|\mathcal{B}|},\ldots, 1\}$ be an even discretization of $[0,1]$ with the discretization factor of $\delta =  \frac{1}{|\mathcal{B}|}$,  
    and assume that ties are broken in favor of bidder 2. 
    In this setting, we can show the existence of CCEs where the winning bids are \textit{never} uniform, even up to a discretization factor.
    
    To that end, we solve a linear program using the linear constraints formulation of CCEs and maximize the probability of drawing a joint action profile with non-uniform winning bids. We first define $b_{(m)} = b_{(m)}(\bm{b}_1,\ldots,\bm{b}_N)$ as the $m$'th largest bid among bid vectors $\bm{b}_1,\ldots,\bm{b}_N$. Now, we maximize the probability that the gap between $b_{(1)}$ and $b_{(M)}$ is more than one discretization factor $\delta = \frac{1}{|\mathcal{B}|}$ apart by maximizing the objective function $\max \sum_{\bm{b}_1,\ldots,\bm{b}_n} \bm{p}(\bm{b}_1,\ldots,\bm{b}_n) \mathbf{1}_{b_{(1)} > b_{(M)} + \delta}$. Under CCE constraints, the objective function is equal to 1 for all $|\mathcal{B}| \in [5, 10]$, meaning that there exist CCEs entirely supported on joint bidding profiles which have winning bids more than $\delta$ apart. Under CE constraints, the objective function is increasing in $|\mathcal{B}|$ but stabilizes around $0.6$ at $|\mathcal{B}| = 8$. In other words,  when $|\mathcal{B}|$ grows large, there exists a CCE (resp. CE) under which there is a 1.0 (resp 0.6) probability over joint strategy profiles that violate the winning bid uniformity. $\blacksquare$
    }

\subsection{Proof of Theorem \ref{thm:offline}: Offline Bid Optimization Algorithm}
\label{sec: offline proof}

We give a proof of correctness of the offline bid optimization algorithm used to compute the hindsight optimal bid vector across $T$ rounds of PAB auctions. Our proof shows that the variables $U_m(b)$ are path weights of the optimal partial bid vector with weights $W_m^{T+1}(b)$. Thus, $U_1(b)$ is the optimal bid vector and $b^*_m$'s can be used to back out the optimal bid vector recursively in polynomial time.

\begin{proof}{Proof of Theorem \ref{thm:offline}}
   By definition, we have that $U_\nitem(b)$ is given by:
    \begin{align*}
        \max_{ b\ge b_\nitem\geq\ldots\geq b_\Nitem} \sum_{\nitem' = \nitem}^\Nitem W^{\Nround+1}_{\nitem'}(b_{\nitem'}) = \max_{ b\ge b_\nitem\geq\ldots\geq b_\Nitem}  W^{\Nround+1}_{\nitem}(b_\nitem ) +  \sum_{\nitem' = \nitem+1}^\Nitem W^{\Nround+1}_{\nitem'}(b_{\nitem'}) = \max_{ b' \leq b} W^{\Nround+1}_\nitem(b') + U_{\nitem+1}(b')\,.
    \end{align*}
    Since we have that $U_{\Nitem}(b ) = W^{\Nround+1}_{\Nitem} (b )$ trivially correct from the base case, and the optimality of $U_\nitem(b)$ follows from induction. Consequently, optimality of $b_\nitem^*$ follows from induction. The base case trivially holds as $b_1^* = \text{argmax}_{b \in \mathcal{B}} U_1(b)$. The recursive case also follows straightforwardly by definition of $bb^*_\nitem = \text{argmax}_{b \leq b^*_{\nitem-1}} U_\nitem(b)$. As $b^*_{\nitem-1}$ was optimal by the induction hypothesis, then $b_\nitem^*$ must also be optimal.
    
    We finish this proof by discussing the time and space complexity of Algorithm \ref{alg: Offline Full}.  {Table $U$ is of size $O(\Nitem |\mathcal{B}|)$, with each entry requiring taking a maximum over $O(|\mathcal{B}|)$ terms, yielding time and space complexities of $O(\Nitem |\mathcal{B}|^2)$ and $O(\Nitem|\mathcal{B}|)$ respectively.}

\end{proof}

\subsection{Proof of Theorems \ref{thm:full} and \ref{thm:decoupled exp - bandit feedback}: Decoupled Exponential Weights Algorithm}
\label{sec: path kernels regret}

\begin{proof}{}
    We give the proofs of correctness, complexity analysis, and regret analysis for the decoupled exponential weights algorithms for both the full information (Algorithm~\ref{alg: Decoupled Exponential Weights}) and the bandit setting (Algorithm~\ref{alg: Decoupled Exponential Weights - Path Kernels}). Our proof comes in 5 parts. We first prove correctness of the bandit version of our algorithm. In particular, we show that defining the node and bid vector utility estimates to be $\widehat{w}_m^\nround(b) = 1 - \frac{1-(v_m-b)1_{b \geq b_{-m}^\nround}}{q_m^t(b)}1_{b = b^\nround_m}$ and $\widehat{\mu}^\nround(\bm{b}) = \sum_{\nitem=1}^\Nitem \widehat{w}_\nitem^\nround(b_m)$, our algorithm samples bid vector $\bm{b}^\nround$ with probability proportional to $\exp(\eta \sum_{\tau=1}^{t-1} \widehat{\mu}^\tau(\bm{b}))$ via the same recursive sampling procedure as in Algorithm~\ref{alg: Decoupled Exponential Weights}. In the second part and third parts, we derive a corresponding regret upper bound and obtain the time and space complexities of our algorithm with bandit feedback. In the fourth part, we optimize the continuous regret w.r.t. the selection of $\mathcal{B}$. In the fifth part, we show how to extend our algorithm and results to the full information setting.

    \textbf{Part 1: Algorithm Correctness.} In this part of the proof, we argue that our choice of estimator is unbiased and that Algorithm~\ref{alg: Decoupled Exponential Weights - Path Kernels} samples bid vectors with the same probability that the exponential weights algorithm would have, given the same node utility estimates $\widehat{w}_m^t(b)$. To show unbiasedness of $\widehat{w}_m^t(b)$, we have:
    \begin{align*}
        \mathbb{E}\left[\widehat{w}_m^{t}(b)\right] = \mathbb{E}\left[1 - \frac{1-w^t_m(b)}{q^t_m(b)} \textbf{1}_{b^t_m = b}\right] = \mathbb{E}\left[1 - \frac{\textbf{1}_{b^t_m = b}}{q^t_m(b)} + \frac{\textbf{1}_{b^t_m = b} \cdot w^t_m(b)}{q^t_m(b)}\right] = w^t_m(b)\,.
    \end{align*}
    Now, it remains to show that our sampling procedure $\textsc{Sample}-\bm{b}$ w.r.t. $\widehat{S}^\nround_m$ indeed samples bid vectors $\bm{b}$ with the same probability as the exponential weights algorithm under weights $\widehat{\mu}^\nround_n(\bm{b})$. In particular, we want to show that our algorithm samples bid vectors $\bm{b}^\nround$ with probability proportional to $\exp(\eta \sum_{\tau=1}^{t-1}\widehat{\mu}_m^\tau(b_m))$ for any $m\in [M]$. This follows from analyzing the dynamic programming variables that represent the sum of exponentiated (estimated) partial bid vector utilities, $\widehat{S}$.

    In exponential weights, the bidder selects at round $\nround+1$ some action $\bm{b}$ with probability $P^\nround(\bm{b})$ proportional to $\sum_{\tau=1}^{\nround} \widehat{\mu}_n^{\nround}(b)$. Using our representation of $\widehat{\mu}_n^{\nround}(\bm{b})$ as a function $\widehat{w}_\nitem^{\tau}(b)$, we have:
    \begin{align*}
        P^{\nround}(\bm{b}) = \frac{\exp(\eta \sum_{\tau=1}^{t-1} \widehat{\mu}_n^{\tau}(\bm{b}))}{\sum_{\bm{b}' \in \mathcal{B}^{+\Nitem}} \exp(\eta \sum_{\tau=1}^{t-1} \widehat{\mu}_n^\tau(\bm{b}'))} = \frac{\exp(\eta \sum_{\nitem=1}^\Nitem \widehat{W}_\nitem^\nround(b_\nitem))}{\sum_{\bm{b}' \in \mathcal{B}^{+\Nitem}} \exp(\eta \sum_{\nitem=1}^\Nitem \widehat{W}_\nitem^\nround(b'_\nitem))}\,.
    \end{align*}
    Hence, we wish to construct a sampler that samples $\bm{b}$ with the above probability. Defining $b_0 = \max_{b \in \mathcal{B}} b$, we begin by decomposing the denominator as follows:
    \begin{align*}
        &\sum_{\bm{b} \in \mathcal{B}^{+\Nitem}} \exp(\eta \sum_{\nitem=1}^\Nitem \widehat{W}_\nitem^\nround(b_\nitem)) = \sum_{b_1 \in \mathcal{B}, b_1 \leq b_0} \sum_{b_2 \in \mathcal{B}, b_2 \leq b_1} \ldots \sum_{b_\Nitem \in \mathcal{B}, b_\Nitem \leq b_{\Nitem-1}} \exp(\eta \sum_{\nitem=1}^\Nitem \widehat{W}_\nitem^\nround(b_\nitem))\\
        &= \sum_{b_1 \in \mathcal{B}, b_1 \leq b_0} \exp(\eta \widehat{W}_1^\nround(b_1)) \sum_{b_2 \in \mathcal{B}, b_2 \leq b_1}\exp(\eta \widehat{W}_2^\nround(b_2)) \ldots \sum_{b_\Nitem \in \mathcal{B}, b_\Nitem \leq b_{\Nitem-1}} \exp(\eta \widehat{W}_\Nitem^\nround(b_\Nitem))\,.
    \end{align*}
   Recall a key object $\widehat{S}^t_\nitem(b)$, which is the sum of exponentially weighted utilities of partial bid vectors $\bm{b}'_{\nitem:\Nitem} \in \mathcal{B}^{+(\Nitem - \nitem + 1)}$ over slots $\nitem,\ldots,\Nitem$ subject to $b_\nitem = b$.
    \begin{align*}
        \widehat{S}^t_\nitem(b) = \exp(\eta \widehat{W}_\nitem^\nround(b)) \sum_{\bm{b}'_{\nitem+1:\Nitem}:b'_{\nitem+1} \leq b'_\nitem = b} \exp(\eta \sum_{\nitem'=\nitem+1}^\Nitem \widehat{W}_{\nitem'}^\nround(b'_{\nitem'})) = \exp(\eta \widehat{W}_\nitem^\nround(b)) \sum_{b' \in \mathcal{B}; b' \leq b} \widehat{S}^t_{\nitem+1}(b')\,.
    \end{align*}
    With the trivial base case $\widehat{S}^t_\Nitem(b) = \exp(\eta \widehat{W}_\Nitem^\nround(b))$, we can recover all of the exponentially weighted partial utilities $\{\widehat{S}^t_{\nitem}(b)\}_{\nitem \in [\Nitem], b \in \mathcal{B}}$ given $\bm{W}^\nround$. Once we have computed $\{\widehat{S}^t_{\nitem}(b)\}_{\nitem \in [\Nitem], b \in \mathcal{B}}$, we can sample $\bm{b}$ according to its exponentially weighted utility $\exp(\eta \widehat{\mu}^{\nround}_\nitem(\bm{b}))$ by sequentially sampling each $b_1,\ldots,b_\Nitem$.
    
    Let $P_{\textsc{D}}^t(\bm{b})$ be the probability that our Algorithm \ref{alg: Decoupled Exponential Weights - Path Kernels} returns bid vector $\bm{b} \in \mathcal{B}^{+\Nitem}$ in round $t$. Recall that we sample $\bm{b}$ by setting $b_\nitem^\nround$ to $b \in \mathcal B, b \le b_{m-1}^t$ with probability $ \frac{\widehat{S}^t_\nitem(b)}{\sum_{b' \leq b_{\nitem-1}^t} \widehat{S}^t_{\nitem}(b')}$. Hence, the probability of selecting $\bm{b}$ is the product of $\nitem$ conditional probability mass functions (pmf's) and we have 
    \begin{align*}
        P_{\textsc{D}}^\nround(\bm{b}) = \prod_{\nitem=1}^\Nitem \frac{\widehat{S}^t_\nitem(b_\nitem)}{\sum_{b' \leq b_{\nitem-1}} \widehat{S}^t_{\nitem}(b')} = \left(\prod_{\nitem=1}^{M-1} \frac{\exp(\eta \widehat{W}_\nitem^\nround(b_\nitem)) \sum_{b \leq b_{\nitem}} \widehat{S}^t_{\nitem+1}(b)}{\sum_{b' \leq b_{\nitem-1}} \widehat{S}^t_{\nitem}(b')}\right)\left(\frac{\exp(\eta \widehat{W}_M^t(b_M))}{\sum_{b' \leq b_{M-1}} \widehat{S}_M^t(b')}\right)\,.
    \end{align*}
    Moving the $\exp(\eta \widehat{W}_m^t(b_m))$ outside of the product, we obtain:
    \begin{align*}
        P_{\textsc{D}}^\nround(\bm{b}) &= \left(\prod_{\nitem=1}^{\Nitem-1} \exp(\eta \widehat{W}_\nitem^\nround(b_\nitem)) \right) \left(\prod_{\nitem=1}^{M-1} \frac{\sum_{b \leq b_{\nitem}} \widehat{S}^t_{\nitem+1}(b)}{\sum_{b' \leq b_{\nitem-1}} \widehat{S}^t_{\nitem}(b')}\right)\left(\frac{\exp(\eta \widehat{W}_M^t(b_M))}{\sum_{b' \leq b_{M-1}} \widehat{S}_M^t(b')}\right)\\
        &= \left(\prod_{\nitem=1}^{\Nitem} \exp(\eta \widehat{W}_\nitem^\nround(b_\nitem)) \right) \left(\frac{\sum_{b \leq b_{M-1}} \widehat{S}^t_{M}(b)}{\sum_{b' \leq b_{0}} \widehat{S}^t_{1}(b')}\right)\left(\frac{1}{\sum_{b' \leq b_{M-1}} \widehat{S}_M^t(b')}\right) = \frac{\prod_{\nitem=1}^{\Nitem} \exp(\eta \widehat{W}_\nitem^\nround(b_\nitem))}{\sum_{b' \leq b_{0}} \widehat{S}_1^t(b')}\,.
    \end{align*}
    We now rearrange the last expression to recover the exponential weights algorithm probabilities:
    \begin{align*}
        P_{\textsc{D}}^\nround(\bm{b}) = \frac{\prod_{\nitem=1}^\Nitem \exp(\eta \widehat{W}_\nitem^\nround(b_\nitem))}{\sum_{b \leq b_0} S_1^t(b)} = \frac{\exp(\eta \sum_{\nitem=1}^\Nitem \widehat{W}_\nitem^\nround(b_\nitem))}{\sum_{\bm{b}' \in \mathcal{B}^{+\Nitem}} \exp(\eta \sum_{m=1}^M \widehat{W}_m^t(b'_m))} = P^\nround(\bm{b})\,.
    \end{align*}

    \textbf{Part 2: Regret Analysis.} We are now ready to derive the regret upper bound on Algorithm~\ref{alg: Decoupled Exponential Weights - Path Kernels}. First, we show that the bid vector utility estimators $\widehat{\mu}^\nround(\bm{b})$ are both unbiased and have a finite upper bound. To show the upper bound, we take expectation with respect to the bid vectors selected by our algorithm and observe that 
    \begin{align*}
        \mathbb{E}[\widehat{\mu}^\nround(\bm{b})] &= \sum_{m=1}^\Nitem \mathbb{E}[\widehat{w}_m^t(b_m)]= \sum_{m=1}^M \mathbb{E}\left[1 - \frac{1-(v_m-b_m)1_{b_m > b_{-m}^\nround}}{q_m^t(b_m)}1_{b_m = b^\nround_m}\right]
    \end{align*}
    As we are considering the expectation ex-post, keeping the $b_{-m}^t$'s fixed, we have independence between the two indicator functions and we obtain:
    \begin{align*}
        = \sum_{m=1}^M \mathbb{E}\left[1 - \frac{1-w^\nround_m(b_m)}{q_m^t(b_m)}1_{b_m = b^\nround_m}\right]= M - \sum_{m=1}^M \frac{1 - w^\nround_m(b_m)}{q_m^t(b_m)}\mathbb{E}\left[1_{b_m = b^\nround_m}\right]= \mu^\nround(\bm{b})\,.
    \end{align*}

    As for the finite upper bound, we have that $\widehat{\mu}^\nround(\bm{b}) = \sum_{m=1}^\Nitem \widehat{w}_m^t(b_m)$ is the sum over $M$ node utility estimators, each of which is upper bounded by 1. Hence, $\widehat{\mu}^\nround(\bm{b}) \leq \Nitem$ for all $\bm{b} \in \mathcal{B}^{+\Nitem}$. Now, we make the following claim:
    \begin{lemma}\label{lem:regret_bound}
        Let $\widehat{\mu}^\nround(\bm{b}) = \sum_{m=1}^M (1 - \frac{1-(v_m-b_m)1_{b_m > b_{-m}^\nround}}{q_m^t(b_m)}1_{b_m = b^\nround_m})$ be our bid vector utility estimate as discussed. Then, any algorithm which samples bid vectors $\bm{b}$ with probability proportional to $\exp(\eta \sum_{\tau=1}^{t-1} \widehat{\mu}^\nround(\bm{b}))$ at round $\nround$ for $\eta \leq \frac{1}{M}$ has regret upper bound
        \begin{align}
            \label{eq: ExpWeights Analysis 2}
            \textsc{Regret}_{\mathcal{B}} \lesssim \eta^{-1}M\log|\mathcal{B}| + \eta \sum_{\nround=1}^\Nround \sum_{\bm{b}} \prob(\bm{b}^\nround =\bm{b}) \mathbb{E}[(\sum_{m=1}^M \widehat{w}^\nround_m(b_m))^2]\,.
        \end{align}
    \end{lemma}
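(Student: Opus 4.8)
The plan is to treat the sampling distribution $P^t(\bm b)\propto \exp(\eta\sum_{\tau<t}\widehat\mu^\tau(\bm b))$ as an instance of the exponential weights (Hedge) algorithm run over the arm set $\mathcal B^{+M}$ with estimated rewards $\widehat\mu^t$, which is exactly what Part~1 established for Algorithm~\ref{alg: Decoupled Exponential Weights - Path Kernels}. I would then run the standard partition-function potential argument. Define $Z_t=\sum_{\bm b\in\mathcal B^{+M}}\exp(\eta\sum_{\tau=1}^{t-1}\widehat\mu^\tau(\bm b))$, so that $P^t(\bm b)=\mathbb P(\bm b^t=\bm b\mid\mathcal F_{t-1})=\exp(\eta\sum_{\tau<t}\widehat\mu^\tau(\bm b))/Z_t$, and track the telescoping sum $\log Z_{T+1}-\log Z_1=\sum_{t=1}^T\log(Z_{t+1}/Z_t)$.

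For the per-round increment, write $Z_{t+1}/Z_t=\sum_{\bm b}P^t(\bm b)\exp(\eta\widehat\mu^t(\bm b))$. The crucial point is that although $\widehat\mu^t(\bm b)$ is unbounded below, it is bounded above by $M$ (each $\widehat w_m^t\le 1$), so $\eta\widehat\mu^t(\bm b)\le\eta M\le 1$ under the assumption $\eta\le 1/M$. Hence the one-sided inequality $e^x\le 1+x+x^2$, valid for all $x\le 1$, applies and yields $Z_{t+1}/Z_t\le 1+\eta\sum_{\bm b}P^t(\bm b)\widehat\mu^t(\bm b)+\eta^2\sum_{\bm b}P^t(\bm b)\widehat\mu^t(\bm b)^2$. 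Applying $\log(1+y)\le y$ and summing over $t$ gives
\[
\log\frac{Z_{T+1}}{Z_1}\le \eta\sum_{t=1}^T\sum_{\bm b}P^t(\bm b)\widehat\mu^t(\bm b)+\eta^2\sum_{t=1}^T\sum_{\bm b}P^t(\bm b)\widehat\mu^t(\bm b)^2 .
\]
On the other hand, $\log Z_{T+1}\ge \eta\max_{\bm b}\sum_{t}\widehat\mu^t(\bm b)$ and $\log Z_1\le M\log|\mathcal B|$ since $|\mathcal B^{+M}|\le|\mathcal B|^M$. Rearranging and dividing through by $\eta$ produces a pathwise bound on $\max_{\bm b}\sum_t\widehat\mu^t(\bm b)-\sum_t\sum_{\bm b}P^t(\bm b)\widehat\mu^t(\bm b)$ by $\eta^{-1}M\log|\mathcal B|+\eta\sum_t\sum_{\bm b}P^t(\bm b)\widehat\mu^t(\bm b)^2$.

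The final step is to take expectations and convert the estimated quantities into the true regret. Using the unbiasedness from Part~1, $\mathbb E[\widehat\mu^t(\bm b)\mid\mathcal F_{t-1}]=\mu^t(\bm b)$ for each fixed $\bm b$; combined with the tower rule this turns $\mathbb E\big[\sum_{\bm b}P^t(\bm b)\widehat\mu^t(\bm b)\big]$ into the algorithm's expected utility $\mathbb E[\mu^t(\bm b^t)]$, since conditionally $\sum_{\bm b}P^t(\bm b)\mu^t(\bm b)=\mathbb E[\mu^t(\bm b^t)\mid\mathcal F_{t-1}]$. For the benchmark, Jensen's inequality gives $\mathbb E[\max_{\bm b}\sum_t\widehat\mu^t(\bm b)]\ge\max_{\bm b}\mathbb E[\sum_t\widehat\mu^t(\bm b)]=\max_{\bm b}\sum_t\mu^t(\bm b)$. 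Hence the expectation of the left-hand side lower-bounds $\textsc{Regret}_{\mathcal B}$, and the right-hand side becomes $\eta^{-1}M\log|\mathcal B|+\eta\sum_t\sum_{\bm b}\mathbb P(\bm b^t=\bm b)\,\mathbb E[(\sum_m\widehat w_m^t(b_m))^2]$, as claimed.

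I expect the main obstacle to be the bookkeeping around the conditioning: the estimator $\widehat\mu^t$ depends both on the realized play $\bm b^t$ (through the indicators $\mathbf 1_{b_m=b^t_m}$ and the normalizers $q_m^t$) and on the candidate vector $\bm b$, so one must carefully fix the filtration $\mathcal F_{t-1}$ before invoking unbiasedness, and ensure the benchmark step uses $\mathbb E[\max]\ge\max\mathbb E$ in the correct direction. The quadratic term is deliberately left unsimplified here; bounding it (which is what introduces the extra $\sqrt{|\mathcal B|}$ factor relative to the full-information setting) is carried out separately in the subsequent part of the proof.
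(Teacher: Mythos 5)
Your proposal is correct and follows essentially the same route as the paper: the same potential/partition-function telescoping argument, the same crucial use of the one-sided bound $e^x\le 1+x+x^2$ for $x\le 1$ justified by $\widehat{\mu}^t(\bm b)\le M$ and $\eta\le 1/M$, the bound $\log Z_1\le M\log|\mathcal B|$, and the same conversion to true regret via unbiasedness (the paper fixes $\bm b'$ before taking expectations and then chooses the maximizer, while you invoke Jensen on the max — an immaterial difference). No gaps.
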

    \proof{Proof of Lemma \ref{lem:regret_bound}}
        We will closely follow the analysis of the $\textsc{Exp3}$ algorithm as presented in Chapter 11.4 of \cite{Lattimore2020}. In particular, we follow their regret analysis until Equation 11.13. Define $\Phi^t = \sum_{\bm{b} \in \mathcal{B}^{+\Nitem}} \exp(\eta \sum_{\tau=1}^{t} \widehat{\mu}^\tau(\bm{b}))$ to be the \textit{potential} at round $\nround$. As per our initial conditions in Algorithm~\ref{alg: Decoupled Exponential Weights - Path Kernels}, we have $\widehat{\mu}^0(\bm{b}) = 0$, and consequently, $\Phi^0 = |\mathcal{B}^{+\Nitem}|$. While it is not immediately apparent how the potentials $\Phi^\nround$ relate to the regret, we begin by upper bounding $\exp(\eta \sum_{t=1}^{T} \widehat{\mu}^t(\bm{b}))$ for a fixed $\bm{b}'$:
        \begin{align}
            \label{eq: Potentials}
            \exp(\eta \sum_{t=1}^{T} \widehat{\mu}^t(\bm{b}')) \leq \sum_{\bm{b} \in \mathcal{B}^{+\Nitem}} \exp(\eta \sum_{t=1}^{T} \widehat{\mu}^t(\bm{b})) = \Phi^T = \Phi^0 \prod_{t=1}^T \frac{\Phi^t}{\Phi^{t-1}}\,.
        \end{align}
        Now, we upper bound each $\frac{\Phi^t}{\Phi^{t-1}}$:
        \begin{align*}
            \frac{\Phi^t}{\Phi^{t-1}} = \sum_{\bm{b} \in \mathcal{B}^{+M}} \frac{\exp(\eta \sum_{\tau=1}^{t} \widehat{\mu}^\tau(\bm{b}))}{\Phi^{t-1}} = \sum_{\bm{b} \in \mathcal{B}^{+M}} \frac{\exp(\eta \sum_{\tau=1}^{t-1} \widehat{\mu}^\tau(\bm{b}))}{\Phi^{t-1}} \exp(\eta \widehat{\mu}^t(\bm{b})) = \sum_{\bm{b} \in \mathcal{B}^{+M}} \prob(\bm{b}^\nround = \bm{b}) \exp(\eta \widehat{\mu}^t(\bm{b}))\,,
        \end{align*}
        where in the last equality, we used the condition that our algorithm samples bid vector $\bm{b}$ with probability proportional to $\exp(\eta \sum_{\tau=1}^{t-1} \widehat{\mu}^\nround(\bm{b}))$ at round $\nround$. In order to continue the chain of inequalities, we note that for $\eta \leq \frac{1}{M}$, we have that the quantity $\eta \widehat{\mu}^t(\bm{b})$ is upper bounded by 1 as $\eta \widehat{\mu}^t(\bm{b}) \leq \eta M \leq 1$. In the first inequality, we used the fact that $\widehat{\mu}^\nround(\bm{b}) \leq \Nitem$. Now, we can apply the inequalities $\exp(x) \leq 1 + x + x^2$ and $1 + x \leq \exp(x)$ for all $x \leq 1$, with $x= \eta \widehat{\mu}^t(\bm{b})$ and $x = \eta \prob(\bm{b}^\nround = \bm{b}) \widehat{\mu}^t(\bm{b})$, respectively,  to obtain:
        \begin{align*}
            \frac{\Phi^t}{\Phi^{t-1}} &\leq \sum_{\bm{b} \in \mathcal{B}^{+M}} \prob(\bm{b}^\nround = \bm{b}) \exp(\eta \widehat{\mu}^t(\bm{b})) \leq \sum_{\bm{b} \in \mathcal{B}^{+M}} \prob(\bm{b}^\nround = \bm{b}) \left[1 + \eta\widehat{\mu}^t(\bm{b}) + \eta^2 \widehat{\mu}^t(\bm{b})^2 \right]\\
            &= 1+ \sum_{\bm{b} \in \mathcal{B}^{+M}} \prob(\bm{b}^\nround = \bm{b}) \left[\eta\widehat{\mu}^t(\bm{b}) + \eta^2 \widehat{\mu}^t(\bm{b})^2 \right] \leq \exp(\sum_{\bm{b} \in \mathcal{B}^{+M}} \prob(\bm{b}^\nround = \bm{b}) \left[\eta\widehat{\mu}^t(\bm{b}) + \eta^2 \widehat{\mu}^t(\bm{b})^2 \right])\,.
        \end{align*}
        Combining this with Equation~\eqref{eq: Potentials} and then taking logarithms, we obtain:
        \begin{align*}
            \eta \sum_{t=1}^{T} \widehat{\mu}^t(\bm{b}') \leq \log \Phi^0 + \eta \sum_{t=1}^{T} \sum_{\bm{b} \in \mathcal{B}^{+M}} \prob(\bm{b}^\nround = \bm{b}) \widehat{\mu}^t(\bm{b}) + \eta^2 \sum_{t=1}^T \sum_{\bm{b} \in \mathcal{B}^{+M}}\prob(\bm{b}^\nround = \bm{b}) \widehat{\mu}^t(\bm{b})^2\,.
        \end{align*}
        Dividing both sides by $\eta$, applying the upper bound on $\Phi^0$, and rearranging, we obtain that for any $\bm{b}' \in \mathcal{B}^{+M}$:
        \begin{align*}
            \sum_{t=1}^{T} \widehat{\mu}^t(\bm{b}') - \sum_{t=1}^{T} \sum_{\bm{b} \in \mathcal{B}^{+M}} \prob(\bm{b}^\nround = \bm{b})  \widehat{\mu}^t(\bm{b}) \lesssim \eta^{-1} \Nitem \log |\mathcal{B}| + \eta \sum_{t=1}^T \sum_{\bm{b} \in \mathcal{B}^{+M}}\prob(\bm{b}^\nround = \bm{b}) \widehat{\mu}^t(\bm{b})^2\,.
        \end{align*}
        Taking expectations of $\widehat{\mu}^\nround(\bm{b})$ with its definition in terms of $\widehat{w}^\nround_m(b_m)$, we obtain the right hand side of the lemma:
        \begin{align*}
            \mathbb{E}\left[\sum_{t=1}^{T} \widehat{\mu}^t(\bm{b}') - \sum_{t=1}^{T} \sum_{\bm{b} \in \mathcal{B}^{+M}} \prob(\bm{b}^\nround = \bm{b})  \widehat{\mu}^t(\bm{b})\right] \lesssim \eta^{-1}M\log|\mathcal{B}| + \eta \sum_{\nround=1}^\Nround \sum_{\bm{b}} \prob(\bm{b}^\nround =\bm{b}) \mathbb{E}[(\sum_{m=1}^M \widehat{w}^\nround_m(b_m))^2]\,.
        \end{align*}
        Replacing $\sum_{\bm{b} \in \mathcal{B}^{+M}} \prob(\bm{b}^\nround = \bm{b})  \widehat{\mu}^t(\bm{b})$ with $\mathbb{E}[\widehat{\mu}^\nround(\bm{b}^\nround)]$ and recalling that the bid vector utility estimates $\widehat{\mu}^\nround$ were unbiased, we have:
        \begin{align*}
            \sum_{t=1}^{T} \mu^t(\bm{b}') - \sum_{t=1}^{T} \mathbb{E}[\mu^t(\bm{b}^\nround)] \lesssim \eta^{-1}M\log|\mathcal{B}| + \eta \sum_{\nround=1}^\Nround \sum_{\bm{b}} \prob(\bm{b}^\nround =\bm{b}) \mathbb{E}[(\sum_{m=1}^M \widehat{w}^\nround_m(b_m))^2]\,.
        \end{align*}
        Notice that as this is true for any $\bm{b}'$, we can replace it with the bid vector that maximizes the true cumulative utility $\sum_{t=1}^T \mu^\nround(\bm{b}')$ and see that the left hand side becomes precisely $\textsc{Regret}_\mathcal{B}$, which completes the proof.
        \Halmos
    \endproof

    Now, it remains to show an upper bound on the second moment of the bid vector utility estimate $\mathbb{E}[(\sum_{m=1}^M \widehat{w}^\nround_m(b_m))^2]$. A crude attempt would be to say that $\mathbb{E}[(\sum_{m = 1}^M \widehat{w}^\nround_m(b_m))^2] \leq \Nitem \sum_{m = 1}^M \mathbb{E}[\widehat{w}^\nround_m(b_m)^2]$:
    \begin{align}
        &\sum_{\nround=1}^\Nround \sum_{\bm{b}} \prob(\bm{b}^\nround = \bm{b}) \mathbb{E}[(\sum_{m = 1}^M \widehat{w}^\nround_m(b_m))^2] \leq \Nitem \sum_{\nround=1}^\Nround \sum_{\bm{b}} \prob(\bm{b}^\nround = \bm{b}) \sum_{m = 1}^M \mathbb{E}[\widehat{w}^\nround_m(b_m)^2] \label{eq: full info difference}\\
        &= \Nitem \sum_{\nround=1}^\Nround \sum_{\nitem = 1}^\Nitem \sum_{b \in \mathcal{B}} \mathbb{E}[\widehat{w}^\nround_m(b)^2] \sum_{\bm{b}: b_m = b} \prob(\bm{b}^\nround = \bm{b}) = \Nitem \sum_{\nround=1}^\Nround \sum_{\nitem = 1}^\Nitem \sum_{b \in \mathcal{B}} \mathbb{E}[\widehat{w}^\nround_m(b)^2] q^\nround_m(b) \notag\\
        &\leq \Nitem \sum_{\nround=1}^\Nround \sum_{\nitem = 1}^\Nitem \sum_{b \in \mathcal{B}} \frac{2}{q_m^t(b)} q_m^t(b) = O(\Nitem^2 |\mathcal{B}| \Nround)\,. \notag
    \end{align}
    Where the last inequality follows from:
    \begin{align*}
        \mathbb{E}[\widehat{w}_m^t(b)^2] = \mathbb{E}\left[\left( 1 - \frac{1-w_m^t(b)}{q_m^t(b)} \textbf{1}_{b_m^t = b} \right)^2 \right] = 1 - 2\mathbb{E}\left[\frac{1-w_m^t(b)}{q_m^t(b)}\textbf{1}_{b_m^t=b}\right] + \mathbb{E}\left[\left(\frac{1-w_m^t(b; \bm{v})}{q_m^t(b)}\right)^2\textbf{1}_{b_m^t=b}\right]\,.
    \end{align*}
    Evaluating the expectations with $\mathbb{E}\left[\textbf{1}_{b_m^t = b}\right] = q_m^t(b)$, we have:
    \begin{align*}
        \mathbb{E}[\widehat{w}_m^t(b)^2] = 1 - \left[2 - 2w_m^t(b)\right] + \left[\frac{(1 - w_m^t(b))^2}{q_m^t(b)}\right] = 2w_m^t(b) - 1 + \frac{1}{q_m^t(b)} \leq 1 + \frac{1}{q_m^t(b)} \leq \frac{2}{q_m^t(b)}\,.
    \end{align*}    
    Plugging this back into our upper bound yields stated regret bound for $\eta = \Theta(\sqrt{\frac{\log |\mathcal{B}|}{M|\mathcal{B}|T}})$ such that $\eta < \frac{1}{M}$:
    
    \begin{align*}
        \textsc{Regret}_\mathcal{B} \lesssim \eta^{-1}\Nitem \log |\mathcal{B}| + \eta\Nitem^2 |\mathcal{B}|T = O(\Nitem^{\frac{3}{2}}\sqrt{|\mathcal{B}| \Nround \log |\mathcal{B}|})\,.
    \end{align*}

    \textbf{Part 3: Complexity Analysis.} We note that the time and space complexity analysis is identical to that of Algorithm~\ref{alg: Decoupled Exponential Weights}, as the only additional computational work being done is computing the normalization terms $q_m^\nround(b)$, which requires $O(M|\mathcal{B}|)$ space and $O(MT|\mathcal{B}|)$ time respectively. Hence, discarding old tables, the total time and space complexities of Algorithm \ref{alg: Decoupled Exponential Weights - Path Kernels} are $O(\Nitem |\mathcal{B}| \Nround)$ and $O(\Nitem |\mathcal{B}|)$ respectively. As this is polynomial in $\Nitem, |\mathcal{B}|, \Nround$, we have proven the claim of polynomial space and time complexities.

    \textbf{Part 4: Selecting $\mathcal{B}$.} We claim that the sub-optimality due to the discretization is on the order of $\frac{MT}{|\mathcal{B}|}$. Assume that $\mathcal{B} \equiv \{\frac{i}{|\mathcal{B}|}\}_{i \in [|\mathcal{B}|]}$ is an even discretization of $[0, 1]$, and recall the continuous regret benchmark:
    \begin{align*}
        \textsc{Regret} = \max_{\bm{b} \in [0, 1]^{+\Nitem}} \sum_{\nround=1}^\Nround \mu^\nround_n(\bm{b}) - \mathbb{E}\left[\sum_{\nround=1}^\Nround \mu^\nround_n(\bm{b}^\nround)\right],
    \end{align*}
    where the maximum is taken over the entire space $[0, 1]^{+\Nitem}$ rather than $\mathcal{B}^{+\Nitem}$. Let $\bm{b}^*$ denote the maximizer of the continuous regret. Then, bidder $n$ could have obtained at least the same allocation by rounding up each bid in $\bm{b}^*$ to the next largest multiple of $\frac{1}{|\mathcal{B}|}$. Let this rounded bid vector be denoted by $\bm{b}^+$. As their allocation, thus value for the set of items received, does not decrease, and their total payment increases by a maximum of $\frac{M}{|\mathcal{B}|}$ at each round, then we have that $\mu_n^t(\bm{b}^+) \geq \mu_n^t(\bm{b}^*) - \frac{M}{|\mathcal{B}|}$. Let $\bm{b}_{\mathcal{B}}^* \in \mathcal{B}^{+\Nitem}$ denote the hindsight optimal utility vector returned by our offline dynamic programming (Algorithm~\ref{alg: Offline Full}), which serves as the regret benchmark in the definition of discretized regret. Noting that $\bm{b}^+ \in \mathcal{B}^{+\Nitem}$, we have that the total utility of bidding $\bm{b}^*_{\mathcal{B}}$ must be at least that of $\bm{b}^+$. Thus,
    \begin{align*}
        \sum_{\nround=1}^\Nround \mu^\nround_n(\bm{b}_{\mathcal{B}}^*) \geq \sum_{\nround=1}^\Nround \mu^\nround_n(\bm{b}^+)\geq \sum_{\nround=1}^\Nround \mu^\nround_n(\bm{b}^*) - \frac{MT}{|\mathcal{B}|}
    \end{align*}
    We balance this with the discretized regret $O(\Nitem^{\frac{3}{2}}\sqrt{|\mathcal{B}| \Nround \log |\mathcal{B}|})$ with $|\mathcal{B}| = M^{-\frac{1}{3}}T^{\frac{1}{3}}$. This yields continuous regret $\textsc{Regret} = O(M^{\frac{4}{3}}T^{\frac{2}{3}} \sqrt{\log \Nround})$.

    \textbf{Part 5: Extending to the Full Information Setting.} Thus far, we have only discussed the bandit feedback algorithm. Fortunately, the full information setting algorithm is exactly the same except for two differences: 1) we do not need to compute $\bm{q}$ and 2) we can replace the reward estimates $\widehat{\mu}^t(\bm{b})$ with the true rewards $\mu^t(\bm{b})$ in Equation~\ref{eq: full info difference}. The first difference can only serve to improve the time and space complexity of our algorithm. The second difference allows us to improve the bound on $\sum_{\nround=1}^\Nround \sum_{\bm{b}} \prob(\bm{b}^\nround = \bm{b}) \mathbb{E}[(\sum_{m = 1}^M \widehat{w}^\nround_m(b_m))^2]$ in the left hand sight of Equation~\ref{eq: full info difference} by replacing $\widehat{w}^\nround_m(b_m))$ with $w^t_m(b_m)$:
    \begin{align*}
        \sum_{\nround=1}^\Nround \sum_{\bm{b}} \prob(\bm{b}^\nround = \bm{b}) \mathbb{E}[(\sum_{m = 1}^M \widehat{w}^\nround_m(b_m))^2] = \sum_{\nround=1}^\Nround \sum_{\bm{b}} \prob(\bm{b}^\nround = \bm{b}) \mathbb{E}[(\sum_{m = 1}^M \widehat{w}^\nround_m(b_m))^2] \leq M^2 \sum_{\nround=1}^\Nround \sum_{\bm{b}} \prob(\bm{b}^\nround = \bm{b}) = M^2T
    \end{align*}
    
    Notice that this bound is a factor of $|\mathcal{B}|$ improvement over that in the bandit setting. Consequently, we obtain our stated regret bound of $O(M^\frac{3}{2} \sqrt{T \log |\mathcal{B}|})$ with the choice of $\eta = \Theta(\sqrt{\frac{\log |\mathcal{B}|}{MT}})$. Balancing this regret with the error term, which is of order $O\left(\frac{M|\mathcal{B}|}{T}\right)$, the optimal choice of $|\mathcal{B}|$ is given by $\Theta(\sqrt{\frac{T}{M}})$. This yields corresponding continuous regret of $O(M^{\frac{3}{2}}\sqrt{T \log T})$.

\end{proof}

\section{Regret Lower Bounds}

In this section, we prove our stated regret lower bounds for both the full information and bandit feedback settings.

\subsection{Proof of Theorem \ref{thm:lower}: Regret Lower Bound (Full Information)}

To construct our lower bounds, we construct a stochastic adversary whose distribution across their bids makes it difficult for the bidder to determine their optimal bid, and thus, occurs $\Omega(M\sqrt{T})$ regret while doing so. We define $\bm{b}'_- = (0,\ldots,0,c,\ldots,c)$, where there are $k$ and $\Nitem - k$ values of 0 and $c$ each. We additionally define $\bm{b}"_- = (c,\ldots,c)$ as the $\Nitem$-vector of bids at $c$. Restricting the adversary's bid vectors to be in $\{\bm{b}'_-, \bm{b}"_-\}$, we construct two adversary bid vector distributions $F$ and $G$ over $\{\bm{b}'_-, \bm{b}"_-\}^\Nround$ such that   under $F$, we have $\prob(\bm{b}_-^\nround = \bm{b}'_-) = \frac{1}{2} + \delta$  and $\prob(\bm{b}_-^\nround = \bm{b}"_-) = \frac{1}{2} - \delta$ 
 and under $G$, we have $\prob(\bm{b}_-^\nround = \bm{b}'_-) = \frac{1}{2} - \delta$ and $\prob(\bm{b}_-^\nround = \bm{b}"_-) = \frac{1}{2} + \delta$  for some $\delta \in [0, \frac{1}{2}]$ to be optimized over later.

 Assume that $\bm{v} = (1,\ldots,1)$, all tiebreaks are won for simplicity, and the competitors' bids over time are independent. Then, for certain choices of $c$ and $k$ (which we show below), the expected utility maximizing bid vector under $\{\bm{b}_-^\nround\}_{\nround \in [\Nround]} \sim F$ is $(0,\ldots,0)$ and under $\{\bm{b}_-^\nround\}_{\nround \in [\Nround]} \sim G$ is $(c,\ldots,c)$. 
 In particular, we can compute precisely the expected value of bidding $\bm{b}^\nround = \bm{b}$ for all $\nround \in [\Nround]$ under both $F$ and $G$. Note that as adversary bid values only take values in $\{0, c\}$ and bidder $n$ wins all tiebreaks, then the bidder only need consider bid vectors consisting only of all $0$ or $c$. Letting $\nitem$ denote the number of bids in $\bm{b}$ equal to $c$, we have:
    \begin{align*}
        \mathbb{E}_F\left[ \mu_n^\nround(\bm{b})\right] =  (\frac{1}{2} + \delta)\left((1 - c)m + \max(0, M - k - m)\right)  + (\frac{1}{2} - \delta)(1 - c)m  \,.
    \end{align*}
    Where $\mathbb{E}_F$ denotes the expectation with respect to the adversary bids drawn from $F$, namely $\{\bm{b}_-^\nround\}_{\nround \in [\Nround]} \sim F$ (and similarly for $\mathbb{E}_G$ below). In particular, we have that with probability $\frac{1}{2} + \delta$, the adversary will select bid $\bm{b}'_-$. We are then guaranteed to win $m$ units at a price of $c$, for a utility of $1 - c$ per unit. If $\nitem < k$, then $M - k - m$ of the items were won with price 0, for a utility of 1 per unit. With probability $\frac{1}{2}-\delta$, all of the adversary bids are $c$, and we obtain $\nitem$ units at a cost of $c$ each, which corresponds to utility $1 - c$. Similarly, we have: 
    \begin{align*}
        \mathbb{E}_G\left[\mu_n^\nround(\bm{b})\right] =  (\frac{1}{2} - \delta)\left((1 - c)m + \max(0, M - k - m)\right)  + (\frac{1}{2} + \delta)(1 - c)m \,.
    \end{align*}
    If we sample $\{\bm{b}^\nround_-\}_{\nround \in [\Nround]}$ according to the mixture $\frac{F+G}{2}$, this corresponds to the case where $\delta = 0$, i.e., the probability of observing either $\bm{b}'_-$ or $\bm{b}"_-$ is equal. We have for all $\bm{b}$:
    \begin{align*}
        \mathbb{E}_{(F+G)/2}[\mu_n^\nround(\bm{b})] = \frac{1}{2}((1 - c)m + \max(0, M-k-m)) + \frac{1}{2}(1 -c)m = (1-c)m + \frac{1}{2}\max(0, M-k-m) 
    \end{align*}
    Note that under $F$, the optimal occurs at the all 0's vector for $c > \frac{1}{2} - \delta$ and $(\frac{1}{2} + \delta)(M - k) > (1-c)m = 0$. Similarly, the optimal occurs at the all $c$'s vector for $c > \frac{1}{2} - \delta$ and $(\frac{1}{2} - \delta)(M - k) > (1 - c)M$. These obtain utilities of $(\frac{1}{2}+\delta)(M-k)$ and $M - Mc$ respectively. One choice of $c$ and $k$ is $\frac{2}{3}$ and $\frac{M}{3}$, with $0 < \delta < \frac{1}{6}$. Looking at the regret incurred each step of the algorithm by selecting any action $\bm{b}$, we have:
    \begin{align*}
        &\max_{\bm{b}'} \left(\mathbb{E}_F[ \mu_n^\nround(\bm{b}') - \mu_n^\nround(\bm{b})] \right) + \max_{\bm{b}'} \left(\mathbb{E}_G[ \mu_n^\nround(\bm{b}') - \mu_n^\nround(\bm{b})] \right) \\
        &\geq \max_{\bm{b}'} \mathbb{E}_F[\mu_n^\nround(\bm{b}')] + \max_{\bm{b}'} \mathbb{E}_G[\mu_n^\nround(\bm{b}')] - 2\max_{\bm{b}'} \mathbb{E}_{(F+G)/2}\left[  \mu_n^\nround(\bm{b}')\right]\\
        &\geq \mathbb{E}_F[ \mu_n^\nround((0,\ldots,0))] + \mathbb{E}_G[ \mu_n^\nround((c,\ldots,c))] - 2\max_{\bm{b}'} \mathbb{E}_{(F+G)/2}\left[  \mu_n^\nround(\bm{b}')\right]\\
        &= (\frac{1}{2}+\delta)(M-k) + \left(M - Mc\right) - 2\max_{\bm{b}'} \mathbb{E}_{(F+G)/2}\left[\mu_n^\nround(\bm{b}')\right]\\
        &\geq (\frac{1}{2}+\delta)(M-k) + \left(M - Mc\right) - \max_{m \in [M]} \left(2(1-c)m + \max(0, M-k-m) \right)
    \end{align*}
    Let $k = \frac{M}{3}$ and $c = \frac{2}{3}$, and the above expression simplifies to:
    \begin{align*}
        \frac{2M(1+\delta)}{3} - \max_{m \in [M]} \left(\frac{2m}{3} + \max(0, \frac{2M}{3}-m) \right) = \frac{2M(1+\delta)}{3} - \frac{2M}{3} = \frac{2M\delta}{3} = \Theta(M\delta)\,.
    \end{align*}     
    We can now invoke the useful lemma relating the regret under $(F+G)/2$ to the Kullback-Leilber divergence:
    \begin{lemma}[\cite{NonparametricEstimation2008} Theorem 2.2.]
        We have for any two discrete distributions $F$ and $G$:
        \begin{align}
            \mathbb{E}_{(F+G)/2} \left[\textsc{{Regret}}_\mathcal{B}(T)\right] = \Omega\left( \frac{\Delta}{2} \exp(-D_{\emph{KL}(F || G)}) \right)\,,
        \end{align}
        where $\Delta$ denotes the sum of the total regret incurred under $F$ or $G$.
    \end{lemma} 
    When $F$ and $G$ are independent Bernoulli processes with parameters $\frac{1}{2}+\delta$ and $\frac{1}{2}-\delta$ respectively, then $D_{\text{KL}}(F || G) \leq CT\delta^2$ for some constant $C$. Using $\Delta = \Theta(MT\delta)$ and $\delta = \Omega(\frac{1}{\sqrt{T}})$, we have that the previous lemma implies
    \begin{align}
        \textsc{Regret}_{\mathcal{B}}  = \Omega\left(\Nitem \sqrt{\Nround}\right)\,.
    \end{align}                 

\subsection{Proof of Theorem \ref{thm:lower_bound_bandit}: Regret Lower Bound (Bandit Feedback)}\label{sec:proof:lower:bandit}

  {\color{black}To construct the regret lower bound, we begin by establishing a base hypothesis concerning the distribution of the highest competing bids. Specifically, let \(H_{m}\) represent the marginal cumulative distribution function (CDF) of the \((M-m+1)\)'st highest competing bid, denoted by \(b_{-m}\):
    \[H_m(b) = \min(1,\frac{c_m}{1 - b}) \quad \text{for} \quad c_m = \frac{1}{2} - \frac{m}{4M}\] 
    and $b \in \mathcal{B}$. We note that larger $m$ implies smaller $c_m$, which implies a smaller CDF for the same value of $b$; i.e. $H_m(b) \geq H_{m+1}(b)$  for all $b \in \mathcal{B}$. Thus, this satisfies the stochastic dominance requirement on the highest other bids $b_{-1}\leq \ldots \leq b_{-M}$, and there exists a joint distribution over highest other bids such that their marginal distributions are precisely $H_m$ for all $m \in [M]$. 
        
    Assuming that the bidder's valuations are $\bm{v} = [1,\ldots,1]$ and they given tie-break priority, the expected utility for bidding $b$ for unit $m$ is equal to
    \[(1-b)\prob(b \geq b_{-m}) = (1-b)H_m(b) = c_m,\qquad b \in [0, 1-c_m]\,.\] That is, the expected utility from the $m$'th unit is precisely $c_m$ for any bid $b\in [0, 1-c_m]$. Here, the expected utility for any bid $b > 1 - c_m$ will be strictly smaller than $c_m$.

    Now, we construct a partitioning over the bid discretization space.  Assuming an even discretization $\mathcal{B} = \{\frac{i}{|\mathcal{B}|}\}_{i \in \mathcal{B}}$ with $|\mathcal{B}| = M^{\frac{1}{3}}T^{\frac{1}{3}}$, we partition $\mathcal{B}$ into $M$ disjoint buckets: 
    \[\mathcal{B}_m = \mathcal{B} \cap (\frac{1}{2} - \frac{m}{4M}, \frac{1}{2} - \frac{m-1}{4M}]\,.\]
    Here, each $\mathcal{B}_m$ is of size $\frac{|\mathcal{B}|}{4M}$, and notice that $\mathcal{B}_m$ lies entirely to the right of $\mathcal{B}_{m+1}$; i.e., $b > b'$ for any $b \in \mathcal{B}_m$ and  $b' \in \mathcal{B}_{m+1}$.  See Figure \ref{fig:partitions} for an illustration. We note that we are not restricting bids for the $m$'th unit to be within $\mathcal{B}_m$. We only construct this partition to generate a family of hypotheses the agent will have to differentiate between. 

Now, we construct the family of hypotheses \(\mathcal{H} = \{\times_{m=1}^M \mathcal{H}_m^{j_m}, j_m \in [|\mathcal{B}_m|], m\in [M]\}\) in a similar manner as \cite{DemandCurve2003}, where each hypothesis $(\mathcal{H}_m^{j_m})_{m\in [M]}$ can be described by \(M\) indices, \(j_m\), \(m\in [M]\). Here, index \(j_m\in [|\mathcal{B}_m|]\) is associated with \emph{the unit-\(m\) hypothesis}.

    In particular, for unit  $m$, we construct $|\mathcal{B}_m|$ possible hypotheses (i.e., $\mathcal{H}_m^{j_m}$, $j_m\in [|\mathcal{B}_m|]$), each indexed by $j_m$ where for hypothesis $\mathcal{H}^{j_m}_m$, 
    the utility of the bidder for unit $m$ from submitting bid $b^{j_m}_m$---defined as the ${j_m}$'th largest value in $\mathcal{B}_m$---is marginally larger than the remaining bid values in $\mathcal{B}_m$ by perturbing the base distribution $H_m(b^{j_m}_m)$ by $\gamma_m > 0$. More formally, define marginal CDF's $\{H_m^{j_m}\}_{j_m \in [|\mathcal{B}_m|]}$ for each $m \in [M]$, such that $H_m^{j_m}(b) = H_m(b)$ for all $b \in \mathcal{B}$, except at $b^j_m \in \mathcal{B}_m$, and $H_m^{j_m}(b_m^{j_m}) = H_m(b_m^{j_m}) + \gamma_m$. \footnote{The agent does not need to differentiate between every hypothesis in $\mathcal{H} = \{\times_{m=1}^M \mathcal{H}^{j_m}_m\}_{j_m \in |\mathcal{B}_m| \forall m}$, which is exponentially large. As the utility function for bidding $\bm{b}$ can be decomposed into the sum of utilities of each of the $M$ units, the agent only needs to differentiate between each of the $\mathcal{H}^{j}_m$.}
    Here, we require $\gamma_m$ to satisfy the following constraint: $$\gamma_m \leq \min\left(H_m(b_m^{j_m} + \frac{1}{|\mathcal{B}|}) - H_m(b_m^{j_m}), H_m(b_m^{j_m}) - H_{m+1}(b_m^{j_m})\right)\,.$$ The first term $H_m(b_m^{j_m} + \frac{1}{|\mathcal{B}|}) - H_m(b_m^{j_m})$ corresponds to the constraint that the CDF is non-decreasing. The second term $H_m(b_m^{j_m}) - H_{m+1}(b_m^{j_m})$ corresponds to the highest-other-bid monotonicity constraint. Using our precise definitions of $H_m$, we see that: \footnote{We also note another constraint requires that $H_m(b_m^j + \frac{1}{|\mathcal{B}|})$ and $ H_m(b_m^j)$ needs to be unique, specifically, that they are not both equal to 1. Since we defined the set $\mathcal{B}_m$ to be within $(\frac{1}{2} - \frac{m}{4M}, \frac{1}{2} - \frac{m-1}{4M}]$, the largest value in $\mathcal{B}_m$ is $\frac{1}{2} - \frac{m-1}{4M}$. Plugging this into $H_m^j$, we see that $H_m(\max(\mathcal{B}_m)) = \min(1, \frac{\frac{1}{2} - \frac{m}{4M}}{1 - \max(\mathcal{B}_m)} = \min(1, \frac{\frac{1}{2} - \frac{m}{4M}}{\frac{1}{2} + \frac{m-1}{4M}}) < 1$.}
    \begin{align*}H_m(b_m^{j_m} + \frac{1}{|\mathcal{B}|}) - H_m(b_m^{j_m})&= \frac{c_m}{|\mathcal{B}|(1 - b_m^{j_m})(1-b_m^{j_m} - |\mathcal{B}|^{-1})} \\
   & =
   \frac{\frac{1}{2} - \frac{m}{4M}}{|\mathcal{B}|(1 - b_m^{j_m})(1-b_m^{j_m} - |\mathcal{B}|^{-1})}
   \geq \frac{\frac{1}{2} - \frac{m}{4M}}{|\mathcal{B}|(  \frac{1}{2} + \frac{m}{4M})(\frac{1}{2} + \frac{m}{4M} - |\mathcal{B}|^{-1})}\,.
    \end{align*}
    This function is decreasing for any $0 \leq m \leq M$, as the term $\frac{m}{M}$ appears negatively in the numerator and positively in the denominator. Thus, we have
    \[\frac{\frac{1}{2} - \frac{m}{4M}}{|\mathcal{B}|(  \frac{1}{2} + \frac{m}{4M})(\frac{1}{2} + \frac{m}{4M} - |\mathcal{B}|^{-1})} \ge  \frac{\frac{1}{2} - \frac{1}{4}}{|\mathcal{B}|(  \frac{1}{2} + \frac{1}{4})(\frac{1}{2} + \frac{1}{4} - |\mathcal{B}|^{-1})} \ge 4/|\mathcal{B}| \]
    as desired.
Further, 
    $$H_m(b_m^j) - H_{m+1}(b_m^{j_m}) = \frac{c_{m} - c_{m+1}}{1-b_m^{j_m}}  =
    \frac{\frac{1}{4M}}{ \frac{1}{2} + \frac{m}{4M}}
    \geq \frac{\frac{1}{4M}}{ \frac{1}{2} + \frac{1}{4}} \ge \frac{1}{3M}\,.$$
    Combining these, we see that $\gamma_m$ is $O\left(\min(\frac{1}{|\mathcal{B}|}, \frac{1}{M})\right)$. We then see that the expected utility of bidding $b_m^{j_m} $ for unit $m$ is $c_m + \gamma_m(1-b_m^{j_m}) = c_m + O\left(\min(\frac{1}{|\mathcal{B}|}, \frac{1}{M})\right)$  as compared to the expected utility of bidding $b \neq b_m^j$ for unit $m$, which is precisely just $c_m$. This sub-optimality plays a direct role in the regret accumulated by item $m$, which is precisely the number of times that the optimal $b_m^{j_m}$ was \textit{not} selected times the difference between the reward of the optimal and sub-optimal arms, which is $O(\gamma_m)$.
    We will select $\gamma_m = \gamma = \frac{1}{48\max(|\mathcal{B}|, M)}$ for all $m$. Now, we rewrite the regret lower bound as a function of the number of times that the optimal bid $b_m^{j_m}$ was not selected for for the $m$'th bid. Let
    \[
    \mathcal{S}(m, t) = \{m' \in [M]: b_{m'}^t \in \mathcal{B}_m\} \quad \text{and} \quad T_m = \sum_{t=1}^T |\mathcal{S}(m, t)|
    \] 
    where $\mathcal{S}(m, t)$ denotes the set of bids at round $t$ that took on value in $\mathcal{B}_m$ for any unit $m'\in [M]$ (including unit $m$), and $T_m$ denotes the total number of times across all rounds that any bid was selected within $\mathcal{B}_m$. Assuming that all bids fall within one of the $M$ buckets, as bidding outside of these buckets is strictly utility sub-optimal and also cannot provide any information on distinguishing between the unit hypotheses, we have:
     \begin{align*}
        \textsc{Regret}_{\mathcal{B}} &= \sum_{t=1}^T \sum_{m=1}^M 1_{b_m^t \neq b_m^{j_m}} \gamma(1 - b_m^{j_m}) \overset{(a)}{\geq} \frac{\gamma}{2} \sum_{t=1}^T \sum_{m=1}^M 1_{b_m^t \neq b_m^{j_m}} \overset{(b)}{=} \frac{\gamma}{2} \sum_{t=1}^T \sum_{m=1}^M \sum_{i \in \mathcal{S}(m, t)} 1_{b_{i}^t \neq b_i^{j_i}}\\
        &\overset{(c)}{=} \frac{\gamma}{4} \sum_{m=1}^M \left[ \sum_{t=1}^T \sum_{i \in \mathcal{S}(m, t)} 1_{b_{i}^t \neq b_i^{j_i}} + \sum_{t=1}^T \sum_{i \in \mathcal{S}(m, t)} 1_{b_{i}^t \neq b_i^{j_i}} \right]\,.\\
       &{=}\frac{\gamma}{4} \sum_{m=1}^M \left[ \sum_{t=1}^T \sum_{i \in \mathcal{S}(m, t)} 1_{b_{i}^t \neq b_i^{j_i}} + \sum_{t=1}^T \sum_{i\in S(m, t)}(1- 1_{b_i^t = b_i^{j_i}}) \right]\\
       &{=}\frac{\gamma}{4} \sum_{m=1}^M \left[ \sum_{t=1}^T \sum_{i \in \mathcal{S}(m, t)} 1_{b_{i}^t \neq b_i^{j_i}} + \sum_{t=1}^T|S(m,t)|- \sum_{t=1}^T \sum_{i\in S(m,t)} 1_{b_i^t = b_i^{j_i}} \right]\\
        &\overset{(d)}{=}\frac{\gamma}{4} \sum_{m=1}^M \left[ \sum_{t=1}^T \sum_{i \in \mathcal{S}(m, t)} 1_{b_{i}^t \neq b_i^{j_i}} + \sum_{t=1}^T |\mathcal{S}(m, t)| - \sum_{t=1}^T 1_{b_m^t = b_m^{j_m}} \right]\,.
    \end{align*}
    
Here, the inequality (a) holds because  $b_m^{j_m}< 1/2$, and the  equality (b) holds because for any  $t$, $m,m'\in [M]$,  and $m\ne m'$, we have $S(m, t)\cap S(m', t)=\emptyset$.  The   equality (c) is a simple algebraic manipulation. The   equality (d) uses the fact that  any time a bid for item $i \neq m$ takes value in $\mathcal{B}_m$ (see the definition of $S(m, t)$), it cannot possibly be equal to $b_{i}^{j_i} \in \mathcal{B}_i$ since $\mathcal{B}_i$ and $\mathcal{B}_m$ are disjoint.
We then have 
    \begin{align}
        \textsc{Regret}_{\mathcal{B}}
        &\overset{(e)}{\geq} \frac{\gamma}{4} \sum_{m=1}^M \left[ \sum_{t=1}^T \sum_{i \in \mathcal{S}(m, t)} 1_{b_{i}^t \neq b_i^{j_i}} + \max(0, T_m - T) \right] 
        \\
        &\overset{(f)}{\geq} \frac{\gamma}{4} \sum_{m=1}^M \left[ \sum_{t=1}^T \sum_{i \in \mathcal{S}(m, t)} 1_{b_{i}^t \neq b_m^{j_m}} + \max(0, T_m - T) \right] 
        \\
        &\overset{(g)}{\geq} \frac{\gamma}{4} \sum_{m=1}^M \left[ \max(0, T_m - c\gamma T_m^{\frac{3}{2}}\sqrt{\frac{M}{|\mathcal{B}|}}) + \max(0, T_m - T) \right] 
    \end{align}
   Inequality (e) holds since the maximum number of times that the $m$'th bid is equal to its optimal is at most $T$---i.e. $\sum_{t=1}^T 1_{b_m^t = b_m^{j_m}} \leq T$.
   Inequality (f) follows as for any $i \neq m, i \in \mathcal{S}(m, t)$, we have that $1_{b_i^t \neq b_i^{j_i}} = 1 \geq 1_{b_i^t \neq b_m^{j_m}}$ as $b_i^t \in \mathcal{B}_m$ and $b_i^{j_i} \notin \mathcal{B}_m$ for $i \neq m$. 
    To obtain inequality (g), we notice that the term $\sum_{t=1}^T \sum_{i \in \mathcal{S}(m, t)} 1_{b_{i}^t \neq b_m^{j_m}}$ is precisely the number of times that the optimal price $b_m^{j_m}$ was \textit{not} selected out of $\sum_{t=1}^T \sum_{i \in \mathcal{S}(m, t)} 1 = T_m$ times. From Theorem 4.3 and Lemma 4.4 of \cite{DemandCurve2003}, the number of times that the optimal action (price in their context and bid in our context) \textit{was selected} (out of $K$ possible actions) over the course of $T_m$ draws is upper bounded by $c\epsilon T_m^{\frac{3}{2}} K^{-\frac{1}{2}}$ for some absolute constant $c \in (2/3,1)$ (for example, $c = \frac{2\sqrt{2}}{3}$ as in \cite{DemandCurve2003}) and perturbation of size $\epsilon$ (i.e, the cost of not choosing the correct hypothesis). Plugging in $K = |\mathcal{B}_m| = \frac{|\mathcal{B}|}{M}$ and $\epsilon = \gamma_m = \gamma$, we obtain inequality (g).

    In the paradigm with large $T$ and small $M$, more specifically with $T \geq M^2 \to |\mathcal{B}| = M^{\frac{1}{3}}T^{\frac{1}{3}} \geq M$, we have $\gamma = \frac{1}{32\max(|\mathcal{B}|, M)} = \Theta(\frac{1}{|\mathcal{B}|}) = \Theta(M^{-\frac{1}{3}}T^{-\frac{1}{3}})$. Plugging this in, we obtain:
    \begin{align}
        \textsc{Regret}_{\mathcal{B}} \geq \frac{M^{-\frac{1}{3}}T^{-\frac{1}{3}}}{4} \sum_{m=1}^M  T_m\left[ \max(0, 1 - c\sqrt{\frac{T_m}{T}}) + \max(0, T_m - T) \right]\,.\label{eq: tm to t 4}
    \end{align}
    Since the agent seeks to minimize regret, they take the minimum over $\bm{T} = (T_1,\ldots,T_M)$, subject to $MT = \sum_{m=1}^M T_m \in \mathbb{N}^M$ for $T_m \in \mathbb{N}$.
    \begin{align*}
        \textsc{Regret}_{\mathcal{B}} \geq \frac{M^{-\frac{1}{3}}T^{-\frac{1}{3}}}{4} \sum_{m=1}^M  T_m\left[ \max(0, 1 - c\sqrt{\frac{T_m}{T}}) + \max(0, T_m - T) \right]  \quad \text{s.t.} \quad \sum_{m=1}^M T_m = MT
    \end{align*}

\begin{lemma} \label{lem:optimization}
Consider the following optimization problem for any $c\in (0, 1)$: 
\[\min_{\{T_m\}_{m\in [M]}, T_m \ge 0}\sum_{m=1}^M  T_m\left[ \max(0, 1 - c\sqrt{\frac{T_m}{T}}) + \max(0, T_m - T) \right]  \quad \text{s.t.} \quad \sum_{m=1}^M T_m = MT\,.\]
where $T_m\in \in \mathbb{Z} $ are integers. 
Then, the optimal solution happens at $T_m= T$ for any $m\in [M]$.
\end{lemma}

The proof of the theorem is completed by applying Lemma  \ref{lem:optimization}. Plugging in $T_m = T$ into Equation~\ref{eq: tm to t 4}, we obtain a regret lower bound of $\textsc{Regret}_{\mathcal{B}} \geq \frac{1-c}{4} M^{\frac{2}{3}}T^{\frac{2}{3}} = \Omega(M^{\frac{2}{3}}T^{\frac{2}{3}})$.

$\blacksquare$

    \begin{proof}{Proof of Lemma \ref{lem:optimization}}
    To show that the optimal solution is precisely $\bm{T} = (T,\ldots,T)$, we minimize the objective function over the set of $T_m \in \mathbb{R}^+$ subject to $\sum_{m=1}^M T_m \geq MT$, which is subsumes the original equality constraint $\sum_{m=1}^M T_m = MT$ for integral $T_m \in \mathbb{Z}$. We show that $\bm{T} = (T,\ldots,T)$ is a solution to this relaxed problem, and since this satisfies the inequality and integrality constraints, it must be optimal for the original problem. We begin by finding first order conditions for the objective function: 
    \begin{align*}
        R(\bm{T}) \doteq \sum_{m=1}^M  \left[T_m \max(0, 1 - c\sqrt{\frac{T_m}{T}}) + \max(0, T_m - T) \right]\,.
    \end{align*}
    We consider the following cases:
    
    \begin{enumerate}
        \item If $1 \leq \frac{T_m}{T} \leq \frac{1}{c^2}$, the partial derivative of $R(\bm{T})$ with respect to $T_m= T(\frac{4}{3c})^2$  is zero; that is,  
        we have $0 = \delta_{T_m} R(\bm{T}) = 2 - \frac{3}{2}c\sqrt{\frac{T_m}{T}} \to T_m = T(\frac{4}{3c})^2$.
        However, this does not imply that  $T_m = T(\frac{4}{3c})^2$ is a minimizer as 
     it violates the second order optimality conditions. In particular,  for any value of $T_m > 0$, we have 
        \begin{align*}
            \delta_{T_m^2} R(\bm{T}) = -\frac{3}{4}c\sqrt{\frac{1}{T_mT}} < 0\,.
        \end{align*}
        Given that the second derivative is negative and the first derivative is positive for \( T_m < T \left( \frac{4}{3c} \right)^2 \) and negative for \( T_m > T \left( \frac{4}{3c} \right)^2 \), we only need to examine the boundary points \( T \) and \( \frac{T}{c^2} \). 

        \item If $\frac{T_m}{T} \leq 1 \leq \frac{1}{c^2}$, we have $0 = \delta_{T_m} R(\bm{T}) = 1$. This derivative is always positive. We then only need to check for the minimum boundary case, which is $T_m = 0$ as the derivative is positive.
        \item If $1 \leq \frac{1}{c^2} \leq \frac{T_m}{T}$, we have $0 = \delta_{T_m} R(\bm{T}) = 1 - \frac{3}{2}c\sqrt{\frac{T_m}{T}} \to T_m = T(\frac{2}{3c})^2$.  However, this does not imply that  $T_m = T(\frac{2}{3c})^2$ is a minimizer as 
     it violates second order optimality. In particular, for any value of $T_m > 0$, we have:
        \begin{align*}
            \delta_{T_m^2} R(\bm{T}) = -\frac{3}{4}c\sqrt{\frac{1}{T_mT}} < 0
        \end{align*}
        The first derivative is positive when $T_m < T(\frac{2}{3c})^2$ and negative when $T_m > T(\frac{2}{3c})^2$. This implies that we only need to check $T_m$ at the two boundary points, which are $T_m = \frac{T}{c^2}$ and $T_m = \infty$ (which is clearly sub-optimal). 
    \end{enumerate}
    In summary, we only need to check for the cases where $T_m \in \{0, T, \frac{T}{c^2}\}$. We check the value of the $m$'th summand for each of these cases:
    \begin{enumerate}
        \item $T_m = 0$: We have $T_m \max(0, 1 - c\sqrt{\frac{T_m}{T}}) + \max(0, T_m - T) = 0$.
        \item $T_m = T$: We have $T_m \max(0, 1 - c\sqrt{\frac{T_m}{T}}) + \max(0, T_m - T) = (1 - c)T$.
        \item $T_m = \frac{T}{c^2}$: We have $T_m \max(0, 1 - c\sqrt{\frac{T_m}{T}}) + \max(0, T_m - T) = \frac{T}{c^2} - T$.
    \end{enumerate}
   Let $x,y,z$ denote the number of $T_m$ equal to $0, T, \frac{T}{c^2}$ respectively. That is, $x =|\{m \in [M]: T_m = 0\}|$, $y =|\{m\in [M]: T_m = 0\}|$, $z =|\{m\in [M]: T_m = T/c^2\}|$.
    Then, we have that $x + y + z = M$ and $yT + z\frac{T}{c^2} \geq MT$. Rewriting $R(\bm{T})$:
    \begin{align*}
        R(\bm{T}) &= (1-c)Ty + (\frac{1}{c^2} - 1)Tz\,.
    \end{align*}
    We minimize over the set of $(x, y, z)$, yielding the following optimization problem:
    \begin{align*}
        \min_{x, y, z} (1-c)Ty + (\frac{1}{c^2} - 1)Tz \quad \text{s.t} \quad x + y + z = M, \quad x, y, z \geq 0, \quad Ty + \frac{Tz}{c^2} \geq MT\,.
    \end{align*}
    As $x$ does not appear in the objective function, it is effectively a slack variable, and we can rewrite this optimization problem over $(x, y, z)$ as one over $(y, z)$ only:
    \begin{align*}
        \min_{y, z} (1-c)Ty + (\frac{1}{c^2} - 1)Tz \quad \text{s.t} \quad y + z \leq M, \quad y, z \geq 0, \quad Ty + \frac{Tz}{c^2} \geq MT\,,
    \end{align*}
    where in the above optimization, we ignore integrality of $y, z$. If we can show that $(y, z) = (M, 0)$  is the optimal solution to the above optimization problem, then we are finished.  The constraints are linear in $y$ and $z$ and form a triangle with vertices at $(y, z) \in \{(M,0), (0, c^2M), (0, M)\}$. We note that the objective function is linear, thus all it remains to do is to compare the objective function values at each of the vertices. The objective function values corresponding to $(y, z) \in \{(M,0), (0, c^2M), (0, M)\}$ are $(1-c)MT, (1-c^2)MT, (\frac{1}{c}-1)MT$ respectively. Thus, for any $c \in (0, 1)$, the optimal solution is $(M, 0)$ as desired, showing that $\bm{T} = (T,\ldots,T)$ was optimal in the original equality constrained, integer constrained optimization problem.
\end{proof}}

\section{Additional Experiments}
\label{sec: additional experiments}

In this section, we run several additional experiments, including a comparison of the decoupled exponential weights algorithms (Algorithms~\ref{alg: Decoupled Exponential Weights} and \ref{alg: Decoupled Exponential Weights - Path Kernels}) and the OMD algorithms (Algorithm~\ref{alg: OMD}) under both full information and bandit feedback in the stochastic setting. We additionally provide more comprehensive experiments regarding the market dynamics of the PAB and uniform price auctions. In particular, we plot the evolution of the revenue, welfare, and winning bid gaps over time, as well compare the bandit feedback versions of the PAB and uniform price auctions.

\subsection{Stochastic Setting} \label{sec:stochastic}

To compare the regret incurred between our decoupled exponential weights and OMD based algorithms in a non-multi agent setting, we consider the setting where the bidder competes in a stochastic setting. Here, the bidder, endowed with valuation vector $\bm{v} = [1, 1, 1]$, will compete over the course of $T=10^4$ rounds for full information feedback ($T = 10^5$ rounds for bandit feedback) for $\overline{M}=M=3$ items. The competing bids are  $\bm{b}^{-1} = [0.1, 0.1, 0.1]$, $[0.3,0.3, 1.0]$, or $[0.4, 1.0, 1.0]$ with probabilities $\frac{1}{4}, \frac{1}{4}$, and $\frac{1}{2}$, respectively. Assuming that the bidder receives priority in tiebreaks, with $\mathcal{B} = \{\frac{i}{10}\}_{i \in [10]}$, the expected utility $\sum_{m=1}^3 \prob(b_m \geq b^{-1}_m) (1 - b_m)$ maximizing bid vector is given by $\bm{b} = [0.4, 0.3, 0.1]$, which yields utility $(1)(1-0.4) + (0.75)(1 - 0.3) + (0.5)(1-0.1) = 0.6 + 0.525 + 0.45 = 1.575$. We select learning rates $\eta = \sqrt{\tfrac{\log(|\mathcal{B}|)}{MT}}$ and $\eta = \sqrt{\tfrac{\log(|\mathcal{B}|)}{M|\mathcal{B}|T}}$ for the full information and bandit decoupled exponential weights algorithms, respectively. We select learning rates $\eta = \sqrt{\tfrac{\log(|\mathcal{B}|)}{T}}$ and $\eta = \sqrt{\tfrac{\log(|\mathcal{B}|)}{|\mathcal{B}|T}}$ for the full information and bandit OMD algorithms, respectively. For the uniform price algorithms, we select $\eta = \sqrt{\tfrac{\log(|\mathcal{B}|)}{MT}}$ and $\eta = \sqrt{\tfrac{\log(|\mathcal{B}|)}{M^3|\mathcal{B}|^2T}}$ for full information and bandit feedback, respectively.

\begin{figure}
    \centering
    \includegraphics[scale=0.37, trim={0 0 0 0}, clip]{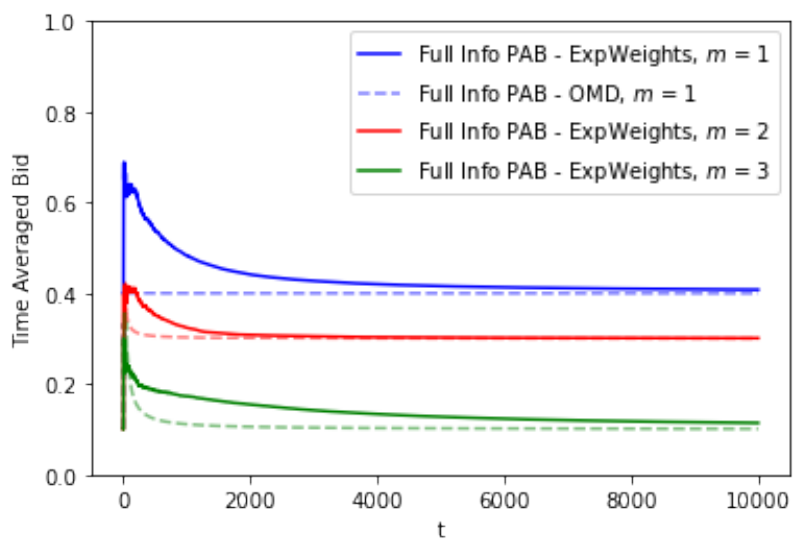}
    \includegraphics[scale=0.37, trim={20 0 0 0}, clip]{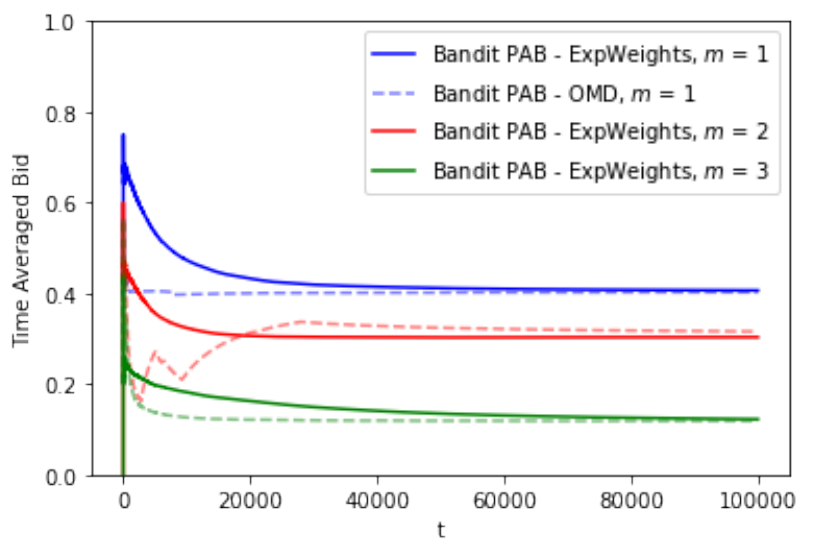}
    \includegraphics[scale=0.37, trim={20 0 0 0}, clip]{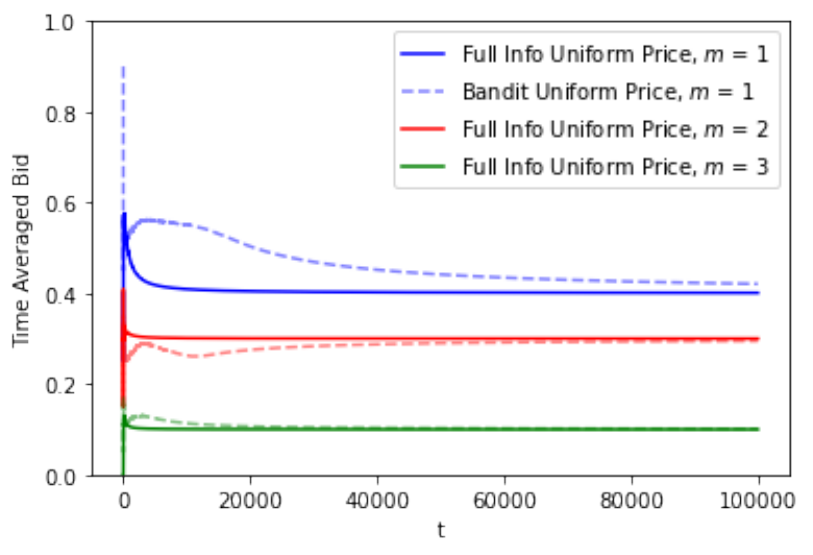}
    \caption{Bid convergence over time under the stochastic setting in Section \ref{sec:stochastic} for PAB with full information (left), PAB with bandit feedback (middle), and the uniform price auction (right). For the PAB auctions, the solid (resp. dashed) line denotes the decoupled exponential weights (resp. OMD). For uniform price, the solid and dashed lines are for full information and bandit feedback respectively. The top, middle, and bottom lines denote the time averaged values of $b_1, b_2$, and $b_3$ respectively.} 
    \label{fig:exp2_bids}
\end{figure}

In Figure \ref{fig:exp2_bids}, we plot the average value of each bid over time. Here, the bidder's objective is to learn the optimal bid vector under each of our algorithms: decoupled exponential weights algorithm (Algorithms~\ref{alg: Decoupled Exponential Weights} and \ref{alg: Decoupled Exponential Weights - Path Kernels}), and the OMD algorithm  (Algorithm \ref{alg: OMD}) for both the full information and bandit settings. In this figure, we further compare the rate of convergence to the optimal bid vector of $[0.4, 0.3, 0.1]$ with our three algorithms. 

We observe that the full information OMD converges the fastest to the optimal bid, followed by full information decoupled exponential weights. Similarly, the bandit OMD converges faster than the bandit exponential weights algorithm, albeit slower than either full information variant. This behavior is consistent with our theoretical findings. 

We repeat this experiment for the algorithms to learn in uniform price auctions described in \cite{brânzei2023online}. Though we do not perform the calculations, the optimal bid vector in the uniform price setting is still $[0.4, 0.3, 0.1]$. We note that it takes noticeably longer for the bandit algorithm to converge as compared to either its full information variant or our Algorithms \ref{alg: Decoupled Exponential Weights - Path Kernels} or \ref{alg: OMD}, as predicted by the looser regret upper bounds:
\begin{theorem}[(Discrete) Regret in Uniform Price Auctions, \cite{brânzei2023online}] \label{thm: uniform price regret full}
    Under full information feedback (resp. bandit feedback), there exists an algorithm which achieves $O(M^{\frac{3}{2}}\sqrt{T \log |\mathcal{B}|})$ (resp. $O(M^{\frac{5}{2}}|\mathcal{B}|T^{\frac{1}{2}}\sqrt{\log |\mathcal{B}|} + M^2 \log |\mathcal{B}|)$) discrete regret.
\end{theorem}

\subsection{Market Dynamics over Time}\label{sec:market_dynamics_additional}

{\color{black} In this section, we run five additional batches of experiments to better understand the market dynamics. We first analyze the evolution of welfare and revenue over time for a fixed $M, N, |\mathcal{B}|, T$. In the second experiment, we look at the evolution of the ratios between the largest winning, smallest winning, and largest losing bids over time. In the last three experiments, we repeat the previous two experiments for the uniform price auction, as well as the first experiment in Section~\ref{sec: experiments} that shows the regret dependence on $M$ and $T$.}

{\color{black}
\textbf{First Experiment: Welfare and Revenue Over Time - PAB.} In this next set of experiments, we compare welfare and revenue over time. We use $M = 5, |\mathcal{B}| = 20, N = 3, T = 10^5$ for these simulations, though we note that our findings are consistent across parameterizations. In Figure \ref{fig:rev_wel_over_time}, we further compare the distribution of welfare and revenue over time showing the 10th, 25th, 50th, 75th, and 90th percentiles in different shades under Algorithms~\ref{alg: Decoupled Exponential Weights} and \ref{alg: Decoupled Exponential Weights - Path Kernels}. We normalize both welfare and revenue by the optimal welfare (sum of the largest $M$ valuations) in each instance. In comparison to the full information version, it takes longer for the bidders to settle to an approximately welfare maximizing steady state in the bandit setting. Furthermore, the revenue at the recovered steady state under bandit feedback is lower than that of the full information setting, albeit with lower variance.

Additionally, we observe that revenues initially converge for intermediate values of $t$ before decreasing and subsequently increasing in variance for larger $t$. To explain this, it's essential to understand the evolution of the ratios between the largest and smallest winning bids, as well as the largest losing bid (See Figure~\ref{fig: bid ratios PAB}). Specifically, the decrease in revenue coincides with when the winning bids are far apart. The sudden drop in revenue indicates that bidders have learned to strategically shade their winning bids down to the clearing price. Eventually, the bids converge to either the clearing price, which naturally has some variance as it is a random function of the valuations and input parameters $N, M, |\mathcal{B}|$, or a best response cycle where bidders undercut and subsequently slightly outbid one another indefinitely (See Figure~\ref{fig: bid cycling main body}).

\begin{figure}
     \centering
     \begin{subfigure}[b]{0.3\textwidth}
         \includegraphics[scale=0.52, trim={0 0 0 25},clip]{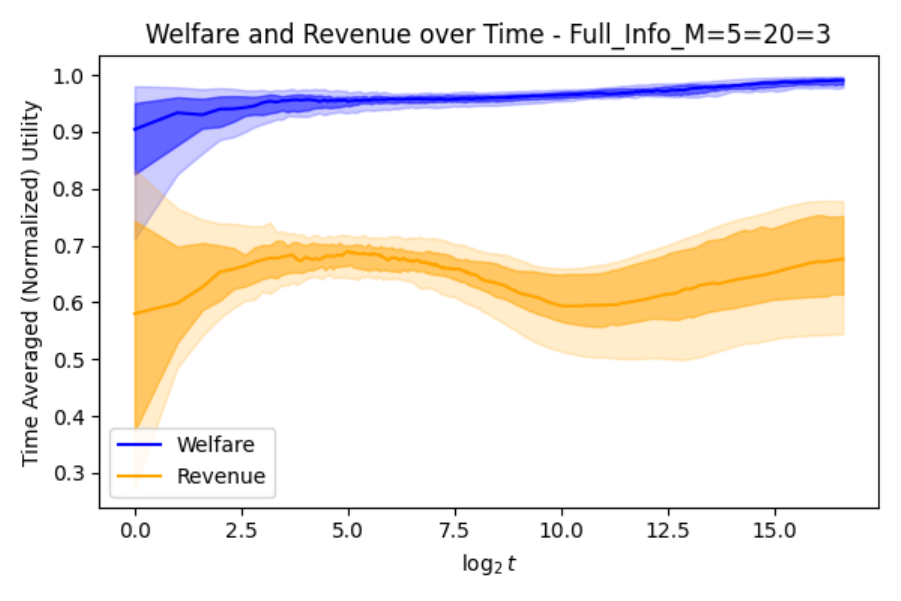}
     \end{subfigure}
     \hfill
     \begin{subfigure}[b]{0.51\textwidth}
         \includegraphics[scale=0.52, trim={0 0 0 25},clip]{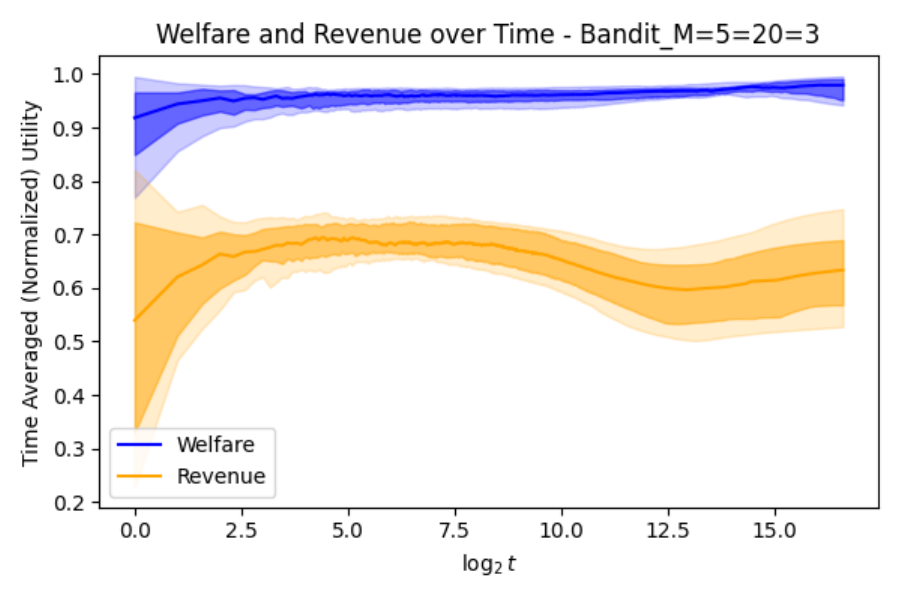}
     \end{subfigure}
    \caption{Welfare (top) and revenue (bottom) over time: The left (resp. right) figure corresponds with the full information setting (resp. bandit setting) for $M = 5$, $|\mathcal{B}| = 20$, $N = 3$, $T = 10^5$.  }. \label{fig:rev_wel_over_time}
\end{figure}

{\color{black}
\textbf{Second Experiment: Convergence of Winning Bids Over Time - PAB.}  In Figure \ref{fig: bid ratios PAB}, we compare the evolution of the ratio of the largest and smallest winning bids (top), as well as the ratio of the smallest winning bid to the largest losing bid (bottom). We find that the ratios both converge to approximately 1 over time, though somewhat faster  under full information feedback. Despite the existence of CCEs supported over joint bid profiles with large gaps between winning bids (see Lemma~\ref{lem: CCE non uniform winning}), the convergence of these ratios suggests near uniformity of the winning bids over the course of the learning dynamics induced by our decoupled exponential weights algorithms.}

\begin{figure}
     \centering
     \begin{subfigure}[b]{0.2\textwidth}
         \includegraphics[scale=0.5, trim={0 0 0 25},clip]{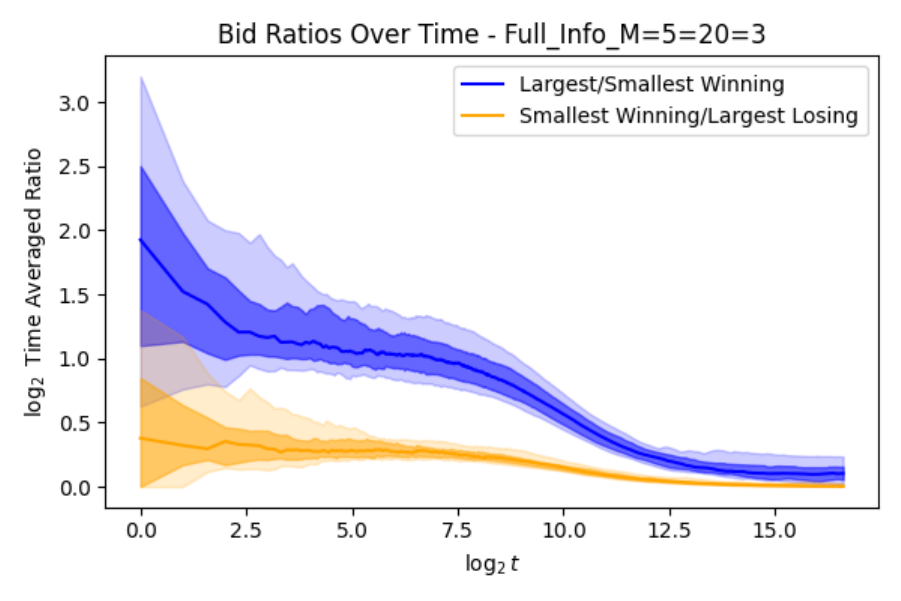}
     \end{subfigure}
     \hfill
     \begin{subfigure}[b]{0.5\textwidth}
         \includegraphics[scale=0.5, trim={0 0 0 25},clip]{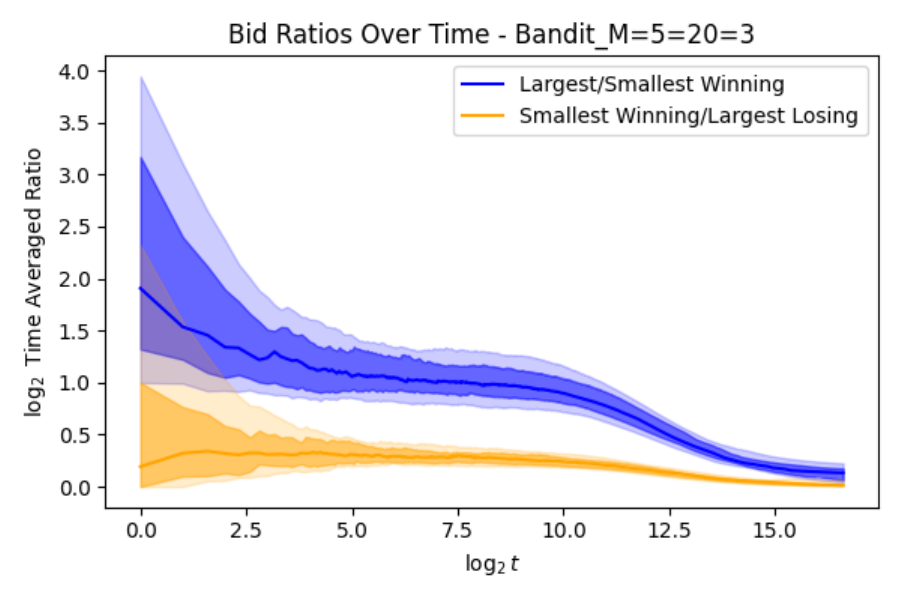}
     \end{subfigure}
    \caption{\textbf{Winning Bid Convergence. } Ratios of the largest winning bid to the smallest winning bid (top) and the smallest winning bid to the largest losing bid (bottom): The left (resp. right) figure corresponds with the PAB dynamics under full information (resp. bandit feedback).}. \label{fig: bid ratios PAB}
\end{figure}
{\color{black}\textbf{Third Experiment: Regret as a Function of $M$ and $T$ - Uniform Price.} To better understand the impact of $M$ and $T$ on the continuous regret, we run the repeated auction setting with varying $T, M$ for the uniform price auction. From \cite{brânzei2023online}, the continuous regret of their no-regret learning algorithm is of order $O(M^\frac{3}{2}T^{\frac{1}{2}} \log T)$ and $O(\min(MT, M^\frac{7}{4}T^{\frac{3}{4}} \log T))$ in the full information and bandit feedback settings respectively. To obtain these regret guarantees, we select $|\mathcal{B}|$ and $\eta$ to balance the discretization error of order $O(\frac{MT}{|\mathcal{B}|})$ and discrete regret: for the full information setting specifically, we set $|\mathcal{B}| = \max(5, \sqrt{\frac{T}{M}})$ and $\eta = \sqrt{\frac{\log T}{MT}}$. For bandit feedback, we set $|\mathcal{B}| = \max(5, M^{-\frac{3}{4}}T^\frac{1}{4})$ and $\eta = \min(\frac{1}{M}, \sqrt{\frac{\log |\mathcal{B}|}{M^3 |\mathcal{B}|^2T}})$. We plot the $\log-\log$ sum of discrete regrets across all $N=3$ bidders, normalized by $NT$ to obtain the per-bidder, per-round average regret; see Figure \ref{fig:regret-plot-unif}. Running a linear regression on the median, we find that the slopes w.r.t. $T$ of the median regret are approximately $-\frac{1}{2}$ and $-\frac{1}{6}$ for the full information and bandit settings respectively. This accurately reflects the theoretical regret bound in the full information setting and similarly, the slope for the bandit setting falls between the predicted $-\frac{1}{4}$ and 0. Interestingly, the slopes of the median w.r.t. $M$ are approximately $\frac{5}{4}$ and $\frac{7}{4}$ for the full information and bandit feedback settings respectively. While the bandit feedback setting is consistent with theory, the full information feedback regret dependence on $M$ is $O(M^{\frac{1}{4}})$ faster than predicted.}

\begin{figure}
    \centering
    \includegraphics[scale=0.5, trim={0 0 0 25},clip]{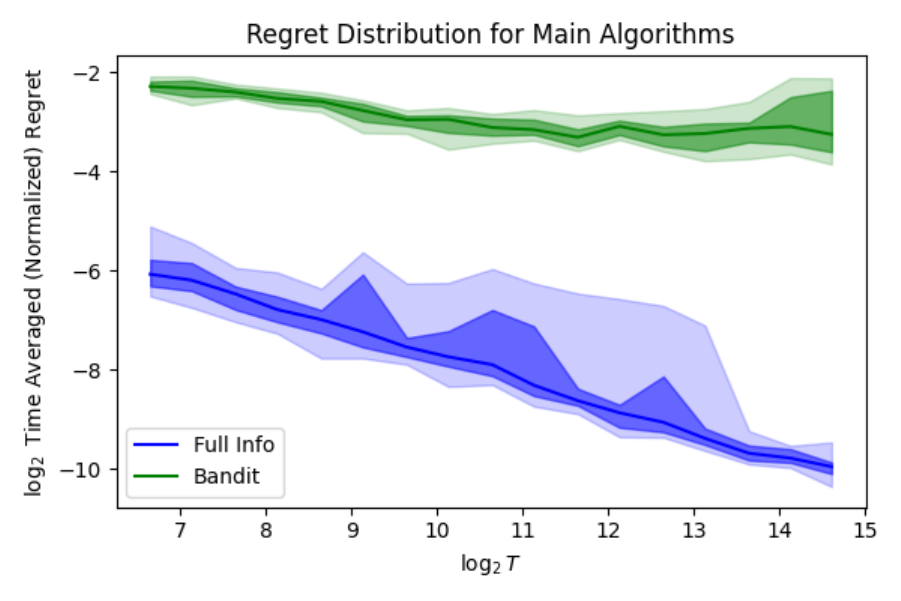}
    \includegraphics[scale=0.5, trim={0 0 0 25},clip]{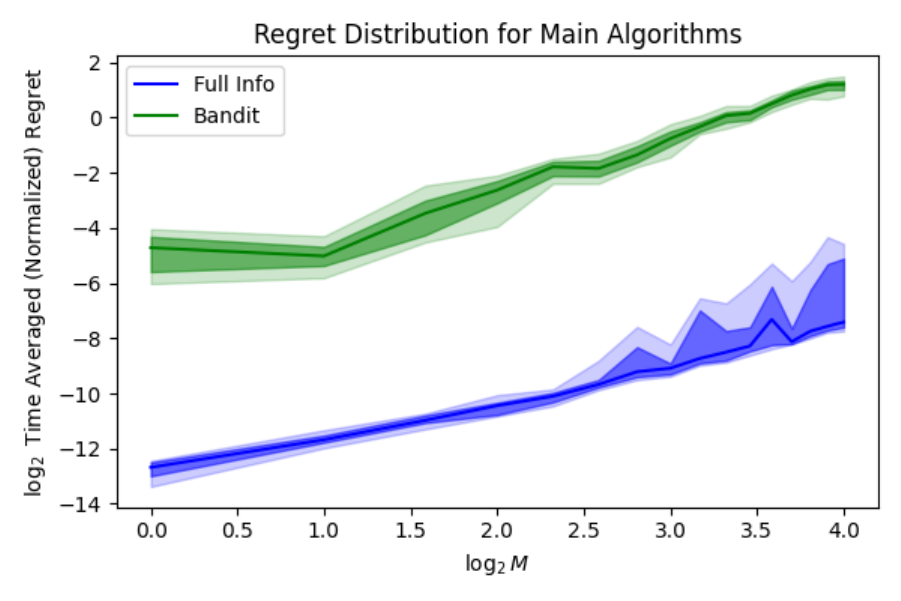}
    \caption{Uniform Price Regret as a Function of $M$ and $T$. We compare the time-averaged aggregate discretized regret across all agents for uniform price, under both full information (bottom) and bandit feedback (top), for both varying $T$ (left) and $M$ (right). When varying $T$ (resp. $M$), we fix $M = 5$ (resp. $T = 25000$) and derive $|\mathcal{B}|$ and $\eta$ according to \cite{brânzei2023online}.}
    \label{fig:regret-plot-unif}
\end{figure}

{\color{black}\textbf{Fourth and Fifth Experiments: Welfare, Revenue, Bid Ratios over Time - Uniform Price.} 
In Figure \ref{fig: uniform price market dynamics}, we present the evolution-over-time of the welfare, max-welfare normalized revenue, and ratios of the largest bids for the uniform price auction, under full information feedback. We use parameterization $N=3, M = 5, |\mathcal{B}| = 20, T = 10^5$ (our insights are consistent across parameterizations, see next section). Under uniform price, the payment is determined by the lowest winning bid, thus, less incentive to manipulate the largest bids. This observation aligns with findings from other variants of discriminatory versus uniform pricing, such as generalized first and second price auctions \cite{kagel1986curse, nyborg1996discvsunif, bergemann2020winnerscurse}. This explains the lower revenue and bid ratio non-convergence, whereas the slightly improved welfare stems from the tie-breaking rule (see Section~\ref{sec: experiments}).}

\begin{figure}
     \centering
     \includegraphics[scale=0.5, trim={0 0 0 25},clip]{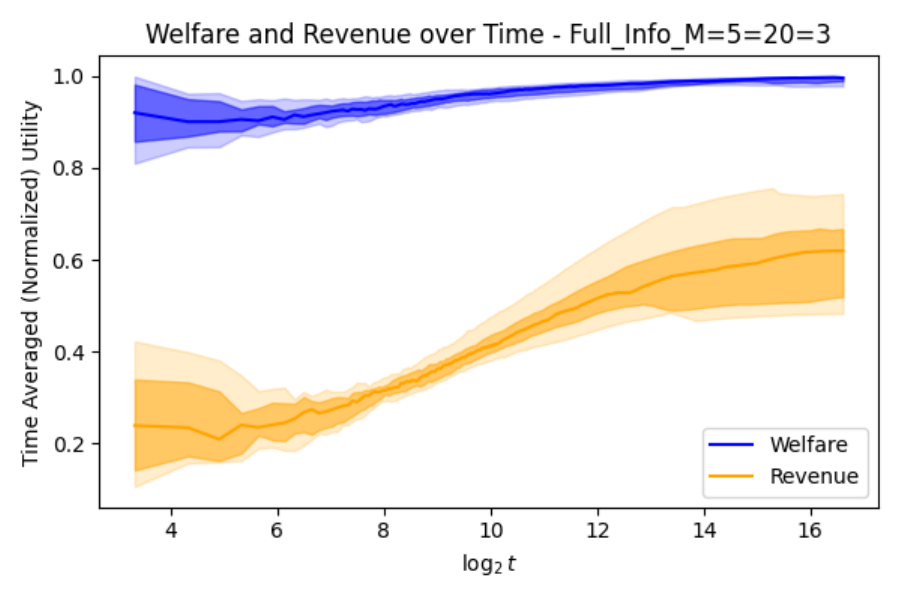}
     \includegraphics[scale=0.5, trim={0 0 0 25},clip]{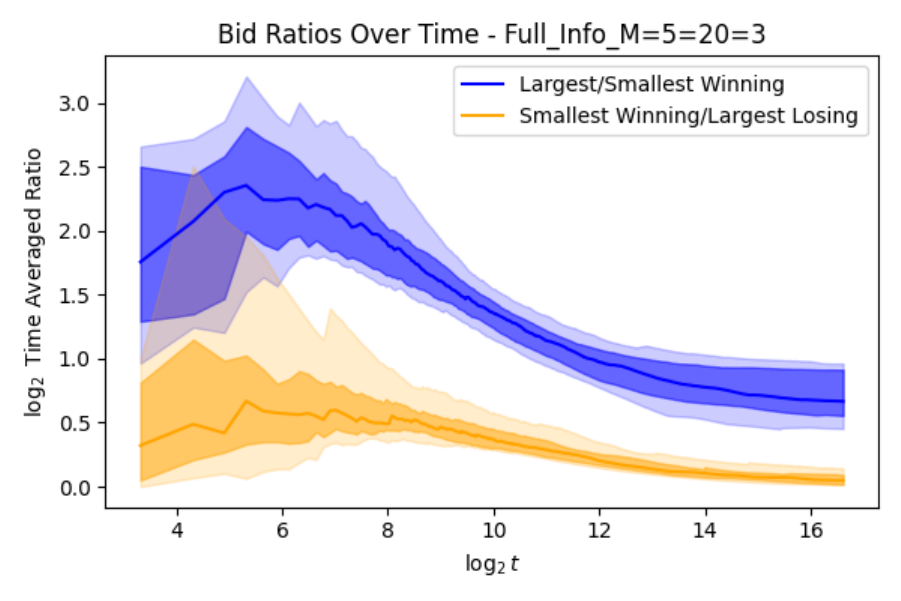}
    \caption{\textbf{Uniform Price Welfare, Revenue, Bid Ratios Over Time.} In the left figure, we plot the welfare (top) and revenue (bottom) of the uniform price auction over time, under full information with $M = 5, |\mathcal{B}| = 20, N = 3, T = 10^5$. In the right figure, we plot the ratios of the largest winning to smallest winning bids (top), and the smallest winning to largest losing bids (bottom).} 
    \label{fig: uniform price market dynamics}
\end{figure}

\subsection{Comparison of PAB and Uniform Price Auctions under Bandit Feedback}\label{sec:compare_additional}

{\color{black} In this section, we compare the regret, welfare, revenue, bid ratios, and competitive ratios of the equilibria under the bandit feedback version of both the PAB and uniform price no-regret learning algorithms. In particular, we run the same parameterizations with $N \in \{3, 5\}, M \in \{1, 5, 10\}, |\mathcal{B}| \in \{10, 20\}, T = 10^5$ as in Tables~\ref{table: learning dynamics full info} and \ref{table: learning dynamics full uniform}, except now for bandit feedback. We use $\eta = \sqrt{\frac{\log |\mathcal{B}|}{M|\mathcal{B}|T}}$ for PAB and $\eta = \sqrt{\frac{\log |\mathcal{B}|}{M^3|\mathcal{B}|^2T}}$ for uniform price. Our results are shown in Tables~\ref{table: learning dynamics bandit info} and \ref{table: learning dynamics bandit uniform}. While most of our findings are consistent with the full information market dynamics, it is clear that the market dynamics have not yet converged for larger $M, |\mathcal{B}|$, especially for the uniform price auction, as indicated by the large regret, bid ratios, competitive ratios, and revenue gaps. Consequently, we analyze the PAB and uniform price equilibria exclusively under full information, as presented in Section~\ref{sec: experiments}.}

\begin{table}[ht]
    \centering
    \begin{minipage}[t]{0.48\linewidth} 
        \caption{Summary of the Bidding Dynamics (Bandit, Algorithm~\ref{alg: Decoupled Exponential Weights - Path Kernels})}
        \label{table: learning dynamics bandit info}
        \footnotesize
        \begin{tabular}{lcccc}
        \toprule
         & \multicolumn{2}{c}{$|\mathcal{B}| = 10$} & \multicolumn{2}{c}{$|\mathcal{B}| = 20$} \\
        \cmidrule(lr){2-3} \cmidrule(lr){4-5}
        Metric & $N = 3$ & $N = 5$ & $N = 3$ & $N = 5$ \\
        \midrule
        \multicolumn{5}{c}{\textbf{For \( M = 1 \)}} \\
        Regret  & .8/2.5 & .5/2.5 & .3/1.9 & .8/2.7 \\
        Welfare Gap  & .40/11 & .50/4 & .86/14 & 1.6/13 \\
        Revenue Gap  & 28/78 & 18/47 & 38/69 & 24/41 \\
        CR Gap  & 1.6/56 & 2.2/25 & 1.6/53 & 8.2/52 \\
        Runtime (sec) & 321 & 511 & 415 & 879 \\
        \midrule
        \multicolumn{5}{c}{\textbf{For \( M = 5 \)}} \\
        Regret  & .4/.7 & .6/1.2 & .5/1.2 & .9/1.4 \\
        Welfare Gap  & 2.5/7.7 & 2.7/5.4 & 2.1/5.8 & 2.6/5.1 \\
        Revenue Gap  & 37/49 & 21/27 & 37/47 & 22/28 \\
        $b_{(1)}/b_{(M)}$ & 1.17/1.26 & 1.12/1.17 & 1.10/1.17 & 1.06/1.11 \\
        CR Gap  & 3.6/23 & 13/20 & 8.7/21 & 12/23 \\
        Runtime (sec) & 1844 & 2843 & 2660 & 5123 \\
        \midrule
        \multicolumn{5}{c}{\textbf{For \( M = 10 \)}} \\
        Regret  & 3.9/16 & 2.3/5.8 & 9.6/20.5 & 4.3/8.6 \\
        Welfare Gap  & 3.2/6.9 & 2.6/4.4 & 3.2/6.6 & 2.6/4.4 \\
        Revenue Gap  & 34/44 & 20/25 & 37/43 & 21/26 \\
        $b_{(1)}/b_{(M)}$ & 1.21/1.26 & 1.14/1.18 & 1.12/1.16 & 1.08/1.1 \\
        CR Gap  & 4.6/18 & 6/15 & 9/23 & 10/20 \\
        Runtime (sec) & 3912 & 5222 & 5098 & 9731 \\
        \bottomrule
        \end{tabular}
    \end{minipage}\hfill
    \begin{minipage}[t]{0.48\linewidth} 
        \caption{Summary of the Bidding Dynamics (Uniform Price, Bandit \cite{brânzei2023online})}
        \label{table: learning dynamics bandit uniform}
        \footnotesize
        \begin{tabular}{lcccc}
        \toprule
         & \multicolumn{2}{c}{$|\mathcal{B}| = 10$} & \multicolumn{2}{c}{$|\mathcal{B}| = 20$} \\
        \cmidrule(lr){2-3} \cmidrule(lr){4-5}
        Metric & $N = 3$ & $N = 5$ & $N = 3$ & $N = 5$ \\
        \midrule
        \multicolumn{5}{c}{\textbf{For \( M = 1 \)}} \\
        Regret&	5.1/13.3&	3.7/6.3&	6.3/12&	5.4/7.4\\
        Welfare Gap&	1.5/26&	.8/15&	3.1/33&	17/32\\
        Revenue Gap&	45/70&	38/50&	49/67&	38/45\\
        CR Gap&	39/72&	42/64&	35/72&	66/77\\
        Runtime (sec) & 166 & 241 & 317 & 466\\
        \midrule
        \multicolumn{5}{c}{\textbf{For \( M = 5 \)}} \\
        Regret&	22/35&	9.1/11&	12/13&	11/12\\
        Welfare Gap&	2.3/7.8&	3.5/5.7&	4.4/6.1&	5.6/7.3\\
        Revenue Gap&	80/85&	47/58&	60/66&	43/48\\
        $b_{(1)}/b_{(M)}$ &	7.1/16&	2.2/4.5&	3/3.9&	1.7/2\\
        CR Gap&	19/24&	19/23&	14/16&	26/30\\
        Runtime (sec) & 1531 & 2612 & 5323 & 8777\\
        \midrule
        \multicolumn{5}{c}{\textbf{For \( M = 10 \)}} \\
        Regret&	98/147&	36/39&	53/60&	40/43\\
        Welfare Gap&	4.6/9.3&	3.9/6.9&	5.3/9.2&	5.4/6.5\\
        Revenue Gap&	98/98&	59/63&	89/90&	56/60\\
        $b_{(1)}/b_{(M)}$ &	43/46&	4.8/5.5&	21/24&	3.4/3.6\\
        CR Gap&	29/34&	26/30&	19/21&	30/32\\
        Runtime (sec) & 2643 & 5000 & 9783 & 14532\\
        \bottomrule
        \end{tabular}
    \end{minipage}
\end{table}

\newpage

\TITLE{Learning in Repeated Multi-Unit Pay-As-Bid Auctions}

\ARTICLEAUTHORS{
\AUTHOR{Rigel Galgana}
\AFF{Operations Research Center, Massachusetts Institute of Technology, \EMAIL{rgalgana@mit.edu}, \URL{}}
\AUTHOR{Negin Golrezaei}
\AFF{Sloan School of Management, Massachusetts Institute of Technology,  \EMAIL{golrezaei@mit.edu}, \URL{}}
} 

\begin{center}
    {\LARGE{Learning in Repeated Multi-Unit Pay-As-Bid Auctions}}

    {\small{Rigel Galgana}}

    {\scriptsize{Operations Research Center, Massachusetts Institute of Technology, rgalgana@mit.edu}}

    {\small{Negin Golrezaei}}

    {\scriptsize{Sloan School of Management, Massachusetts Institute of Technology, golrezaei@mit.edu}}
\end{center}

\maketitle

In this online accompaniment, we include all of the omitted proofs of several of our key results as well as additional discussion and experiments regarding algorithm implementation and market dynamics. We conclude with a discussion regarding both the feasibility and practicality of the PAB versus Uniform Price auction formats.

\section{Online Appendix - Missing Proofs}

In this section, we include proofs of several of our key results. We first precisely characterize the PAB equilibrium under our $c = \lfloor v_{(M)}\rfloor_\delta = c_{-n}$ competitiveness assumption. Second, we flesh out the details of our OMD algorithm as well as its full proof, regret analysis, and complexity analysis. Lastly, we prove our time-varying valuations generalization of our decoupled exponential weights algorithm.

\subsection{Proof of Theorem \ref{thm: PNE existence}: Existence of an Approximately Efficient PNE}
\label{sec: equilibrium proof}

\textsc{Theorem 1} \textbf{(Existence of an Apprixxmately Efficient PNE)}

\emph{Define the clearing price $c = \lfloor v_{(M)} \rfloor_{\delta}$ to be the $M$'th largest valuation among all bidders, denoted by $v_{(M)}$,  rounded down to the nearest multiple of $\delta$, and similarly, define $c_{-n} = \lfloor v_{(-n, M)} \rfloor_{\delta}$ to be the rounded $M$'th largest valuation among all bidders except $n$. If $c = c_{-n}$ for all $n\in [N]$ and ties are broken in favor of higher indexed bidders, then there exists a PNE $(\bm{b}_1,\ldots,\bm{b}_N)$ where each bidder $n\in [N]$:  
    \begin{enumerate}
        \item submits bids of either all $c$ or all $c + \delta$ for units such that $v_{n,m} \geq c + \delta$,
        \item submit bids of $c$ for all units such that $v_{n,m} \in [c, c + \delta)$,
        \item submit bids smaller than $c$ for all other units.
    \end{enumerate}
    Moreover, this PNE is $M\delta$-approximately welfare optimal:
    \begin{align*}
        \sum_{m=1}^M v_{(m)} - \sum_{n=1}^N \sum_{m=1}^{x(\bm{b}_n, \bm{b}_{-n})} v_{n,m} \leq M\delta\,.
    \end{align*}}

{\color{black}
\begin{proof}{Proof of Theorem~\ref{thm: PNE existence}}

To prove Theorem~\ref{thm: PNE existence}, we proceed in two steps. First, we demonstrate that if all other bidders adhere to the three properties outlined in Theorem~\ref{thm: PNE existence}, then bidder $n$'s optimal bids must also satisfy these properties. Afterwards, using a monotonicity argument, we show that under our particular deterministic tie-breaking rule, there exists a specific configuration of bids—either $c$ or $c + \delta$—among each bidder’s winning bids that constitutes a PNE.

\textbf{First Part of the Proof.} {We first show the three properties. The first property implies that if for every bidder $i\ne n$, we have  bids of either all $c$ or all $c + \delta$ for units such that $v_{i,m} \geq c + \delta$, then bidder $n$ also submits bids of  either all $c$ or all $c + \delta$ for units such that $v_{n,m} \geq c + \delta$. This 
follows immediately from Lemma~\ref{lem: PNE uniform bidding} and recalling the definition of $c$.

 To demonstrate the second property, we argue that there is no incentive for bidder $n$ to decrease their winning bids below $c$, specifically for any unit $m$ with $v_{n, m} \in [c, c+\delta)$. Given that there are at least $M$ other valuations equal to or greater than $c = c_{-n}$, and considering the PNE characterization which states that for any bidder $i \neq n$, all units with a valuation at or above the clearing price (i.e., $v_{i, m} \geq c$, $i \in [N], i\ne n$) must be accompanied by a bid of either $c$ or $c + \delta$, it follows that there are at least $M$ bids of at least $c = c_{-n}$ submitted by other bidders. Therefore, reducing any of bidder's $n$ winning bids below $c$ would not result in winning a unit, making it sub-optimal.

Lastly, to show the final property, we show that there exists no incentive for bidder $n$ to increase their bid for the remaining items (i.e., any item $m$ with $v_{n, m}< c$) to at least $c$, we recall the definition of $c= c_{-n}$ being the $M$'th largest valuation among all bidders except bidder $n$. Thus, bidding above $c$ for these units violates the no-overbidding assumption.}

\textbf{Second Part of the Proof.} Now that we have shown the three properties, we fully characterize the PNE w.r.t. each bidders' largest bids. In particular, consider the perspective of bidder $n$, who has tie-break priority over bidders $1,\ldots,n-1$ but behind $n+1,\ldots,N$. Thus, any bids of $c+\delta$ submitted by the first $n-1$ bidders, and any bids of $c$ or $c + \delta$ submitted by bidders $n+1,\ldots,N$, take priority over any bids of $c$ submitted by bidder $n$. Let $M_n$ denote the number of bids of $c+\delta$ submitted by bidder $n$, which by our PNE characterization, is between $0$ and $\sum_{m=1}^M 1_{v_{n, m} \geq c + \delta}$. Fixing $M_{1:n-1}$ and $M_{n+1:N}$, we claim that the optimal $M_n$ is precisely either 0 or $\sum_{m=1}^M 1_{v_{n, m} \geq c + \delta}$---they either submit bids of all $c$ or all $c+\delta$ for all items they value at least $c + \delta$. 

To show this, notice that bidder $n$ can win all of the items they value at least $c + \delta$ by bidding at $c + \delta$, as there are fewer than $M$ values at least $c+\delta$ across all bidders. In addition to this, they can win some number of items by bidding at $c$ for all items with value at least $c$ via tie-break. To be more specific, 
there are $M$ items with $\sum_{n' \leq n} M_{n'}$ bids of $c + \delta$ and $\sum_{n' > n} \sum_{m=1}^M 1_{v_{n',m} \geq c}$ bids of at least $c$ (by the PNE characterization) that take priority over any bids of $c$ submitted by bidder $n$, where we note that the number of bids of $c$ submitted by bidder $n$ is at most $\sum_{m=1}^M 1_{v_{n,m} \geq c}$. Thus, bidder $n$'s allocation and utility as a function of $M_n$ for any fixed $\bm{M}_{-n} = (M_{-1},\ldots,M_{-(n-1)},M_{-(n+1)},\ldots,M_{-N})$ is given by:

\begin{align*}
    x_n(M_n | \bm{M}_{-n}) &= \min\left(\sum_{m=1}^M 1_{v_{n,m} \geq c}~,~ M_n + \max\big(0, \theta(\bm{M}_{-n}) - M_n\big)\right)\\
    \hspace{1mm} \mu_n(M_n | \bm{M}_{-n}) &= - M_n\delta +\sum_{m=1}^{x_n(M_n | \bm{M}_{-n})} (v_{n,m} - c)\,,
\end{align*}
where $\theta(\bm{M}_{-n}) = M - \sum_{n' < n} M_{n'} - \sum_{n' > n} \sum_{m=1}^M 1_{v_{n',m} \geq c}$. Here, $\min\left(\sum_{m=1}^M 1_{v_{n,m} \geq c}, \theta(\bm{M}_{-n})\right)$ denotes the number of units bidder $n$ would have won in tie-break by submitting $M_n = 0$ bids of $c + \delta$.
Now, let's consider the general case where $M_n\ge 0$. For  all $M_n \geq \theta(\bm{M}_{-n})$, the $\max(0, \theta(\bm{M}_{-n}) - M_n)$ term in the allocation function is 0, and thus, the allocation function  increases linearly in $M_n$ until $M_n = \sum_{m=1}^M 1_{v_{n,m} \geq c+\delta}$. In contrast, for $M_n <\theta(\bm{M}_{-n})$, the second term (i.e., $\max(0, \theta(\bm{M}_{-n}) - M_n)$) is non-zero, and the $M_n$ within the summation of the second term cancels out with the first term of $M_n$. Thus, the allocation function $x_n(M_n | \bm{M}_{-n}) = \theta(\bm{M}_{-n})$ is constant for all $M_n \leq \theta(\bm{M}_{-n})$, at which point it becomes precisely $x_n(M_n | \bm{M}_{-n}) = \theta(\bm{M}_{-n}) = M_n$ until $M_n = \sum_{m=1}^M 1_{v_{n, m} \geq c + \delta}$. This reflects the fact that each additional bid at $c + \delta$ submitted by bidder $n$ consumes an item that could have been won in tie-break at price $c$, which we illustrate in Figure~\ref{fig: allocation visualization PNE}. From Figure~\ref{fig: allocation utility vs mn}, we see that the optimal $M_{-n}$ is always achieved at $M_{-n} \in \{0, \sum_{m=1}^M 1_{v_{n,m} \geq c + \delta}\}$.

\begin{figure}
    \centering
    \includegraphics[scale=0.5, trim={0 0 240 0}, clip]{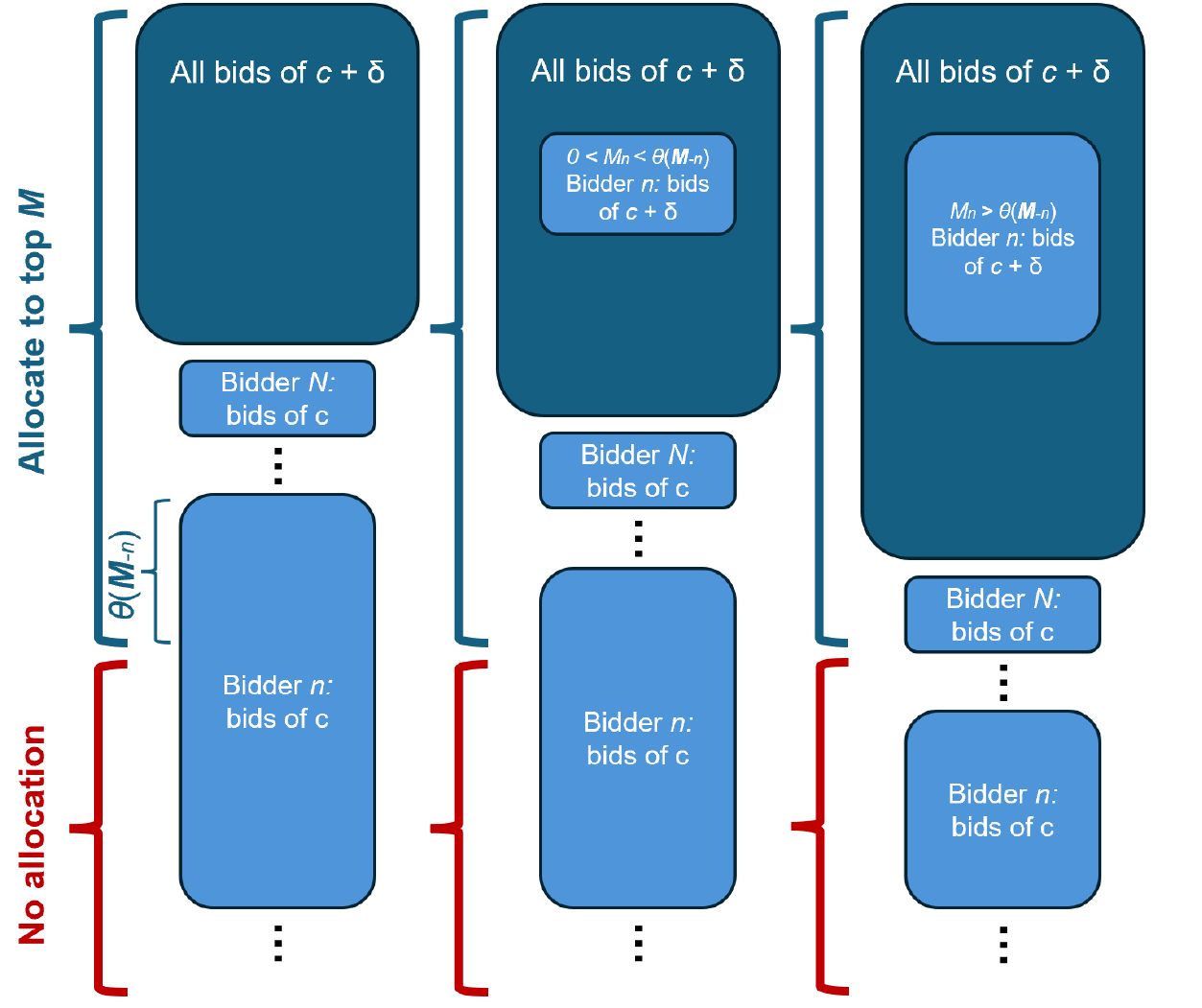}
    \includegraphics[scale=0.5, trim={600 0 0 0}, clip]{figs/Figure10.pdf}
    \caption{\textbf{Allocation as a Function of $M_{n}$. } We visualize the allocation of bidder $n$ when behaving according to the PNE as prescribed by Theorem~\ref{thm: PNE existence} and submitting $M_n$ bids of $c + \delta$. The left (resp. right) figure assumes that $\theta(\bm{M}_{-n})$ is at least (resp. at most) $\sum_{m=1}^M 1_{v_{n,m} \geq c}$.}
    \label{fig: allocation visualization PNE}
\end{figure}

\begin{figure}
    \centering
    \begin{subfigure}[t]{0.45\textwidth}
        \centering
        \begin{tikzpicture}[scale=0.74]
            \begin{axis}[
                title={Allocation and Utility w.r.t. $M_n$},
                xlabel={},
                ylabel={},
                legend style={at={(0.5,-0.2)}, anchor=north, legend columns=-1},
                domain=0:5,
                samples=100,
                axis x line=middle,
                axis y line=middle,
                enlarge x limits=false,
                enlarge y limits=false,
                ymin=0, ymax=9,
                xmin=0, xmax=10.5,
                xtick={0,2,5},
                xticklabels={0, $\theta(\bm{M}_{-n})$, $\sum_{m=1}^M 1_{v_{n,m} \geq c+\delta}$},
                ytick={0,1, 1.5, 2 ,5, 6.7},
                yticklabels={0, $\mu_1$, $\mu_2$, $\theta(\bm{M}_{-n})$, $\sum_{m=1}^M 1_{v_{n,m} \geq c+\delta}$, $\sum_{m=1}^M 1_{v_{n,m} \geq c}$},
                legend cell align={left}
            ]

            \pgfmathsetmacro{\c}{2}
            \pgfmathsetmacro{\muzero}{1}
            
            \addplot[domain=0:\c, thick, blue] {2};
            \addplot[domain=\c:5, thick, blue] {x} node[right,pos=1] {$x_n(M_n | \bm{M}_{-n})$};
            
            \addplot[domain=0:\c, thick, red] {1 - 0.4 * x};
            \addplot[domain=0:\c, dashed, red] {1.5 - 0.4 * x};
            \addplot[domain=\c:5, thick, red, samples=100] {1 - 0.4 * x + 1.6*(sqrt(x-1) - sqrt(\c - 1))} node[right,pos=1] {$\mu^1_n(M_n | \bm{M}_{-n})$};
            \addplot[domain=\c:5, dashed, red, samples=100] {4*(1.5 - 0.4 * x + 1.6*(sqrt(x-1) - sqrt(\c - 1))) - 3*(1.5 - 0.4 * \c + 1.6*(sqrt(\c-1) - sqrt(\c - 1)))} node[right,pos=1] {$\mu^2_n(M_n | \bm{M}_{-n})$};
            
            \addlegendimage{line legend,blue,thick}
            \addlegendimage{line legend,red,thick}
            
            \end{axis}
        \end{tikzpicture}
    \end{subfigure}
    \begin{subfigure}[t]{0.45\textwidth}
        \centering
        \begin{tikzpicture}[scale=0.74]
            \begin{axis}[
                title={Allocation and Utility w.r.t. $M_n$},
                xlabel={},
                ylabel={},
                legend style={at={(0.5,-0.2)}, anchor=north, legend columns=-1},
                domain=0:5,
                samples=100,
                axis x line=middle,
                axis y line=middle,
                enlarge x limits=false,
                enlarge y limits=false,
                ymin=0, ymax=6.8,
                xmin=0, xmax=7.8,
                xtick={0,3.7},
                xticklabels={0, $\sum_{m=1}^M 1_{v_{n,m} \geq c+\delta}$},
                ytick={0,1.5, 2.5,3.75,5, 6.2},
                yticklabels={0, $\mu^1_0$, $\mu^2_0$, $\sum_{m=1}^M 1_{v_{n,m} \geq c+\delta}$, $\sum_{m=1}^M 1_{v_{n,m} \geq c}$, $\theta(\bm{M}_{-n})$},
                legend cell align={left}
            ]

            \pgfmathsetmacro{\c}{2}
            \pgfmathsetmacro{\muzero}{1}
            
            \addplot[domain=0:5, thick, blue] {5}  node[right,pos=1] {$x_n(M_n | \bm{M}_{-n})$};
            
            \addplot[domain=0:3.7, thick, red] {1.5 - 0.4 * x} node[above right,pos=1] {$\mu^1_n(M_n | \bm{M}_{-n})$};
            \addplot[domain=0:3.7, thick, red] {2.5 - 0.4 * x} node[above right,pos=1] {$\mu^2_n(M_n | \bm{M}_{-n})$};
            
            \addlegendimage{line legend,blue,thick}
            \addlegendimage{line legend,red,thick}
            
            \end{axis}
        \end{tikzpicture}
    \end{subfigure}
    \caption{\textbf{Allocation $x_n$ and Utility $\mu_n$ as a function of $M_n$. } Here, we plot the allocation and utility functions of bidder $n$ when fixing $\bm{M}_{-n}$ for the cases where $\theta(\bm{M}_{-n}) < \sum_{m=1}^M 1_{v_{n,m} \geq c}$ (left figure) and $\theta(\bm{M}_{-n}) \geq \sum_{m=1}^M 1_{v_{n,m} \geq c}$ (right figure) respectively. In the left figure, we let $\mu_n^1, \mu_n^2$ denote the utility functions w.r.t. two different valuation vectors such that the optima occurs at $M_n = 0$ and $M_n = \sum_{m=1}^M 1_{v_{n,m} \geq c+\delta}$ respectively. In the right figure, the allocation function is constant yielding a  slope $-\delta$ for the utility function, which is trivially maximized at $M_n = 0$.}
    \label{fig: allocation utility vs mn}
\end{figure}

    Now that we have shown that the optimal number of bids to submit at $c + \delta$ of each bidder is either $M_n = 0$ or $M_n = \sum_{m=1}^M 1_{v_{n, m} \geq c + \delta}$, we finish the argument by claiming that the utility corresponding to $M_n = 0$ is weakly decreasing in $\bm{M}_{-n}$. This is because $x_n(\cdot | \bm{M}_{-n})$, and similarly $\mu_n(\cdot | \bm{M}_{-n})$, are weakly decreasing in these quantities. Because of this monotonicity, agents can run best response dynamics beginning with $(M_1,\ldots,M_N) = (0, \ldots, 0)$ and converge to a PNE w.r.t. $M_1,\ldots,M_N$. That is, each agent that switches from $M_n = 0$ to $M_n = \sum_{m=1}^M 1_{v_{n, m} \geq c + \delta}$ can only incentivize other bidders to also switch away from 0 and never towards 0.
    
\end{proof}

}

\subsection{Lemmas \ref{lem: QSpace Equivalence} and \ref{lem: Online Linear Optimization}, and their Proofs} \label{sec:QSpace Equivalence} 

Now, we complete the description and analysis of our OMD based algorithm. In Algorithm~\ref{alg: OMD}, we require that the space of all possible node probability measure $\bm{q}$ encompasses the set of node probability measures that correspond to any policy $\bm{\pi}$ over our DP graph.

\begin{lemma}[$\mathcal{Q}$-Space Equivalence]
    \label{lem: QSpace Equivalence}
    Let \[\Pi = \Big\{{\pi} \in [0,1]^{M\times |\mathcal B|\times |\mathcal B|}: \pi((m, b), b') = 0 ~~\forall b' > b, m\in [M], \sum_{b' \leq b} \pi((m, b), b') = 1, m\in [M]\Big\}\] denote the space of policies on our DP graph. With a slight abuse of notation, for any $\pi\in  \Pi$, define 
    \[q(\pi) = \{\mathbf{q} \in [0, 1]^{M\times |\mathcal B|}: \forall b \in \mathcal{B},  q_1(b) =\pi((0, b_0), b), q_{m+1}(b) = \sum_{b' \in \mathcal B} q_m(b')\pi((m, b'), b), m\in[M-1]\}\,\] as the node probabilities induced by $\pi$. Here, $b_0= \max \mathcal B$. Let $\mathcal{Q}_{\Pi} = \cup_{\pi \in \Pi} q(\pi)$. Then,   $\mathcal{Q}_{\Pi}$ is equivalent to the set $\mathcal{Q}$ where $\mathcal{Q}$ is defined in Equation \eqref{eq:Q}. 
\end{lemma}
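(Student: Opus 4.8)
The plan is to prove the set equality $\mathcal{Q}_\Pi = \mathcal{Q}$ by establishing the two inclusions $\mathcal{Q}_\Pi \subseteq \mathcal{Q}$ and $\mathcal{Q} \subseteq \mathcal{Q}_\Pi$ separately. The first inclusion is the ``soundness'' direction: every node-occupancy measure induced by a valid policy $\pi \in \Pi$ must satisfy the simplex and stochastic-dominance constraints defining $\mathcal{Q}$. The second inclusion is the ``completeness'' direction: every $\bm q \in \mathcal{Q}$ is realizable as $q(\pi)$ for some $\pi \in \Pi$. The completeness direction is the one I expect to be the main obstacle, since it requires an explicit (or at least existential) construction of a feasible policy from an abstract occupancy measure, and this is exactly the construction the main algorithm relies on in its ``Convert $\bm q^t$ to Policy $\bm\pi^t$'' step.

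For the inclusion $\mathcal{Q}_\Pi \subseteq \mathcal{Q}$, I would fix an arbitrary $\pi \in \Pi$ and let $\bm q = q(\pi)$. First I would verify that each $\bm q_m$ is a probability vector: $q_1$ sums to one because $\pi((0,b_0),\cdot)$ is a distribution supported on $\{b' \le b_0\} = \mathcal{B}$, and then by induction $\sum_b q_{m+1}(b) = \sum_{b'} q_m(b') \sum_b \pi((m,b'),b) = \sum_{b'} q_m(b') = 1$, using that each row of $\pi$ sums to one. The key structural fact is the stochastic-dominance constraint $\sum_{b \le b'} q_{m+1}(b) \ge \sum_{b \le b'} q_m(b)$. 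This should follow from the support restriction $\pi((m,b),b')=0$ for $b' > b$: because transitions only move bids \emph{weakly downward}, the cumulative mass at or below any threshold $b'$ can only increase from layer $m$ to layer $m+1$. Concretely, I would write $\sum_{b\le b'} q_{m+1}(b) = \sum_{b \le b'}\sum_{\tilde b} q_m(\tilde b)\pi((m,\tilde b),b)$ and argue that every term with $\tilde b \le b'$ contributes its full mass $q_m(\tilde b)$ (since $\pi((m,\tilde b),\cdot)$ is supported on $\{b \le \tilde b \le b'\}$), while terms with $\tilde b > b'$ contribute nonnegative mass, yielding the desired inequality.

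For the harder inclusion $\mathcal{Q} \subseteq \mathcal{Q}_\Pi$, I would fix $\bm q \in \mathcal{Q}$ and explicitly construct a policy $\pi$ realizing it. The natural candidate is a greedy ``monotone coupling'': define $\pi((m,b),b')$ so that it transports the mass of $q_m$ downward onto $q_{m+1}$ in a way consistent with the cumulative-dominance inequalities. Since $\bm q_{m+1}$ stochastically dominates $\bm q_m$ in the sense given (more mass on small bids), a valid downward coupling exists; one clean way is to set the transition probabilities via the conditional-quantile / monotone rearrangement coupling, or equivalently to solve the feasibility constraints $q_{m+1}(b) = \sum_{b' \ge b} q_m(b')\,\pi((m,b'),b)$ with $\sum_{b'' \le b}\pi((m,b),b'')=1$ layer by layer. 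I would then verify that this $\pi$ lies in $\Pi$ (each row is a nonnegative distribution supported on $\{b' \le b\}$) and that $q(\pi) = \bm q$ by construction. The crux is showing feasibility of this linear system is \emph{guaranteed} exactly by the stochastic-dominance constraints of $\mathcal{Q}$ --- this is where a Hall-type / flow argument or an explicit quantile-coupling formula is needed, and I would lean on the fact that dominance is precisely the condition under which a monotone (downward-only) transport plan between two distributions exists. Combining both inclusions gives $\mathcal{Q}_\Pi = \mathcal{Q}$, and as an immediate corollary $\mathcal{Q}$ is the convex hull image of the policy space, hence convex, which is what the subsequent OMD analysis requires.
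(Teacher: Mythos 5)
Your first inclusion ($\mathcal{Q}_\Pi \subseteq \mathcal{Q}$) matches the paper's argument essentially line for line: induction for the simplex constraints, and the split of $\sum_{b\le b'} q_{m+1}(b)$ into source nodes $\tilde b \le b'$ (which contribute their full mass because their transition rows are supported below $b'$) and $\tilde b > b'$ (which contribute nonnegatively). For the harder inclusion ($\mathcal{Q} \subseteq \mathcal{Q}_\Pi$), however, you take a genuinely different and arguably cleaner route. The paper constructs the policy by a double induction: an outer induction over layers $m$ and an inner induction over bid thresholds $b^*$ (from $\min\mathcal{B}$ upward), maintaining four invariant conditions and invoking the intermediate value theorem to show that the mass $\sum_{b \ge b_-^*} q_m(b)\pi((m,b),b_-^*)$ can be tuned continuously between $0$ and a quantity that stochastic dominance guarantees is at least $q_{m+1}(b_-^*)$, hence can be set exactly equal to it. You instead observe that the dominance constraint $\sum_{b\le b'}q_{m+1}(b) \ge \sum_{b\le b'}q_m(b)$ is precisely the first-order stochastic dominance condition under which a monotone (downward-only) coupling of $q_m$ and $q_{m+1}$ exists, and you propose to obtain $\pi((m,b),\cdot)$ by normalizing the quantile/monotone-rearrangement transport plan. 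This is correct and more modular: the quantile coupling $(F_m^{-1}(U), F_{m+1}^{-1}(U))$ immediately gives a joint law supported on $\{b' \le b\}$ with the right marginals, and dividing by $q_m(b)$ (with an arbitrary valid row, e.g.\ a point mass at $b$, wherever $q_m(b)=0$) yields a policy in $\Pi$. What the paper's approach buys is a fully self-contained elementary argument that also exhibits the feasible set $\Pi(\bm q)$ as a nested intersection of polyhedra, which it reuses in the algorithm's policy-recovery step; what your approach buys is brevity and a direct appeal to a standard coupling fact. The only points you should make explicit to close the argument are the degenerate rows where $q_m(b)=0$ and the trivial base case $q_1(b)=\pi((0,b_0),b)$, both of which are routine.
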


Lemma \ref{lem: QSpace Equivalence} establishes that during the execution of Algorithm \ref{alg: OMD}, we can focus on the node probabilities in set $\mathcal{Q}$ without loss of generality. We recall that within $\mathcal{Q}$, the stochastic dominance conditions are enforced solely over node probabilities across layers. 
We now argue that we only need to consider optimizing over $\mathcal{Q}$ as opposed to the space of policies $\Pi$, as the regret can be rewritten strictly in terms of $\bm{q}$, independently of the corresponding $\bm{\pi}$.

\begin{lemma}
    \label{lem: Online Linear Optimization}
     Any sequence of policies $\bm{\pi}^1,\ldots,\bm{\pi}^\Nround$ over our DP graph with associated node probability measures $\bm{q}^1,\ldots,\bm{q}^\Nround$ has discretized regret $\textsc{Regret}_{\mathcal{B}} = \max_{\bm{q} \in \mathcal{Q}} \sum_{\nround=1}^\Nround \langle \bm{q} - \bm{q}^\nround, \bm{w}^\nround\rangle$. Here, $\bm{w}^\nround = \{w^\nround_m(b)\}_{m \in [M], b \in \mathcal{B}}$ represents vector of the round $\nround$ rewards for all possible $(m, b)$ unit-bid value pairs.
 \end{lemma}
      
\subsubsection{Proof of Lemma \ref{lem: QSpace Equivalence}}

In order to show equivalence, we show that (1) for any $\pi \in \Pi$, that $q(\pi) \in \mathcal{Q}$ and (2) for any $\bm{q} \in \mathcal{Q}$, there exists a $\pi \in \Pi$ such that $q(\pi) = \bm{q}$. We first prove (1). To do this, we simply need to check that for a given $\pi \in \Pi$, that $q^\pi = q(\pi)$ satisfies the constraints prescribed by $\mathcal{Q}$.

    The non-negativity constraint holds trivially as each $\pi((m, b), b')$ is non-negative. Since all $q^\pi_1(b) = \pi((0, \max\mathcal{B}), b) \geq 0$ for all $b \in \mathcal{B}$, by induction, $q^\pi_{m+1}(b) = \sum_{b" \geq b} q^\pi_m(b") \pi((m, b"), b)$ is also non-negative.
    
    Now we prove that each layer $m$ sums to 1, i.e., $\sum_{b \in \mathcal{B}} q^\pi_m(b) = 1$. Since $\sum_{b \in \mathcal{B}} q^\pi_1(b)$, the policy has total node probability 1 in the first layer, we can prove $\sum_{b \in \mathcal{B}} q^\pi_m(b) = 1$, that the policy has total node probability 1 in the $m$'th layer, via induction. This follows immediately from the fact that the DP graph is layered, i.e., edges exist only from nodes in layer $m$ to nodes in layer $m+1$, thus the only edges leading to layer $m+1$ are from layer $m$, in which there are no other edges. Hence, the total node probability in layer $m+1$ must be exactly that of layer $m$. More formally, we have:
    \begin{align*}
        \sum_{b \in \mathcal{B}} q^\pi_{m+1}(b) = \sum_{b \in \mathcal{B}} \sum_{b" \geq b} q^\pi_m(b") \pi((m, b"), b) = \sum_{b" \in \mathcal{B}} q^\pi_m(b") \sum_{b \leq b"} \pi((m, b"), b) = \sum_{b" \in \mathcal{B}} q^\pi_m(b")\,.
    \end{align*}

     To show the stochastic domination constraint $\sum_{b \leq b'} q^\pi_{m+1}(b) \geq \sum_{b \leq b'} q^\pi_m(b)$, we use the bid monotonicity constraint; i.e., the fact that the edges between layers are only from larger bids to (weakly) smaller bids. Recall that $\pi((m,b'), b")$ is the probability of transitioning from unit-bid value pair $(m, b')$ to $(m+1, b")$ and that the only edges leading to $(m+1, b")$ come from nodes $(m, b')$ for $b' \geq b"$. Then, we have:
     \begin{align*}
         \sum_{b \leq b'} q^\pi_{m+1}(b) &= \sum_{b \leq b'} \sum_{b" \geq b} q^\pi_m(b")\pi((m, b"), b) \\
         &=\sum_{b" > b'} q^\pi_m(b") \sum_{b \leq b'} \pi((m, b"), b) + \sum_{b" \leq b'} q^\pi_m(b") \sum_{b \leq b"} \pi((m, b"), b)\\
         &= \sum_{b" > b'} q^\pi_m(b") \sum_{b \leq b'} \pi((m, b"), b) + \sum_{b \leq b'} q^\pi_m(b)\\
         &\geq \sum_{b \leq b'} q^\pi_m(b)\,.
     \end{align*}
     Hence, we have shown that for any $\pi \in \Pi$, that the corresponding $q(\pi) \in \mathcal{Q}$.
     
     Now we show the other direction (2), that for any $\bm{q} \in \mathcal{Q}$, there exists a $\pi \in \Pi$ such that $q(\pi) = \bm{q}$. We proceed by showing that for all $m, b^*$, there exists $\{\pi((m, b), b')\}_{b, b' \in \mathcal{B}}$ such that the following conditions hold:
     \begin{enumerate}
         \item $\pi((m, b), b') \geq 0$ for all $b, b' \geq b^*$.
         \item $\pi((m, b), b') = 0$ for all $b' > b \geq b^*$.
         \item $\sum_{b' \leq b, b' \geq b^*} \pi((m, b), b') \leq 1$ for all $b^* \in \mathcal{B}$, with equality if and only if $b^* =  b_{\min}$ where $b_{\min} = \min \mathcal{B}$.
         \item $\sum_{b' \geq b^*} \sum_{b \geq b'} q_m(b)\pi((m, b), b') = \sum_{b' \geq b^*} q_{m+1}(b')$.
     \end{enumerate}
Let $\Pi(b^*; \mathbf{q})$, $b^*\in \mathcal B$, be the set of all policies under which the four conditions hold at $b^*$ and ${\mathbf q} \in \mathcal Q$. 
     
    These conditions trivially hold for $m = 0$, as we can set $\pi((0, \max \mathcal{B}), b) = q_1(b)$ and $\pi((0, b), b') = \textbf{1}_{b = b'}$. To solve for general $m$, we must show that there exists $\{\pi((m, b), b')\}_{b, b' \in \mathcal{B}}$ that satisfies the constraints prescribed by $\Pi$ and that $\sum_{b \geq b'} q_m(b)\pi((m, b), b') = q_{m+1}(b')$ for all $b' \in \mathcal{B}$. In order to do this, we show that conditions (1), (2), (3), and (4) for each $b^* \in \mathcal{B}$. In particular, if we show conditions (1) and (2) for $b^* = b_{\min}$, then we have already satisfied the first two conditions of $\Pi$. If we show that (3) holds for $b^* = b_{\min}$, then by condition (1), then (3) holds for all $b^* \in \mathcal{B}$ as well, as the summation only includes fewer terms as $b^*$ increases. Similarly, if we show condition (4) holds for two adjacent values of $b_-^* < b^*$, then we have that $\sum_{b \geq b' \geq b^*} q_m(b)\pi((m, b), b') = q_{m+1}(b^*)$. Thus, if condition (4) holds for all possible pairs of adjacent bid values, then we have that $\sum_{b \geq b'} q_m(b)\pi((m, b), b') = q_{m+1}(b')$ for all $b'$. These observations suggest use of induction over $b^*$, and indeed, we begin by showing that these conditions hold for $b^* = b_{\min}$. We then show that this implies that the conditions hold for the next smallest value of $b^*$, which would complete the induction proof.

    \textbf{Base Case}: Recall $b^* = b_{\min}$. We now show that there exists $\{\pi((m, b), b')\}_{b, b' \in \mathcal{B}}$ satisfying all four conditions. For any $m\in[M]$, let we set $\pi((m, b), b') = \textbf{1}_{b = b'}$. Then, 
 condition (4) is clearly satisfied: 
    \begin{align*}
        \sum_{b' \geq b^*} \sum_{b \geq b'} q_m(b)\pi((m, b), b') = \sum_{b' \geq b^*} q_{m+1}(b') \leftrightarrow \sum_{b \in \mathcal{B}} q_m(b) \sum_{b' \leq b} \pi((m, b), b') = \sum_{b \in \mathcal{B}} q_{m+1}(b) = 1\,.
    \end{align*}
    It is also easy to check that conditions (1)-(3) are also satisfied when we set $\pi((m, b), b') = \textbf{1}_{b = b'}$ for any $m$. This shows that $\Pi(b^*; {\mathbf{q}})$ is non-empty, as desired.

    \textbf{Recursive Case}:
    For any $b\in \mathcal B$, let $b_-$ be the largest $b' \in \mathcal B$, which is strictly smaller than $b$. Here, we assume that 
    $\Pi(b^*_-; \bf{q})$ is not empty, and under this assumption, we show that set $\Pi(b^*; \bf{q})$ is not empty, where $\Pi(b^*; {\bf{q}}) \subseteq \Pi(b^*_-; \bf{q})$. 
    Let us start with condition (4). 
    We would like to show that there exists a $\bm{\pi}$ that satisfies condition (4) at $b^*$ along with the other three conditions. By the induction assumption, we have  
    \begin{align*}
        &\sum_{b' \geq b_-^*} \sum_{b \geq b'} q_m(b)\pi((m, b), b') = \sum_{b' \geq b_-^*} q_{m+1}(b') \to\\
        &\sum_{b' \geq b^*} \sum_{b \geq b'} q_m(b)\pi((m, b), b') + \sum_{b \geq b_-^*} q_m(b)\pi((m, b), b_-^*) = \sum_{b' \geq b^*} q_{m+1}(b') + q_{m+1}(b_-^*) \to\\
        &\sum_{b' \geq b^*} \sum_{b \geq b'} q_m(b)\pi((m, b), b') = \sum_{b' \geq b^*} q_{m+1}(b') + \left[q_{m+1}(b_-^*) - \sum_{b \geq b_-^*} q_m(b)\pi((m, b), b_-^*)\right] \to\\
        &\sum_{b \geq b^*} q_m(b) \sum_{b' \leq b; b' \geq b^*} \pi((m, b), b') = \sum_{b' \geq b^*} q_{m+1}(b') + \left[q_{m+1}(b_-^*) - \sum_{b \geq b_-^*} q_m(b)\pi((m, b), b_-^*)\right] 
    \end{align*}
    Thus, we can satisfy condition (4) if $q_{m+1}(b_-^*) = \sum_{b \geq b_-^*} q_m(b)\pi((m, b), b_-^*)$. We now observe that the latter summation depends linearly (and hence, continuously) in the values of $\pi((m, b), b_-^*)$. If we can show that there exists an assignment of these variables that satisfy $q_{m+1}(b_-^*) \geq \sum_{b \geq b_-^*} q_m(b)\pi((m, b), b_-^*)$ and also $q_{m+1}(b_-^*) \leq \sum_{b \geq b_-^*} q_m(b)\pi((m, b), b_-^*)$, then by the intermediate value theorem, there must be some assignment that achieves exact equality.

    In order to show the first inequality, notice that if we set $\pi((m, b), b_-^*) = 1 - \sum_{b' < b_-^*} \pi((m, b), b')$ for all $b \geq b_-^*$ (this is required in order to guarantee conditions (1) and (3) are satisfied), then:
    \begin{align*}
        \sum_{b \geq b_-^*} q_m(b)\pi((m, b), b_-^*) &= \sum_{b \geq b_-^*} q_m(b) - \sum_{b \geq b_-^*} q_m(b)\sum_{b' < b_-^*} \pi((m, b), b') \\
        &= \sum_{b \geq b_-^*} q_{m}(b) - \sum_{b \in \mathcal{B}} q_m(b) \sum_{b' < b_-^*} \pi((m, b), b') + \sum_{b < b_-^*} q_m(b) \sum_{b' < b_-^*} \pi((m, b), b')\\
        &= \sum_{b \geq b_-^*} q_{m}(b) - \sum_{b' < b_-^*} q_{m+1}(b) + \sum_{b < b_-^*} q_m(b) \sum_{b' < b_-^*} \pi((m, b), b')\\
        &= \sum_{b \geq b_-^*} q_{m}(b) - \sum_{b' < b_-^*} q_{m+1}(b) + \sum_{b < b_-^*} q_m(b)\\
        &\geq \sum_{b \geq b_-^*} q_{m+1}(b) - \sum_{b' < b_-^*} q_{m+1}(b) + \sum_{b < b_-^*} q_{m+1}(b)\\
        &= \sum_{b \geq b_-^*} q_{m+1}(b)\\
        &\geq q_{m+1}(b_-^*)\,.
    \end{align*}
    Here, the third equality follows from the (strong) inductive hypothesis, and the first inequality is a result of the stochastic domination constraint in $\mathcal{Q}$. We also note that the values $\sum_{b' < b_-^*} \pi((m, b), b')$ have already been fixed
    as these were required to satisfy condition (4) in the previous iterates, and as condition (3) holds for $b^*_-$ by the inductive hypothesis, then $1 - \sum_{b' < b_-^*} \pi((m, b), b') \geq 0$. Conversely, if we set $\pi((m, b), b_-^*) = 0$ for all $b \geq b_-^*$, then:
    \begin{align*}
        \sum_{b \geq b_-^*} q_m(b)\pi((m, b), b_-^*) = 0 \leq q_{m+1}(b_-^*)\,.
    \end{align*}
    As the sum $\sum_{b \geq b_-^*} q_m(b)\pi((m, b), b_-^*)$ linearly (thus, continuously) depends on the values of $\pi((m, b), b_-^*)$, by the intermediate value theorem, there exists an assignment of $\{\pi((m, b), b_-^*)\}_{b \geq b_-^*}$ with each $\pi((m, b), b_-^*) \in [0, 1 - \sum_{b' < b_-^*} \pi((m, b), b')]$ such that the sum is precisely equal to $q_{m+1}(b_-^*) \in [0, 1]$. Now we observe that these values of $\pi((m, b), b_-^*) \in [0, 1 - \sum_{b' < b_-^*} \pi((m, b), b')]$ do not violate conditions (1), (2), or (3). Furthermore, note that any $\bm{\pi} \in \Pi$ also satisfied conditions (1), (2), and (3) under $b^*_-$ for $\{\pi((m, b), b')\}_{b \geq b_-^*, b' \leq b_-^*}$, then the assignment to $\{\pi((m, b), b')\}_{b \geq b_-^*, b' < b_-^*}$ will not violate these conditions as our new constraint on the variables $\{\pi((m, b), b_-^*)\}_{b \geq b_-^*}$ is independent of the values of $\{\pi((m, b), b')\}_{b \geq b_-^*, b' < b_-^*}$. Thus, the set $\Pi(b^*)$ is non-empty:
    \begin{align*}
        \Pi(b^*) = \{\{\pi((m, b), b')\}_{b, b' \in \mathcal{B}} \in \Pi(b_-^*): \sum_{b \geq b_-^*} q_m(b)\pi((m, b), b_-^*) = q_{m+1}(b_-^*)\} \neq \emptyset
    \end{align*}    
    With this, we have proven via induction that our four conditions hold for all $b^* \in \mathcal{B}$, implying that for a fixed $m$, every constraint in $\Pi$ pertaining to variables $\pi((m, b), b')$ is satisfied, as well as the node-measure constraints $\sum_{b \geq b'} q_m(b)\pi((m, b), b') = q_{m+1}(b')$ for all $b'$. By induction, this works for all $m \in [M]$, which concludes the proof.

\subsubsection{Proof of Lemma \ref{lem: Online Linear Optimization}}
We have by the definition of discretized regret:
\begin{align*}
    \textsc{Regret}_\mathcal{B} &= \max_{\bm{b} \in \mathcal{B}} \sum_{\nround=1}^\Nround \mu^\nround_n(\bm{b}) - \mathbb{E}\left[\sum_{\nround=1}^\Nround \mu^\nround_n(\bm{b}^\nround)\right] = \max_{\bm{q} \in \mathcal{Q}} \mathbb{E}\left[\sum_{\nround=1}^\Nround \langle \bm{q}, \bm{w}^\nround\rangle - \sum_{\nround=1}^\Nround \langle \bm{q}^\nround, \bm{w}^\nround\rangle\right]\,, 
\end{align*}
where in the first equality, we applied Equation~\eqref{eq: Loss of policy} which equated the dot product of utilities $\bm{w}^\nround$ and node probability weights $\bm{q}$ to the expected utility of bid vector $\bm{b} \sim \bm{\pi}$ with utilities $\{w_m^t(b)\}_{m \in [M], b \in \mathcal{B}} = \bm{w}^\nround$. Combining the two summations yields the desired result.

\subsection{Proof of Theorem~\ref{thm: OMD}: Online Mirror Descent Algorithm }

\label{sec: Proof of OMD}

\begin{proof}{Proof of Theorem~\ref{thm: OMD}: Online Mirror Descent Algorithm}

    The proof is divided into four parts, similar to the analysis of Algorithm~\ref{alg: Decoupled Exponential Weights}. In the first part, we rigorously show how our algorithm achieves the stated regret. In the second, we verify correctness of our procedure that recovers a policy $\bm{\pi}^\nround$ from $\bm{q}^\nround$. Then, we show the corresponding time and space complexity of our algorithm. Afterwards, we optimize over discretization error to obtain the continuous regret.

    \textbf{Part 1: Regret of Online Linear Optimization.} Recall that from Lemma~\ref{lem: Online Linear Optimization}, we have
    \begin{align}
        \textsc{Regret}_\mathcal{B} = \max_{\bm{q} \in \mathcal{Q}} \mathbb{E}\left[ \sum_{\nround=1}^\Nround \langle \bm{q} - \bm{q}^\nround, \bm{w}^\nround \rangle\right] = \max_{\bm{q} \in \mathcal{Q}} \mathbb{E}\left[\sum_{\nround=1}^\Nround \langle  \bm{q}^\nround - \bm{q}, -\bm{w}^\nround \rangle\right]\, ,
    \end{align}
    where we negate the utility function into a loss function to be consistent with the OLO convention. We follow a standard analysis of OMD, which shows that the optimization step can be solved efficiently and the resulting iterates have bounded regret. For the former, we show that solution to the $\bm{q}$ optimization step in our algorithm $\bm{q}^{\nround} = \text{argmin}_{\bm{q} \in \mathcal{Q}} \eta\langle \bm{q}, -\bm{w}^\nround\rangle + D(\bm{q} || \bm{q}^{\nround-1})$ can be obtained as the projection of the unconstrained minimizer of \[\tilde{q}^{\nround}= \text{argmin}_{\bm{q} \in [0, 1]^{M \times |\mathcal{B}|}} \eta\langle \bm{q}, -\bm{w}^\nround\rangle + D(\bm{q} || \bm{q}^{\nround-1})\] to the space $\mathcal{Q}$ (See Projection Lemma, Lemma 8.6 of \cite{BartokLecNotes2011}). 
    Having characterized the exact form of the OMD iterates, all that remains is to upper bound the regret of OMD with the regret of Be-the-regularized-leader.
    \begin{lemma}[Lemma 9.2 of \cite{BartokLecNotes2011}]
        \label{lem: Be Regularized leader regret}
        Letting $D(\bm{q} || \bm{q}')$ denote the unnormalized KL divergence between $\bm{q}$ and $\bm{q}'$, we have:
        \begin{align*}
            \textsc{Regret}_\mathcal{B} \leq \max_{\bm{q} \in\mathcal{Q}} \mathbb{E}\Big[\eta^{-1} D(\bm{q} || \bm{q}^1) + \sum_{\nround=1}^\Nround \langle \bm{q}^\nround - \tilde{\bm{q}}^{\nround+1}, \bm{w}^\nround \rangle\Big]\,.
        \end{align*}
    \end{lemma}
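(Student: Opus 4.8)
The plan is to prove this as the standard \emph{follow-the-leader / be-the-leader} (FTL--BTL) regret decomposition for negentropy-regularized OMD, viewing the regularized update as a lazy leader and then paying for the projection onto $\mathcal{Q}$. Throughout I fix an arbitrary comparator $\bm{q}\in\mathcal{Q}$ and, to match the loss-minimization convention under which the cited bound is stated, treat $\bm{w}^{\nround}$ as the per-round loss (i.e.\ the negated utility of Part~1). By Lemma~\ref{lem: Online Linear Optimization} it then suffices to bound $\sum_{\nround=1}^{\Nround}\langle \bm{q}^{\nround}-\bm{q},\,\bm{w}^{\nround}\rangle$ for each such $\bm{q}$, and afterwards take the maximum over $\bm{q}\in\mathcal{Q}$ and the expectation.

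The first step is to insert the one-step-ahead (be-the-leader) iterate $\tilde{\bm{q}}^{\nround+1}$ and split the regret as
\begin{align*}
\sum_{\nround=1}^{\Nround}\langle \bm{q}^{\nround}-\bm{q},\,\bm{w}^{\nround}\rangle
=\underbrace{\sum_{\nround=1}^{\Nround}\langle \bm{q}^{\nround}-\tilde{\bm{q}}^{\nround+1},\,\bm{w}^{\nround}\rangle}_{\text{stability term}}
+\underbrace{\sum_{\nround=1}^{\Nround}\langle \tilde{\bm{q}}^{\nround+1}-\bm{q},\,\bm{w}^{\nround}\rangle}_{\text{comparator term}}\,.
\end{align*}
The stability term is exactly the second summand appearing in the statement, and it is carried through to the final bound untouched; all the remaining work is to show the comparator term is at most $\eta^{-1}D(\bm{q}\,\|\,\bm{q}^{1})$.

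The engine for the comparator term is the Be-the-Leader lemma, which I would establish by induction on $\Nround$. Writing $R(\cdot)=D(\cdot\,\|\,\bm{q}^{1})$ for the nonnegative regularizer (minimized, with value $0$, at $\bm{q}^{1}=\tilde{\bm{q}}^{1}$) and viewing $\tilde{\bm{q}}^{\nround+1}$ as the unconstrained minimizer of $\eta\sum_{\tau\le \nround}\langle \cdot,\,\bm{w}^{\tau}\rangle+R(\cdot)$ over the box $[0,1]^{M\times|\mathcal{B}|}\supseteq\mathcal{Q}$, the base case is immediate from optimality of $\tilde{\bm{q}}^{1}$, and the inductive step adds the round-$\nround$ loss to the hypothesis evaluated at the new leader and uses optimality of $\tilde{\bm{q}}^{\nround+1}$ for the cumulative regularized objective. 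Since the box contains the comparator, the leader's optimality also beats $\bm{q}$, giving $\sum_{\nround}\langle \tilde{\bm{q}}^{\nround+1}-\bm{q},\,\bm{w}^{\nround}\rangle\le \eta^{-1}\big(R(\bm{q})-R(\tilde{\bm{q}}^{1})\big)\le \eta^{-1}D(\bm{q}\,\|\,\bm{q}^{1})$, using $R\ge 0$.

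I expect the main obstacle to be reconciling this induction, which is cleanest for the \emph{unconstrained} leaders $\tilde{\bm{q}}^{\nround+1}$, with the fact that the algorithm plays the \emph{projected} iterates $\bm{q}^{\nround}\in\mathcal{Q}$. The remedy is to use that $\bm{q}^{\nround}$ is the Bregman projection of $\tilde{\bm{q}}^{\nround}$ onto the polyhedron $\mathcal{Q}$ (already justified earlier via the Projection Lemma) together with the generalized Pythagorean inequality $D(\bm{q}\,\|\,\tilde{\bm{q}}^{\nround})\ge D(\bm{q}\,\|\,\bm{q}^{\nround})+D(\bm{q}^{\nround}\,\|\,\tilde{\bm{q}}^{\nround})$, valid for all $\bm{q}\in\mathcal{Q}$. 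This shows that projecting toward the feasible set only decreases the comparator's divergence, so replacing unconstrained leaders by their projections preserves rather than inflates the $\eta^{-1}D(\bm{q}\,\|\,\bm{q}^{1})$ bound on the comparator term. Combining this bound with the untouched stability term, then maximizing over $\bm{q}\in\mathcal{Q}$ and taking expectations, yields the claimed inequality.
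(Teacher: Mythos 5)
The paper does not prove this lemma itself; it imports it by citation (Lemma 9.2 of the referenced lecture notes, with a pointer to a more general statement elsewhere), so the comparison is against the standard OMD analysis. Your overall architecture is the right one — split the regret into the stability term $\sum_{\nround}\langle \bm{q}^{\nround}-\tilde{\bm{q}}^{\nround+1},\cdot\rangle$ plus a comparator term, and show the comparator term is at most $\eta^{-1}D(\bm{q}\,\|\,\bm{q}^{1})$ using optimality of the unconstrained update and the generalized Pythagorean inequality for the Bregman projection. Those are exactly the ingredients of the standard proof.

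However, there is a genuine gap in how you bound the comparator term. You propose to view $\tilde{\bm{q}}^{\nround+1}$ as the unconstrained minimizer of the \emph{cumulative} regularized objective $\eta\sum_{\tau\le\nround}\langle\cdot,\bm{\ell}^{\tau}\rangle+D(\cdot\,\|\,\bm{q}^{1})$ and run a be-the-leader induction. That identification is false for the iterates Algorithm \ref{alg: OMD} actually produces: the update is \emph{agile}, i.e.\ $\tilde{\bm{q}}^{\nround+1}$ is one multiplicative step from the \emph{projected} point $\bm{q}^{\nround}=\Pi_{\mathcal{Q}}(\tilde{\bm{q}}^{\nround})$, so $\nabla R(\tilde{\bm{q}}^{\nround+1})=\nabla R(\bm{q}^{\nround})-\eta\bm{\ell}^{\nround}\neq\nabla R(\bm{q}^{1})-\eta\sum_{\tau\le\nround}\bm{\ell}^{\tau}$; the projection resets the accumulated gradient every round, and $\tilde{\bm{q}}^{\nround+1}$ is not the cumulative leader (it would be only for the \emph{lazy} variant). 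Consequently the inductive step of BTL, which needs optimality of $\tilde{\bm{q}}^{\nround+1}$ for the cumulative objective, fails, and your appeal to the Pythagorean inequality ("projections preserve rather than inflate the bound") does not repair the induction as stated. The correct assembly is per-round rather than cumulative: from first-order optimality of the one-step update one gets $\eta\langle\tilde{\bm{q}}^{\nround+1}-\bm{q},\bm{\ell}^{\nround}\rangle\le D(\bm{q}\,\|\,\bm{q}^{\nround})-D(\bm{q}\,\|\,\tilde{\bm{q}}^{\nround+1})-D(\tilde{\bm{q}}^{\nround+1}\,\|\,\bm{q}^{\nround})\le D(\bm{q}\,\|\,\bm{q}^{\nround})-D(\bm{q}\,\|\,\bm{q}^{\nround+1})$, where the last step is the generalized Pythagorean inequality, and then one telescopes over $\nround$. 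This yields exactly the $\eta^{-1}D(\bm{q}\,\|\,\bm{q}^{1})$ bound you want; your decomposition and final combination are otherwise fine.
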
 
     The remainder of the regret analysis closely follows that of Theorem 1 in \cite{OREPS2013}. At a high level, we want to bound the regret of Online Mirror Descent by the regret of the unconstrained Be the
     (Negentropy) Regularized leader, via Lemma~\ref{lem: Be Regularized leader regret} (see Lemma 13 of \cite{LectureNotes2009} for the more general statement and proof of this lemma). 
     We then upper the contribution of the summation term by using the specific definition of the node weight estimators. Similarly, we upper bound the divergence term as a function of the dimension of the space $\mathcal{Q}$.

    To begin, note that our node utility estimators $\widehat{w}_\nitem^\nround(b)$ are unbiased:
    \begin{align}
        \mathbb{E}_{\bm{b} \sim \bm{\pi}^\nround}[\widehat{w}_\nitem^\nround(b)] = \mathbb{E}_{\bm{b} \sim \bm{\pi}^\nround}[\frac{w_\nitem^\nround(b)}{q^{\nround}_\nitem(b)} \textbf{1}_{b = b^{\nround}_\nitem}] = \frac{w_\nitem^\nround(b)}{q^{\nround}_\nitem(b)} \prob_{\bm{b} \sim \bm{\pi}^\nround} (b = b^\nround_\nitem) = \frac{w_\nitem^\nround(b)}{q^{\nround}_\nitem(b)} q^{\nround-1}_\nitem(b) = w_\nitem^\nround(b)\ .
        \label{proof: part1}
    \end{align}
    Now, consider the right hand side of the inequality in  Lemma \ref{lem: Be Regularized leader regret}. As the node utility estimators are unbiased, so we can replace $\bm{w}^\nround$ with $\widehat{\bm{w}}^\nround$. 
    Now, as per Lemma \ref{lem: Be Regularized leader regret}, we can upper bound the expected estimated regret as a function of the unconstrained optimizer $\tilde{\bm{q}}^{\nround+1}$ and the unregularized relative entropy  with respect to the initial state-edge occupancy measure $\bm{q}^1$. Applying the aforementioned lemma to Equation \eqref{proof: part1}, we obtain:
    \begin{align}
        \textsc{Regret}_\mathcal{B} = \max_{\bm{q} \in \mathcal{Q}} \mathbb{E}\left[\sum_{\nround=1}^\Nround \langle \bm{q}^{\nround} - \bm{q}, -\widehat{\bm{w}}^{\nround} \rangle \right] \leq \max_{\bm{q} \in \mathcal{Q}}\mathbb{E}\left[\sum_{\nround=1}^\Nround \langle \bm{q}^{\nround} - \tilde{\bm{q}}^{ \nround+1}, -\widehat{\bm{w}}^{\nround} \rangle + \eta^{-1}D(\bm{q} || \bm{q}^{1}) \right]
    \end{align}
    Applying $\exp(x) \geq 1 + x$ for $x = \exp(\eta \widehat{\bm{w}}^{\nround})$, we obtain $\tilde{\bm{q}}^{ \nround+1} = \bm{q}^\nround \exp(\eta \widehat{\bm{w}}^{\nround}) \geq \bm{q}^{\nround} + \eta \bm{q}^{\nround} \widehat{\bm{w}}^{\nround}$, which yields $\bm{q}^t - \bm{q}^t\exp(\eta\widehat{\bm{w}}^t) \ge  -\eta \bm{q}^t \widehat{\bm{w}}^t$. Plugging this back in:
    \begin{align}
        \textsc{Regret}_\mathcal{B} &\leq \max_{\bm{q} \in \mathcal{Q}} \mathbb{E}\left[\sum_{\nround=1}^\Nround \langle \bm{q}^{\nround} - \bm{q}^{\nround} \exp(\eta \widehat{\bm{w}}^{\nround}), -\widehat{\bm{w}}^{\nround} \rangle + \eta^{-1}D(\bm{q} || \bm{q}^{1}) \right]\\
        &\le  \max_{\bm{q} \in \mathcal{Q}} \mathbb{E}\left[\eta \sum_{\nround=1}^\Nround \sum_{\nitem = 1}^\Nitem \sum_{b \in \mathcal{B}} q^{\nround}_\nitem(b) \widehat{w}^{\nround}_\nitem(b)^2 + \eta^{-1}D(\bm{q} || \bm{q}^{1}) \right] \,.\label{eq: node diff}
    \end{align}
    Note that $\widehat{w}^{\nround}_\nitem(b) = \frac{w_\nitem^\nround(b)}{q^{\nround-1}_{\nitem}(b)} \textbf{1}_{b = b^{\nround}_{\nitem}}$ for all $\nitem \in [\Nitem]$ and $b \in \mathcal{B}$ by definition. Since $w^{\nround}_\nitem(b) \leq 1$ and $\textbf{1}_{b = b^{\nround}_{\nitem}} \leq 1$ we have $\widehat{w}^{\nround}_\nitem(b) \leq \frac{1}{q^{\nround}_\nitem(b)}$ and we continue the above chain of inequalities with:
    \begin{align}
        \textsc{Regret}_\mathcal{B} &\leq \max_{\bm{q} \in \mathcal{Q}} \mathbb{E}\left[\eta \sum_{\nround=1}^\Nround \sum_{\nitem = 1}^\Nitem \sum_{b \in \mathcal{B}} q^{\nround}_{\nitem}(b) \widehat{w}^{\nround}_\nitem(b) \frac{1}{q^{\nround}_{\nitem}(b)}  + \eta^{-1}D(\bm{q} || q^{1}) \right] \label{eq: full info difference appendix}\\
        &= \max_{\bm{q} \in \mathcal{Q}} \mathbb{E}\left[\eta \sum_{\nround=1}^\Nround \sum_{\nitem = 1}^\Nitem \sum_{b \in \mathcal{B}} \widehat{w}^{\nround}_\nitem(b)  + \eta^{-1}D(\bm{q} || \bm{q}^{1}) \right] \, .
    \end{align} 
    
    Recalling that $D(\bm{q} || \bm{q}^1) = \sum_{\nitem \in [\Nitem], b \in \mathcal{B}} q_\nitem(b)\log\frac{ q_\nitem(b)}{q^1_\nitem(b)} - (q_\nitem(b) - q^1_\nitem(b))$, we note that:
    
    \begin{align*}
        D(\bm{q} || \bm{q}^1) &= \sum_{m = 1}^M \sum_{b \in \mathcal{B}} q_m(b)\frac{\log q_m(b)}{\log q^1_m(b)} - q_m(b) + q^1_m(b) \\
        &= \sum_{\nitem=1}^\Nitem \sum_{b \in \mathcal{B}} q_{\nitem}(b)\log q_m(b) - q_m(b)\log q^1_m(b)\,,
    \end{align*} 
    where in the second equality, we used the fact that the elements both $\bm{q}$ and $\bm{q}^1$ all sum to $M$. Selecting $\bm{q}^1_m(\cdot)$ to be the uniform distribution over all $b \in \mathcal{B}$ and using the fact that the entropy of a discrete distribution over $|\mathcal{B}|$ items is $\log |\mathcal{B}|$, we obtain:
    \begin{align*}
        D(\bm{q} || \bm{q}^1) &= -\sum_{\nitem=1}^\Nitem H(\bm{q}_m) + \log |\mathcal{B}|\sum_{\nitem=1}^\Nitem \sum_{b \in \mathcal{B}} q_{\nitem}(b) \\
        &\leq \sum_{\nitem=1}^\Nitem \log |\mathcal{B}| + \log |\mathcal{B}|\sum_{\nitem=1}^\Nitem \sum_{b \in \mathcal{B}} q_{\nitem}(b) = \Theta(M\log|\mathcal{B}|)\,,
    \end{align*}
    where $H(\bm{x}) = -\sum_{x \in \bm{x}} x \log x $ denotes the discrete entropy function.  
    Plugging this back in:
    \begin{align*}
        \textsc{Regret}_\mathcal{B} \leq \mathbb{E}\left[\eta \sum_{\nround=1}^\Nround \sum_{\nitem = 1}^\Nitem \sum_{b \in \mathcal{B}} \widehat{w}^{\nround}_\nitem(b)  + \eta^{-1}\Nitem \log |\mathcal{B}| \right] \leq \eta \sum_{\nround=1}^\Nround \sum_{\nitem=1} \sum_{b \in \mathcal{B}} w^{\nround}_\nitem(b) + \eta^{-1}\Nitem \log |\mathcal{B}| = \eta \Nround \Nitem |\mathcal{B}| + \eta^{-1}\Nitem \log |\mathcal{B}|\,,
    \end{align*}
    where we used unbiasedness of $\widehat{\bm{w}}^{\nround}$. Setting $\eta = \sqrt{\frac{\log |\mathcal{B}|}{|\mathcal{B}|T}}$, we obtain $\textsc{Regret}_\mathcal{B}(\Nround) \leq \Nitem \sqrt{|\mathcal{B}| \Nround \log |\mathcal{B}|}$.

    \textbf{Part 2: Determining Policy $\bm{\pi}$ from Node Probability Measures $\bm{q}$.} 
    Notice that in our regret analysis for both the bandit and full information setting, we do not require explicit knowledge of the policy $\bm{\pi}^t$, so long as it generates the desired node occupancy measure $\bm{q}^t$. In particular, we require a method of converting $\bm{q}^\nround$ to policy $\bm{\pi}^\nround$ which, in turn, is required in order to sample $\bm{b}^\nround$. Recall from Lemma~\ref{lem: QSpace Equivalence} that the mapping from the space of policies $\Pi$ to the space of node weight measures $\mathcal{Q}_\Pi = \mathcal{Q}$ is injective. Thus, for any $\bm{q} \in \mathcal{Q}$, there must exists a $\bm{\pi} \in \Pi$ such that $q(\bm{\pi}) = \bm{q}$. Moreover, the set $\Pi(\bm{q})$ of such $\bm{\pi}$ can be written as the intersection of two polyhedrons, and hence a polyhedron, from which a feasible solution can be computed efficiently (e.g., ellipsoid method), where  $ \Pi(\bm{q})$ is the set of policies $\pi \in [0,1]^{M\times |\mathcal B|\times |\mathcal B|}$ such that 
    \begin{itemize}
        \item $\pi((0, \max \mathcal{B}), b) = q_1(b)$, for any $b \in \mathcal{B}$;
        \item $\pi((0, b), b') = \textbf{1}_{b = b'}$ for any $b, b' < \max \mathcal{B}$;
        \item $q_{m+1}(b') = \sum_{b \in \mathcal{B}} q_m(b) \pi((m, b), b')\}$ for any $b'\in \mathcal B$ and $m \in [M-1]$.
    \end{itemize}

    \textbf{Part 3: Complexity analysis.} One may wonder how to efficiently update the state occupancy measures by computing the minimizer of $\eta\langle \bm{q}, -\widehat{\bm{w}}^\nround\rangle + D(\bm{q} || \bm{q}^{\nround-1})$. The idea is to first solve the unconstrained entropy regularized minimizer with $\tilde{\bm{q}}^{ \nround+1} = \bm{q}^{\nround} \exp(\eta \widehat{\bm{w}}^{\nround})$. We then project this unconstrained minimizer to $\mathcal{Q}$ with:
    \begin{align}
        \bm{q}^{\nround + 1} = \text{argmin}_{\bm{q} \in \mathcal{Q}} D(\bm{q}||\tilde{\bm{q}}^{\nround + 1})\,.
    \end{align}
    Relegating the details to \cite{OREPS2013}, the above constrained optimization problem can be solved as the minimizer of an equivalent unconstrained convex optimization problem with a polynomial (in $\Nitem$ and $|\mathcal{B}|$) number of variables, and therefore, can be computed efficiently. Combining with finding an initial feasible solution to $\Pi(\bm{q})$ as well as the optimization step, we achieve polynomial in $\Nitem, |\mathcal{B}|, \Nround$ total time complexity. For the space complexity, we only need store the values of $\bm{\pi}^\nround$, $\bm{q}^\nround$, and $\widehat{\bm{w}}^\nround$, for a total space complexity of $O(\Nitem |\mathcal{B}|^2)$.   
    
    \textbf{Part 4: Continuous Regret.} To obtain the continuous regret, recall that the discretization error is $O(\frac{\Nitem \Nround}{|\mathcal{B}|})$. As the discretized regret is $O\left(\Nitem \sqrt{|\mathcal{B}| \Nround \log |\mathcal{B}|}\right)$ in the bandit feedback setting, the optimal choice of $|\mathcal{B}|$ is $\Theta(\Nround^{\frac{1}{3}})$, which achieves continuous regret $\textsc{Regret} = O(\Nitem \Nround^{\frac{2}{3}} \sqrt{\log \Nround})$.

\end{proof}

\subsubsection{Proof of Corollary \ref{cor}}
\label{sec: Proof of cor}

We can straightforwardly extend Algorithm~\ref{alg: OMD} to the full information setting. To do this, we note that we can improve Equation~\eqref{eq: full info difference appendix} by instead replacing $\widehat{\bm{w}}^{\nround}$ with $\bm{w}^{\nround}$ in Equation~\eqref{eq: node diff} to obtain:
\begin{align*}
    \sum_{\nround=1}^\Nround \sum_{\nitem=1}^\Nitem \sum_{b \in \mathcal{B}} q^{\nround}_\nitem(b) \widehat{w}^{\nround}_\nitem(b)^2 = \sum_{\nround=1}^\Nround \sum_{\nitem=1}^\Nitem \sum_{b \in \mathcal{B}} q^{\nround}_\nitem(b) w^{\nround}_\nitem(b)^2 \leq \sum_{\nround=1}^\Nround \sum_{\nitem=1}^\Nitem \sum_{b \in \mathcal{B}} q^{\nround}_\nitem(b) = \sum_{\nround=1}^\Nround \sum_{\nitem=1}^\Nitem 1 = \Nround \Nitem\,.
\end{align*}
Setting $\eta = \sqrt{\frac{ \log |\mathcal{B}|}{T}}$, we obtain in the full information setting $\textsc{Regret}_\mathcal{B} = O(\Nitem \sqrt{\Nround \log |\mathcal{B}|})$. We can also compute the optimal choice of $|\mathcal{B}|$ to obtain optimal continuous regret. Using the optimal choice of $|\mathcal{B}|$ being $\Theta(\sqrt{T})$, we achieve continuous regret of $\textsc{Regret} = O(\Nitem \sqrt{\Nround \log \Nround})$. Note that due to the complexity of the optimization sub-routine in the projection step of OMD, for the full information setting, it is preferable to use Algorithm~\ref{alg: Decoupled Exponential Weights} instead.

{\color{black}
\subsection{Proof of Theorem \ref{thm: time varying known finite}}
We now prove the regret bounds of the contextualized version of our decoupled hedge algorithm (Algorithm~\ref{alg: Decoupled Exponential Weights - Time Varying Known Finite}) to handle time-varying valuations.
To begin, we can once again `decouple' the utility per unit-bid pair, but this time conditional on the valuation vector context. In particular, we have:
\begin{align*}
    \mu_n^t(\bm{b}; \bm{v}) =  \sum_{m=1}^M w_m^t(b_m; \bm{v}) = \sum_{m=1}^M (v_m - b_m)1_{b_m \geq b^t_{-m}} \quad \text{and} \quad \widehat{\mu}_n^t(\bm{b}; \bm{v}) =  \sum_{m=1}^M \widehat{w}_m^t(b_m; \bm{v})\,.
\end{align*}
As stated earlier, we define reward estimates based on Equation (6) of \cite{ContextBanditsCrossLearning2019} and our Algorithm~\ref{alg: Decoupled Exponential Weights - Path Kernels}: 
\begin{align*}
    \widehat{w}_m^t(b; \bm{v}) = 1 - \frac{1 - w_m^t(b; \bm{v})}{\sum_{\bm{v} \in \mathcal{V}} \prob(\bm{v}^t = \bm{v}) q_m^t(b; \bm{v})} \textbf{1}_{b_m^t = b} = 1 - \frac{1 - w_m^t(b; \bm{v})}{Q_m^t(b)} \textbf{1}_{b_m^t = b}\,.
\end{align*}
Here, $q_m^t(b; \bm{v}) = \prob(b^t_m = b | \bm{v}^t = \bm{v}) = \sum_{\bm{b}: b^t_m = b} \prob(\bm{b}^t = \bm{b} | \bm{v}^t = \bm{v})$ is the probability of selecting bid $b$ in slot $m$ with valuation $\bm v$. Similarly, $Q_m^t(b)$ is the probability of selecting bid $b$ for unit $m$, averaged across all possible valuations. One can verify unbiasedness of this estimator $\mathbb{E}[\hat{w}_m^t(b; \bm{v})] = w_m^t(b; \bm{v})$ for all $m \in [M], b \in \mathcal{B}, \bm{v} \in \mathcal{V}$. The second moment can similarly be computed as:
\begin{align*}
    \mathbb{E}[\widehat{w}_m^t(b; \bm{v})^2] = \mathbb{E}\left[\left( 1 - \frac{1-w_m^t(b; \bm{v})}{Q_m^t(b)} \textbf{1}_{b_m^t = b} \right)^2 \right] = 1 - 2\mathbb{E}\left[\frac{1-w_m^t(b; \bm{v})}{Q_m^t(b)}\textbf{1}_{b_m^t=b}\right] + \mathbb{E}\left[\left(\frac{1-w_m^t(b; \bm{v})}{Q_m^t(b)}\right)^2\textbf{1}_{b_m^t=b}\right]\,.
\end{align*}

Evaluating the expectations and recalling that $\mathbb{E}[\textbf{1}_{b_m^t = b}] = Q_m^t(b)$, we have:
\begin{align*}
    \mathbb{E}[\widehat{w}_m^t(b; \bm{v})^2] = 1 - \left[2 - 2w_m^t(b; \bm{v})\right] + \left[\frac{(1 - w_m^t(b; \bm{v}))^2}{Q_m^t(b)}\right] = 2w_m^t(b; \bm{v}) - 1 + \frac{1}{Q_m^t(b)} \leq 1 + \frac{1}{Q_m^t(b)} \leq \frac{2}{Q_m^t(b)}\,.
\end{align*}

Using this, the proof largely follows that of Algorithm~\ref{alg: Decoupled Exponential Weights - Path Kernels} up until Equation (\ref{eq: full info difference}). In particular, we have that the contextual regret can be written as:
\begin{align*}
    \textsc{Regret}_\mathcal{B}(F_{\bm{v}}) &= \mathbb{E}_{F_{\bm{v}}}\left[\sum_{t=1}^{T} \mu_n^t(\bm{b}'; \bm{v}^t) - \sum_{t=1}^{T} \mathbb{E}[\mu^t(\bm{b}^\nround; \bm{v}^t)]\right]\\
    &\lesssim \eta^{-1}M\log|\mathcal{B}| + \eta \mathbb{E}_{F_{\bm{v}}}\left[\sum_{\nround=1}^\Nround \sum_{\bm{b}} \prob(\bm{b}^\nround =\bm{b}| \bm{v}^t = \bm{v}) \mathbb{E}[(\sum_{m=1}^M \widehat{w}^\nround_m(b_m; \bm{v}))^2]\right]\\
    &= \eta^{-1}M\log|\mathcal{B}| + \eta \left[\sum_{\nround=1}^\Nround \sum_{\bm{v} \in \mathcal{V}} \prob(\bm{v}^t = \bm{v})\sum_{\bm{b}} \prob(\bm{b}^\nround =\bm{b}| \bm{v}^t =  \bm{v}) \mathbb{E}[(\sum_{m=1}^M \widehat{w}^\nround_m(b_m; \bm{v}))^2]\right]\\
    &= \eta^{-1}M\log|\mathcal{B}| + \eta M\left[\sum_{\nround=1}^\Nround \sum_{m=1}^M \sum_{\bm{v} \in \mathcal{V}} \prob(\bm{v}^t = \bm{v})\sum_{b \in \mathcal{B}} \mathbb{E}[\widehat{w}_m^t(b; \bm{v})^2]  \sum_{\bm{b}: b_m = b} \prob(\bm{b}^\nround =\bm{b}| \bm{v}^t = \bm{v})\right]\\
    &= \eta^{-1}M\log|\mathcal{B}| + \eta M\left[\sum_{\nround=1}^\Nround \sum_{m=1}^M \sum_{\bm{v} \in \mathcal{V}} \prob(\bm{v}^t = \bm{v})\sum_{b \in \mathcal{B}} \mathbb{E}[\widehat{w}_m^t(b; \bm{v})^2]  q_m^t(b; \bm{v})\right]\\
    &= \eta^{-1}M\log|\mathcal{B}| + 2\eta M\left[\sum_{\nround=1}^\Nround \sum_{m=1}^M \sum_{\bm{v} \in \mathcal{V}} \prob(\bm{v}^t = \bm{v})\sum_{b \in \mathcal{B}} \frac{1}{Q_m^t(b)}  q_m^t(b; \bm{v})\right]\\
    &= \eta^{-1}M\log|\mathcal{B}| + 2\eta M\left[\sum_{\nround=1}^\Nround \sum_{m=1}^M \sum_{b \in \mathcal{B}} \frac{1}{Q_m^t(b)} \sum_{\bm{v} \in \mathcal{V}} \prob(\bm{v}^t = \bm{v}) q_m^t(b; \bm{v})\right]\\
    &= \eta^{-1}M\log|\mathcal{B}| + 2\eta M\left[\sum_{\nround=1}^\Nround \sum_{m=1}^M \sum_{b \in \mathcal{B}} \frac{1}{Q_m^t(b)} Q_m^t(b)\right]\\
    &\leq \eta^{-1}M\log|\mathcal{B}| + \eta M^2|\mathcal{B}|T\,.
\end{align*}
(We will show the first inequality shortly.)
With $\eta = \Theta(\sqrt{\frac{\log |\mathcal{B}|}{M|\mathcal{B}|T}})$,  this yields  the  discretized contextual regret upper bounds of $O(M^{\frac{3}{2}}\sqrt{|\mathcal{B}| T \log |\mathcal{B}|})$ under the bandit setting. Accounting for the rounding error of order $O(\frac{MT}{|\mathcal{B}|})$, we obtain the stated continuous contextual regret upper bounds by optimizing with $|\mathcal{B}| = M^{-\frac{1}{3}}T^{\frac{1}{3}}$. 
To obtain the full information results, we simply replace $\widehat{w}_m^t(b_m; {\bm v}^t)$ with $w_m^t(b_m; {\bm v}^t)$ in the second line of the above equations, which leads to  the discretized contextual regret upper bounds of $O(M^{\frac{3}{2}}\sqrt{ T \log |\mathcal{B}|})$, as desired.

Next, following the proof of Algorithm \ref{alg: Decoupled Exponential Weights - Path Kernels}, we show   the first inequality. 
We define the potentials with respect to a fixed valuation vector $\bm{v}$: $\Phi^t(\bm{v}) = \sum_{\bm{b} \in \mathcal{B}^{+\Nitem}} \exp(\eta \sum_{\tau=1}^{t} \widehat{\mu}^\tau(\bm{b}; \bm{v}^\tau))$. Taking the ratio of adjacent terms, we obtain:
\begin{align*}
    \frac{\Phi^t(\bm{v})}{\Phi^{t-1}(\bm{v})} = \sum_{\bm{b} \in \mathcal{B}^{+M}} \frac{\exp(\eta \sum_{\tau=1}^{t-1} \widehat{\mu}^\tau(\bm{b}; \bm{v}^\tau))}{\Phi^{t-1}(\bm{v})} \exp(\eta \widehat{\mu}^t(\bm{b}; \bm{v}^t)) = \sum_{\bm{b} \in \mathcal{B}^{+M}} \prob(\bm{b}^\nround = \bm{b} | \bm{v}^\nround = \bm{v}) \exp(\eta \widehat{\mu}^t(\bm{b}; \bm{v}^\nround))\,,
\end{align*}
Where in the last equality, we used the condition that our algorithm samples bid vector $\bm{b}$ with probability proportional to $\exp(\eta \sum_{\tau=1}^{t-1} \widehat{\mu}^\nround(\bm{b}; \bm{v}))$ at round $\nround$ with valuations $\bm{v}^t = \bm{v}$. Combining this with inequalities $\exp(x) \leq 1 + x + x^2$ and $1 + x \leq \exp(x)$ for all $x \leq 1$, we obtain:
\begin{align*}
    \frac{\Phi^t(\bm{v})}{\Phi^{t-1}(\bm{v})} \leq \sum_{\bm{b} \in \mathcal{B}^{+M}} \prob(\bm{b}^\nround = \bm{b} | \bm{v}^\nround = \bm{v}) \exp(\eta \widehat{\mu}^t(\bm{b}; \bm{v})) \leq \exp(\sum_{\bm{b} \in \mathcal{B}^{+M}} \prob(\bm{b}^\nround = \bm{b} | \bm{v}^\nround = \bm{v}) \left[\eta\widehat{\mu}^t(\bm{b}; \bm{v}) + \eta^2 \widehat{\mu}^t(\bm{b}; \bm{v})^2 \right])\,.
\end{align*}
Combining this with Equations~\eqref{eq: Potentials} and the fact that $\Phi^0(\bm{v}) = M\log |\mathcal{B}|$, for any fixed bid vector $\bm{b}'$, we have:
\begin{align*}
    \sum_{t=1}^{T} \widehat{\mu}^t(\bm{b}'; \bm{v}) - \sum_{t=1}^{T} \sum_{\bm{b}} \prob(\bm{b}^\nround = \bm{b}| \bm{v}^t = \bm{v})  \widehat{\mu}^t(\bm{b}; \bm{v}) &\lesssim \eta^{-1} \Nitem \log |\mathcal{B}| + \eta \sum_{t=1}^T \sum_{\bm{b}}\prob(\bm{b}^\nround = \bm{b} | \bm{v}^\nround = \bm{v}) \widehat{\mu}^t(\bm{b}; \bm{v})^2\\
    &= \eta^{-1} \Nitem \log |\mathcal{B}| + \eta \sum_{t=1}^T \sum_{\bm{b}}\prob(\bm{b}^\nround = \bm{b} | \bm{v}^\nround = \bm{v}) (\sum_{m=1}^M \widehat{w}_m^t(b_m; \bm{v}))^2\,.
\end{align*}
Taking expectations over $\bm{b}$ and the supremum over all $\bm{b}'$ yields the desired first crucial regret inequality.

As for the time and space complexity, notice that the only algorithmic difference between Algorithm~\ref{alg: Decoupled Exponential Weights - Time Varying Known Finite} and Algorithm~\ref{alg: Decoupled Exponential Weights - Path Kernels} is precisely in computing the estimator, which in the former, requires having to compute the weights $Q_m^t(b)$ by iterating over all $\bm{v} \in \mathcal{V}$. As we also have to store reward estimates for each possible valuations, both the time complexity and space complexity of Algorithm~\ref{alg: Decoupled Exponential Weights - Time Varying Known Finite} are a factor $|\mathcal{V}|$ larger than in Algorithm~\ref{alg: Decoupled Exponential Weights - Path Kernels}, which are $O(M|\mathcal{B}| |\mathcal{V}| T)$ and $O(M|\mathcal{B}| |\mathcal{V}|)$ respectively.

}

\section{Online Appendix - Additional Discussion and Experiments}

In this section, we run several additional experiments and include further discussion of our previous results as well as these new experiments. First, we discuss the interaction between the underlying parameters $N, M, |\mathcal{B}|$ and the $c = c_{-n}$ condition required to guarantee existence of a PNE in Theorem~\ref{thm: PNE existence}. We then justify our use of the $\textsc{EXP3-IX}$ weight estimator rather than the unbiased estimator in the implementation of our algorithms in Section~\ref{sec: experiments}. We also discuss a method that achieves uniform exploration per item as per the $\textsc{EXP3.P}$ algorithm described in \cite{Lattimore2020}. Thirdly, we include several experiments omitted in the main body regarding faster convergence with a larger, more competitive market. We then conclude with a discussion of the practicality of the PAB vs. Uniform Price auction formats from the market design perspective.

\subsection{Nash Equilibrium Existence}

\label{sec: Nash equilibrium existence further discussion}

The assumption of competitiveness $c = c_{-n}$ used in Theorem~\ref{thm: PNE existence} can be relaxed. The key idea behind this assumption is that no individual bidder $n$ cannot lower their bids such that their decreased payment offsets their decreased allocation. To see this in effect without requiring the $c = c_{-n}$ condition, consider the following example. 
\begin{enumerate}
    \item Let $N=3, M=3$, and $\mathcal{B} = \{\frac{i}{10}\}_{i \in [10]}$.
    \item Let $v_1 = [1, 0, 0]$, $v_2 = [1, 0.7 - \epsilon, 0], v_3 = [1, 0.4, 0]$ for some small $\epsilon > 0$.
    \item Here, $c = 1$ but $c_{-1} = c_{-3} = 0.7 - \epsilon \neq c_{-2} = 0.4$. Despite $c \neq c_{-n}$, we have that a PNE exists in the form $[0.6, 0, 0], [0.5, 0.5, 0], [0.5, 0.4, 0]$. To verify that this is indeed a PNE, each of the bidders obtains an allocation of 1, with corresponding utilities of 0.4, 0.5, and 0.5 respectively. Considering only uniform winning bids, which contain the set of optimal responses as per Lemma~\ref{lem: near-uniform optimal bidding}, we see the bidders are bidding at Nash:
    \begin{enumerate}
        \item Bidder 1 only demands one item. Here is $b_{2,2} = 0.5$ is bidder 1's $M$'th largest competing bid. Since this belongs to a higher tie-break priority bidder, they must bid strictly higher, yielding a clearing price of 0.6 for bidder 1. Thus, bidder 1 cannot increase or decrease their bids and therefore must be playing their optimal response.
        \item Bidder 2 can win two items at a price of 0.6, which yields utility $0.5-\epsilon < 0.5$. Similarly, in order to win one item, their $M$'th highest competing bid is $b_{3,2} = 0.4$. Since this belongs to a higher tie-break priority bidder, they must bid strictly higher, yielding a clearing price of 0.5. Thus, bidder 2 cannot increase or decrease their bids and therefore must be playing their optimal response.
        \item Bidder 3 can win two items at a price of 0.6, which yields utility $0.2 < 0.5$. Similarly, in order to win one item, their $M$'th highest competing bid is $b_{2,2}=0.5$. Since this belongs to a lower tie-break priority bidder, this yields a clearing price of $0.5$ for bidder 3. Thus, bidder 3 cannot increase or decrease their bids and therefore must be playing their optimal response.
    \end{enumerate}
\end{enumerate}

We have shown that the $c = c_{-n}$ condition is not necessary for PNE existence. Unfortunately, giving a simple characterization of conditions that guarantee PNE existence is non-trivial. However, as mentioned at the beginning of this section, any equilibrium bids $(\bm{b}_1^*,\ldots,\bm{b}_N^*)$ requires that individual bidders cannot i) profitably sacrifice allocation in order to reduce costs and ii) profitably increase their bids to increase allocation. Let $\tilde{b}_{-n,m}$ be as in Lemma~\ref{lem: near-uniform optimal bidding}: $\tilde{b}_{-n,m}$ denotes bidder $n$'s $m$'th smallest competing bid rounded up to the next multiple of $\frac{1}{|\mathcal{B}|}$ if belonging to bidder $n' > n$ due to tie-breaking. In other words, submitting $\tilde{b}_{-n,m}$ for the first $m$ items minimizes the cost required to win $m$ items. As such, we can construct an efficient frontier $\bm{\mu}_n(\bm{b}_{-n}^*) \doteq \{\sum_{i=1}^m v_{n,i} - m\tilde{b}_{-n,m}\}_{m \in [M]}$ of bidder $n$'s utility by minimizing the cost required to win $m$ items. If this frontier is maximized at $\bm{b}_n^*$ for all $n$, then $(\bm{b}_1^*,\ldots,\bm{b}_N^*)$ constitutes a PNE. More formally, a PNE exists at $(\bm{b}_1^*,\ldots,\bm{b}_N^*)$ if:
\begin{align}
    b_{n,m}^* = \min\left(\tilde{b}_{-n,m^*}, \lfloor v_{n,m} \rfloor_{\delta}\right) \quad \text{where} \quad m^* = \text{argmax}_{m \in [M]} \bm{\mu}_n(\bm{b}_{-n}^*) \quad \forall m \in [M], n \in [N]\,.
\end{align}

With this, we can interpret the relaxed condition as saying that bidders prefer to win as many units as possible (subject to positive marginal per-unit utility) over winning fewer units at a reduced cost. This can equivalently interpreted as bidders being price-takers and having little manipulative power over the market price. While even this weaker assumption can be easily violated in artificial examples, e.g., the example in our bid convergence remark, we claim that it is a reasonable assumption in many of the PAB auction's real world applications and markets.
 
\begin{enumerate}
    \item Bidders only have high utility for a small number of units, i.e., the number of units $M$ each bidder $n$ demands is much smaller than the supply $\overline{M}$.
    This is reasonable in many relevant markets, e.g. electricity or emissions, as the total supply far exceeds any single firm's power usage or pollution capacity. 
    \item The total supply is smaller than the aggregate demand, i.e., $\sum_{n \in [N]} \overline{M}_n \gg M$.
    For example, electricity supply in the US in the early 2020's has been strained due to increasing power requirements across many industries, due in part to the rise of energy intensive AI technologies, cloud computing, and cryptocurrency mining. This is also the case in carbon markets, the supply $M$ is artificially limited so as to keep the prices from falling too low and becoming too weak of a disincentive (e.g. carbon markets) to pollute.
\end{enumerate}
In conjunction, these conditions imply that no individual bidder can sufficiently impact prices as their demand is too small compared to the aggregate demand and supply, thus, their reduced costs from underbidding is outweighed by their reduction in allocation. We do note that this result differs from existing characterizations of efficient Pure Nash Equilibria in PAB auctions \cite{inefficiency2013}. However, the Nash equilibria described in the latter require overbidding (bidding above one's marginal valuations) for units, so long as they are guaranteed not to win one of these units. In particular, if every bidder submits the same vector of $[b,\ldots,b]$ where $b$ is the $M$'th largest valuation among all bidders, and ties are broken in favor of bidders with the highest value, then this is also an efficient Pure Strategy Nash equilibrium. 

To illustrate this more quantitatively, we analyze the effects of changing $M, |\mathcal{B}|, N$ on the probability that a PNE (assuming no overbidding) exists in Figure~\ref{fig: PNE existence vs M B N}. Of course as $|\mathcal{B}|$ gets smaller and the number of agents $N$ becomes larger, the probability that the rounded-down $M$'th highest-other-valuation is equal for all $n$ increases. Curiously, the probability is also increasing, albeit slowly, as a function of $M$.

\begin{figure}
    \centering
    \includegraphics[scale=0.43, trim={0 33 0 0},clip]{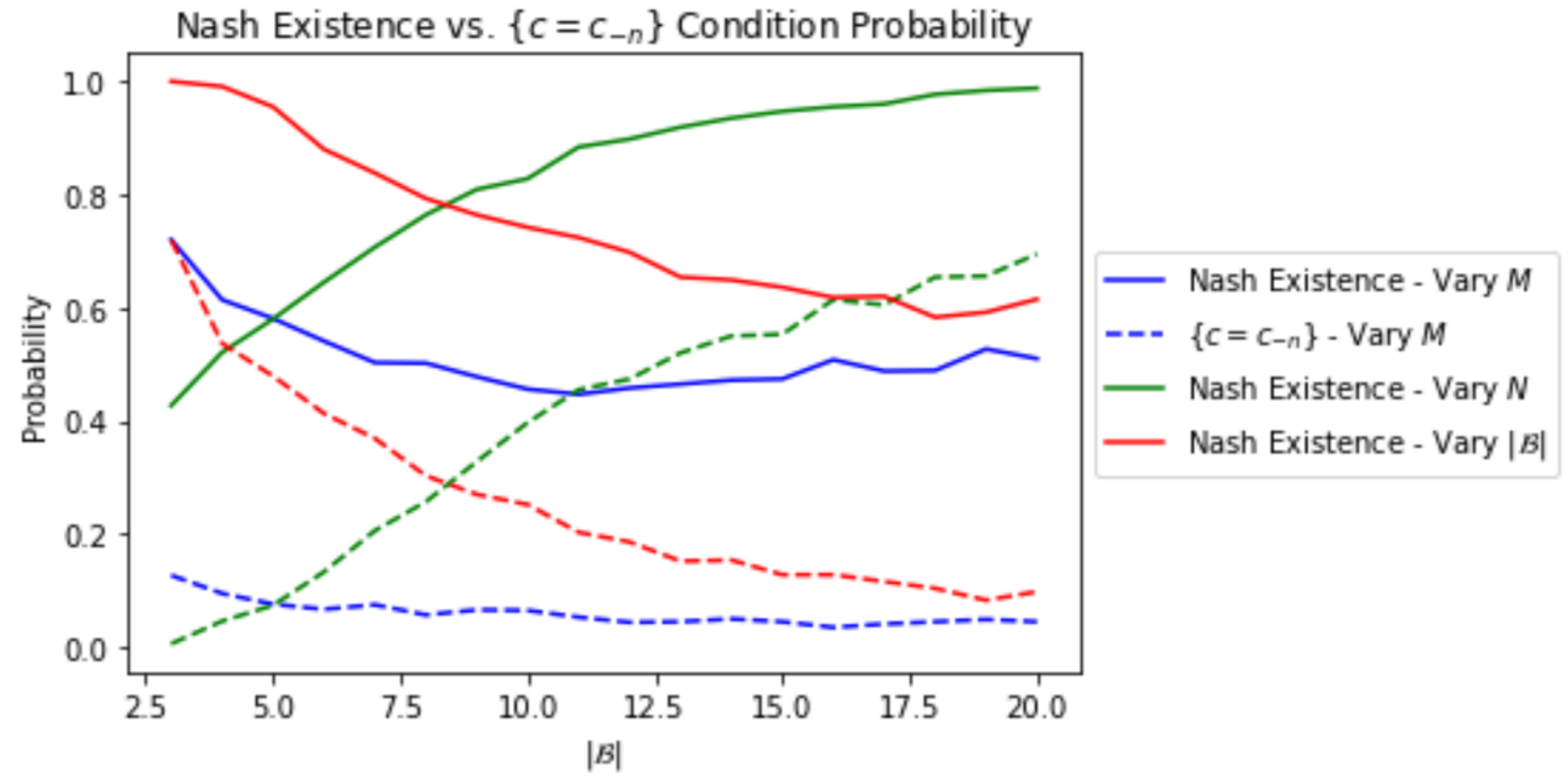}
    \caption{We plot the probability of the existence of a PNE versus the probability of satisfying the condition $c = c_{-n}$ as in Theorem~\ref{thm: PNE existence} as a function of $M, |\mathcal{B}|, N$ which is varied over the horizontal axis. Here, we fix two of $M = 5, N = 5, |\mathcal{B}| = 10$ when varying the remaining parameter between $3$ and $20$. The valuation vectors are constructed by drawing and sorting $M$ valuations from a Unif(0, 1) distribution.}
    \label{fig: PNE existence vs M B N}
\end{figure}

\subsection{$\textsc{EXP3-IX}$ vs. Unbiased Reward Estimator}

\label{sec: IX}

In the experiments section, we ran a slightly modified version of our existing algorithms in the bandit feedback setting. We do this as the variance of the accumulated regret of our algorithms are high, as the node weight estimators normalize over vanishingly small probabilities $q^t_m(b)$. To mitigate the effect of such normalization, we use the $\textsc{EXP3-IX}$ estimator as described in \cite{Lattimore2020}. Under this estimator, rather than normalizing the probability of selecting bid $b^t_m$ for unit $m$ at time $t$ by $q^t_m(b_m^t)$, we instead normalize it by $q^t_m(b_m^t) + \gamma$ for some constant $\gamma > 0$. In the standard $K$-armed bandit setting, despite being a biased estimator, still achieves the same sublinear expected regret guarantee  with a smaller variance. This smaller variance indeed allows for stronger high probability guarantees on the magnitude of our regret; i.e., for $\delta > 0$ and $\gamma = \sqrt{\frac{\log(K) + \log(\frac{K+1}{\delta})}{4K\Nround}}$, the $\textsc{EXP3-IX}$ algorithm guarantees with probability at least $1 - \delta$ that the regret is upper bounded by $C\sqrt{KT\log K}$ for some absolute constant $c > 0$. We extend this algorithm to the multi-unit PAB setting algorithms, where for each node $(m, b)$, we set $\gamma = \sqrt{\frac{\log(K) + \log(\frac{K+1}{\delta})}{4K\Nround}}$ and $K = |\{b \in \mathcal{B}: b \leq v_m\}|$, for $\delta = 0.05$. Aside from the change in node weight estimators, the $\textsc{EXP3-IX}$ versions of Algorithms \ref{alg: Decoupled Exponential Weights - Path Kernels} and \ref{alg: OMD} are exactly the same.\\

\subsubsection{Empirical Performance of Original Algorithms vs. $\textsc{EXP3-IX}$ Variants}

In this section, we empirically analyze the modified variants of our algorithms which use the biased, but lower variance $\textsc{EXP3-IX}$ node-weight estimators (see Appendix~\ref{sec: IX}). We compare the distribution of the regret recovered by these modified algorithms versus the non-modified versions when the number of units is one. The bidder, endowed with valuation vector $\bm{v} = [1]$, will compete against a single adversary over the course of $T$ rounds for $\overline{M}=M=1$ item. This is the standard first price auction (FPA). Here, we compare performance when the adversary is stochastic (bids drawn uniformly random from $[0, 1]$) versus adaptive adversary (running the same algorithm, with a valuation drawn uniformly random from $[0, 1]$). 

We plot the regret of the bidder against the stochastic and adversarial competitors for moderate $T \in \{100, 500, 2000, 10000\}$. The stochastic adversary setting is shown in Figure~\ref{fig:eval_IX} (a) and the adversarial setting is shown in Figure~\ref{fig:eval_IX} (b). 
We observe that while the $\textsc{EXP3-IX}$ variants marginally worsens regret for small values of $T \in \{100, 500\}$ for both the stochastic and adaptive settings, it significantly mitigates the heavy tailed distribution of regret for large $T \in \{2000, 10000\}$, especially in the adversarial setting.

\begin{figure}
     \if 0
     \centering
     \begin{subfigure}[b]{0.3\textwidth}
         \includegraphics[scale=0.55]{figs/Figure13Left.png}
     \end{subfigure}
     \hfill
     \begin{subfigure}[b]{0.51\textwidth}
         \includegraphics[scale=0.55]{figs/Figure13Right.png}
     \end{subfigure}
     \fi
     \centering
     \begin{subfigure}[b]{0.3\textwidth}
         \includegraphics[scale=0.55]{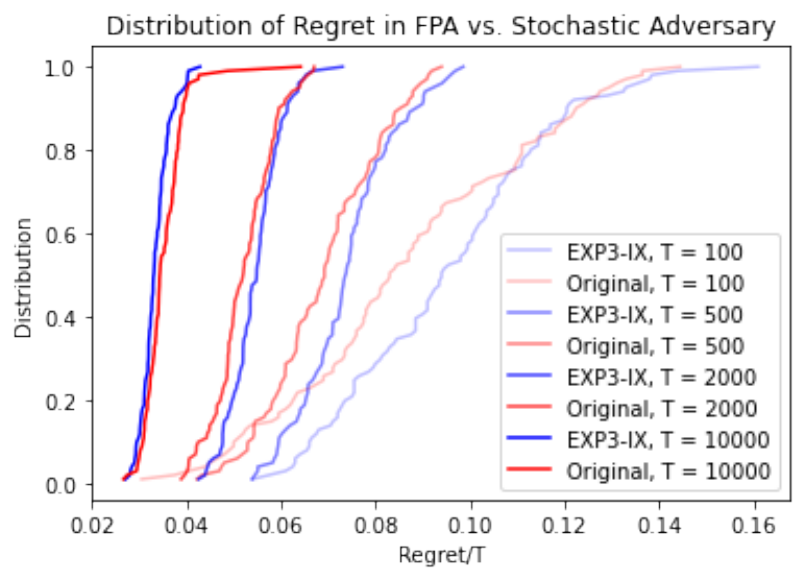}
     \end{subfigure}
     \hfill
     \begin{subfigure}[b]{0.51\textwidth}
         \includegraphics[scale=0.55]{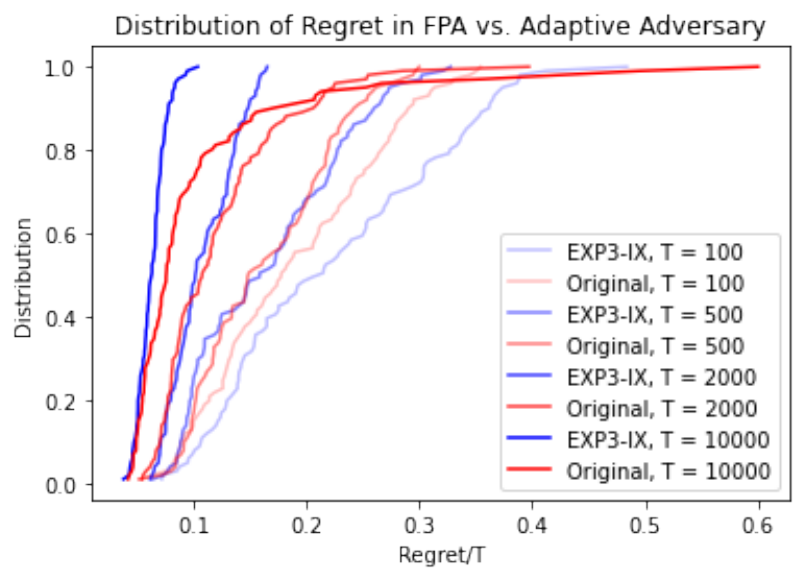}
     \end{subfigure}
    \caption{Distribution of regret when using OMD vs its $\textsc{EXP3-IX}$ variant against stochastic (left) and adaptive adversaries for varying $T$ (right).}
    \label{fig:eval_IX}
    \label{fig:exp1_reg}
\end{figure}

\subsubsection{$\textsc{EXP3.P}$ vs. Unbiased Reward Estimator}

\label{sec: EXP3P}

One downside of the $\textsc{EXP3-IX}$ node weight estimator approach is that the added exploration aggregates over layers $m \in [M]$ in an uneven manner, as bid vectors must stay monotonic. As such, this per-node re-weighting does not guarantee convergence of the regret distribution in probability and only maintains the weaker guarantee over expectation convergence. To mitigate the effect of such uneven exploration aggregation, we instead modify the $\textsc{EXP3.P}$ algorithm as described in \cite{ Lattimore2020}. This algorithm explicitly mixes in uniform noise into the decisions made by the algorithm, and then normalizes accordingly. How does one sample uniformly from the exponentially large space of all monotone, individually rational bid vectors? We claim that the following procedure straightforwardly achieves such uniform mixing:
\begin{enumerate}
    \item Select $b_1$ uniformly at random from $\mathcal{B}$ subject to $b_1 < v_1$.
    \item If $b_m < v_{m+1}$, then set $b_{m+1} = b_m$. Otherwise, select $b_{m+1}$ uniformly at random from $\mathcal{B}$ subject to $b_{m+1} < v_{m+1}$.
\end{enumerate}
With a random exploration probability of $\gamma \in (0, 1)$, the $\textsc{EXP3.P}$ variant of our algorithm performs the above uniform exploration with probability $\gamma$ and follows the procedure in Algorithm~\ref{alg: Decoupled Exponential Weights - Path Kernels} otherwise. Now, we must also account for this in our node weight estimators. In particular, under the $\textsc{EXP3.P}$ variant, the probability of selecting bid $b^t_m$ for unit $m$ is given by $\hat{q}^t_m(b_m^t) = (1-\gamma)q^t_m(b_m^t) + \frac{\gamma}{|b \in \mathcal{B}: b < v_{m}|} = (1-\gamma)q^t_m(b_m^t) + \frac{\gamma \delta}{\lfloor v_{m}\rfloor_\delta}$. We then update the rewards of all bids $b < v_m$ in layer $m$ as:
\begin{align*}
    \widehat{W}^{\nround+1}_{\nitem}(b) \gets \widehat{W}^{\nround}_{\nitem}(b) + \frac{(v_m - b)\textbf{1}_{b \geq b^t_{-m}} + \beta_m}{\hat{q}^t_m(b)} \textbf{1}_{b^t_m = b})
\end{align*}
where $\beta_m = \Theta(\sqrt{\frac{\log(|\mathcal{B}| T/\delta)}{|\mathcal{B}|T}})$ for some high probability parameter $\delta \in (0, 1)$. In the standard $K$-armed bandits problem, the $\textsc{EXP3.P}$ algorithm guarantees that the regret is bounded above by $C\sqrt{KT\log(K/\delta)}$ with probability at least $1-\delta$ for some universal constant $C$. This high probability bound follows immediately from bounding the variance of the weight estimators. Of course, we cannot blindly apply this bound in our multi-unit setting, as there are an exponentially large number of possible bid vectors. Fortunately, we may apply the same utility decoupling trick as in the analyses of Algorithms~\ref{alg: Decoupled Exponential Weights - Path Kernels} and ~\ref{alg: OMD}. That is, we upper bound the variance of the reward estimate of a bid vector by the sum of the variances of each of its constituent bids, yielding a $1-\delta$ high probability bound of $CM^{\frac{3}{2}}\sqrt{|\mathcal{B}|T\log(|\mathcal{B}|/\delta)}$ in the bandit setting.

\subsection{Experiments with Larger $N$}

In this section, we empirically show the faster convergence of and superior welfare and revenue of more competitive PAB markets. More specifically, we evaluate the impact of competition by running the same revenue and welfare over time experiments for both the PAB and uniform price auctions with varying values of $N$. In Figure \ref{fig:comp}, we compare the distribution of welfare and revenue over time showing the 10th, 25th, 50th, 75th, and 90th percentiles for the bandit setting for a varying number of market participants $N \in \{2, 6, 12, 48\}$ with $T = 10^4$. Compared with the previous revenue and welfare plots with $N = 3$ bidders (Figure~\ref{fig:rev_wel_over_time}), we see that as $N$ grows larger, the welfare and revenue more smoothly, and with lower variance, increase towards their equilibrium values. In addition to the increased competition reducing the incentive for agents to strategically shade their bids, the final revenue is higher as expected.

\begin{figure}
     \if 0
     \centering
     \begin{subfigure}[b]{0.3\textwidth}
         \includegraphics[scale=0.52, trim={0 0 0 22},clip]{figs/Figure14TopLeft.png}
     \end{subfigure}
     \hfill
     \begin{subfigure}[b]{0.51\textwidth}
         \includegraphics[scale=0.52, trim={0 0 0 22},clip]{figs/Figure14TopRight.png}
     \end{subfigure}
     \begin{subfigure}[b]{0.3\textwidth}
         \includegraphics[scale=0.52, trim={0 0 0 22},clip]{figs/Figure14BottomLeft.png}
     \end{subfigure}
     \hfill
     \begin{subfigure}[b]{0.51\textwidth}
         \includegraphics[scale=0.52, trim={0 0 0 22},clip]{figs/Figure14BottomRight.png}
     \end{subfigure}
     \fi

     \centering
     \begin{subfigure}[b]{0.3\textwidth}
         \includegraphics[scale=0.52, trim={0 0 0 22},clip]{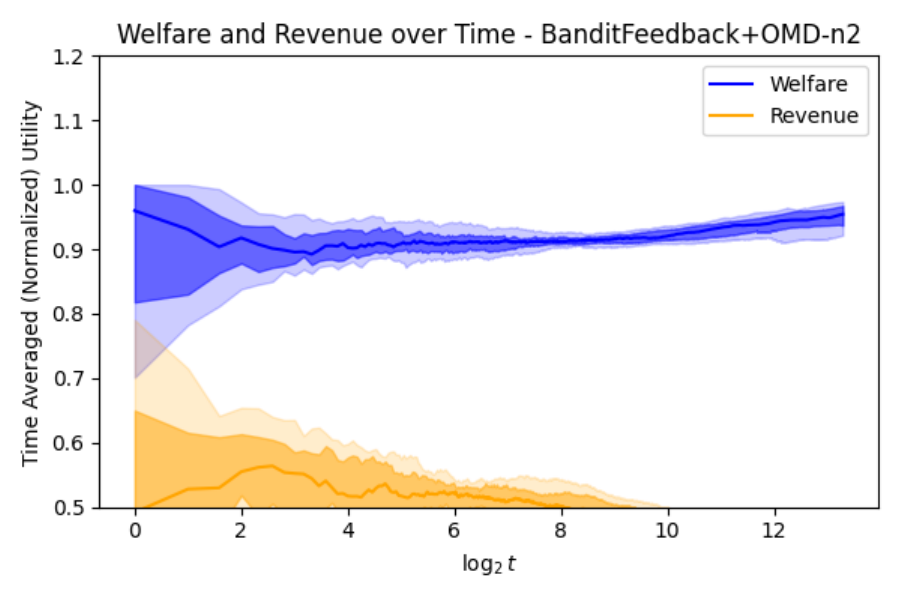}
     \end{subfigure}
     \hfill
     \begin{subfigure}[b]{0.51\textwidth}
         \includegraphics[scale=0.52, trim={0 0 0 22},clip]{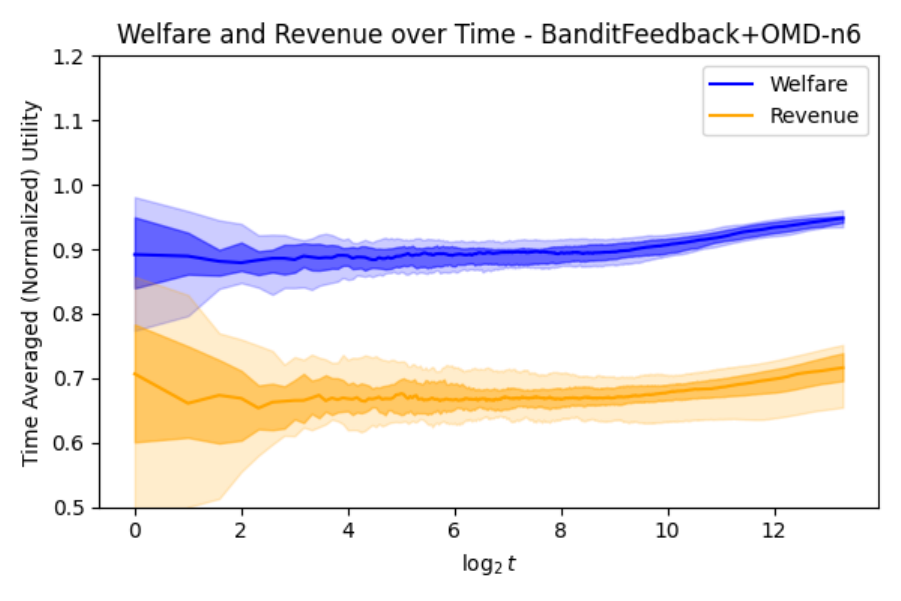}
     \end{subfigure}
     \begin{subfigure}[b]{0.3\textwidth}
         \includegraphics[scale=0.52, trim={0 0 0 22},clip]{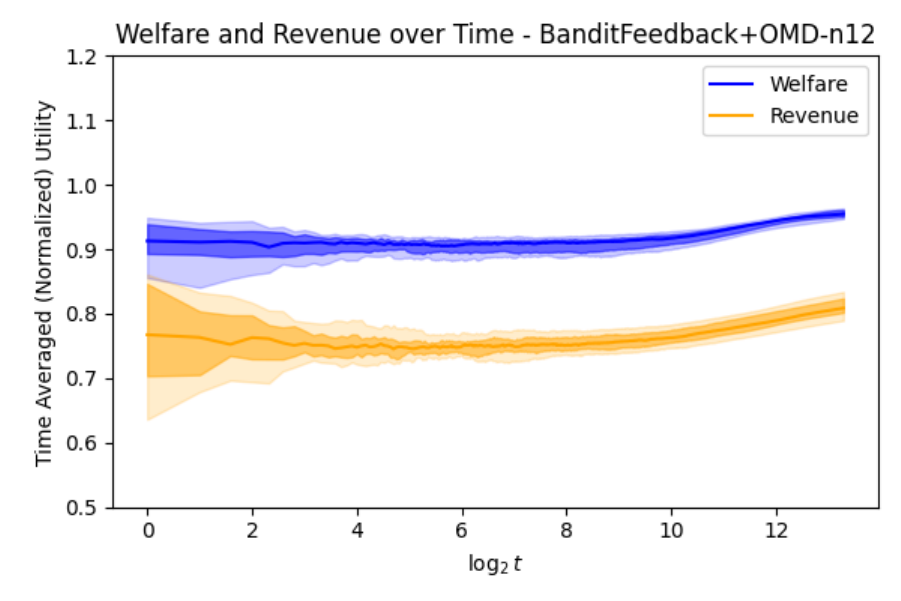}
     \end{subfigure}
     \hfill
     \begin{subfigure}[b]{0.51\textwidth}
         \includegraphics[scale=0.52, trim={0 0 0 22},clip]{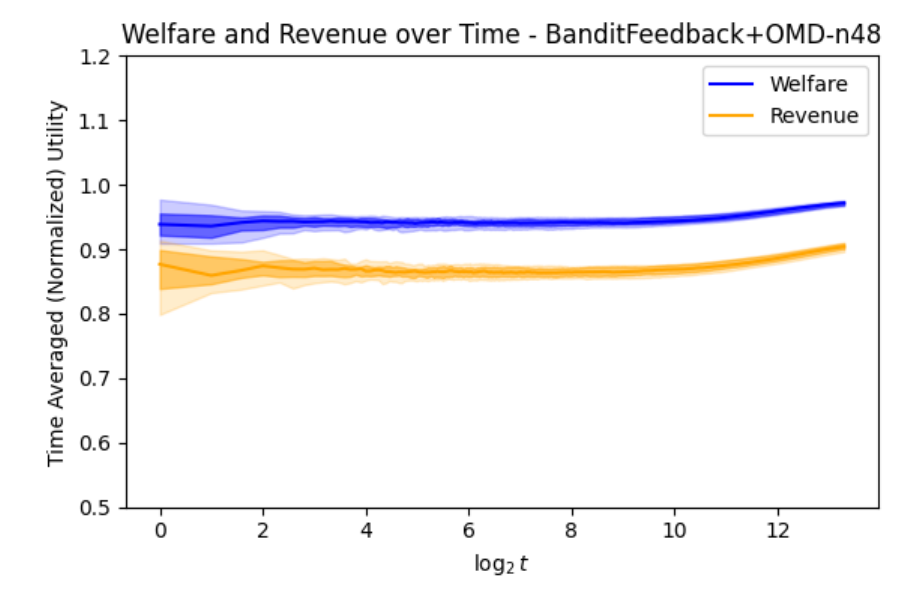}
     \end{subfigure}

    \caption{{Effect of competition: For $M = 5$, $|\mathcal{B}| = 20$, $N = 3$, $T = 10^5$, we show the evolution of the welfare and revenue over time---normalized by the sum of the largest $M$ valuations. We use decoupled exponential weights with bandit feedback (Algorithm~\ref{alg: Decoupled Exponential Weights - Path Kernels}) with $N=2$ (top left), $N=6$ (top right), $N=12$ (bottom left), and $N=48$ (bottom right).} \label{fig:comp}}.
\end{figure}

\subsection{Concluding Remarks}

In this work, we have shown that pay-as-bid (PAB) auctions have several advantages over its more commonly used uniform price counterpart. These advantages include simpler Nash bidding structure requiring only uniform bids (see Lemma~\ref{lem: PNE uniform bidding} and \ref{lem: near-uniform optimal bidding}) where this structure empirically extends to the dynamic setting. Additionally, we show that our utility decoupling insight enables online learning algorithms for PAB that are computationally and regret dominant compared to learning in uniform price auctions. Lastly, we showed that regardless of the parameterizations of $M, N, |\mathcal{B}|$, the PAB auction routinely out-performed uniform price in terms of revenue whilst remaining competitive in terms of welfare (see Table~\ref{table: learning dynamics full info} and ~\ref{table: learning dynamics full uniform}). This superior revenue and welfare trade-off can be seen more closely in Figure~\ref{fig: welfare_revenue_comparison_box_plot_intro}.

\begin{figure}
     \if 0
     \centering
     \begin{subfigure}[b]{0.3\textwidth}
         \includegraphics[scale=0.66, trim={0 0 0 0},clip]{figs/Figure15Left.png}
     \end{subfigure}
     \hfill
     \begin{subfigure}[b]{0.51\textwidth}
         \includegraphics[scale=0.66, trim={0 0 0 0},clip]{figs/Figure15Right.png}
     \end{subfigure}
     \fi
     \centering
     \begin{subfigure}[b]{0.3\textwidth}
         \includegraphics[scale=0.66, trim={0 0 0 0},clip]{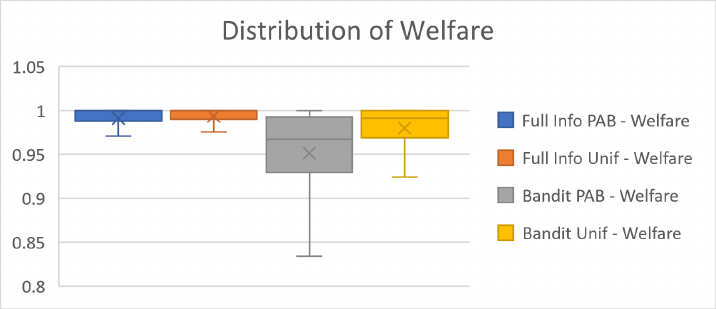}
     \end{subfigure}
     \hfill
     \begin{subfigure}[b]{0.51\textwidth}
         \includegraphics[scale=0.66, trim={0 0 0 0},clip]{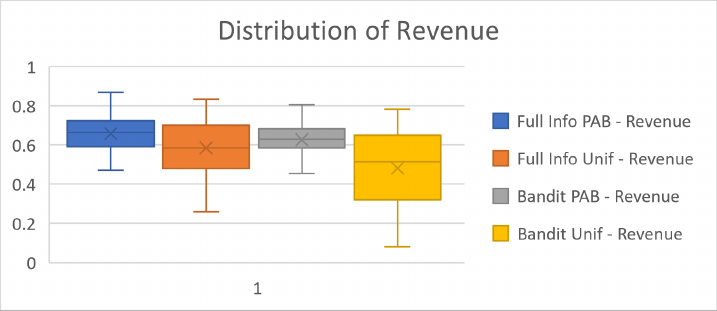}
     \end{subfigure}
    \caption{Welfare (left) and revenue (right) comparisons between the PAB and Uniform Price auctions, under both full information and bandit feedback. We use parameterization $N=3, M = 5, |\mathcal{B}| = 20, T = 10^5$ run over 200 trials with valuations drawn and sorted from a Unif(0, 1) distribution}\label{fig: welfare_revenue_comparison_box_plot_intro}
\end{figure}

Despite these computational and economic advantages of PAB over uniform price, we must still be careful in practice. For example, as PAB achieves higher revenue for the auctioneer, this conversely implies lower consumer surplus. Thus, switching from a uniform price auction to PAB may drive away participants from the platform and join a competing platform. While this is a non-issue for markets where the auctioneer holds a monopoly over a product---e.g., government issued pollution licenses or treasury bills--- this makes switching impractical in certain settings, such as electricity markets. As such, we leave the analysis of learning dynamics in such markets operating under uniform price or markets with multiple platforms for important future work.
\end{document}